\newcommand{\redc}[1]{{\color{black} #1}}
\newcommand{\rdc}[1]{{\color{red} #1}}
\begin{document}
\fancyhead{}
\title[Efficient and Effective Algorithms for Revenue Maximization in Social Advertising]{Efficient and Effective Algorithms for Revenue \\Maximization in Social Advertising}

%\author{
%	\alignauthor Kai Han$^\dagger$, Benwei Wu$^\dagger$, Jing Tang$^\ddagger$, Shuang Cui$^\dagger$, Laks V. S. Lakshmanan$^\ast$\\
%	\affaddr{$^\dagger$School of Computer Science and Technology, University of Science and Technology of China} \\
%	\affaddr{$^\ddagger$Department of Industrial Systems Engineering and Management, National University of Singapore}\\
%    \affaddr{$^\ast$Department of Computer Science, University of British Columbia}\\
%	\email{hankai@ustc.edu.cn, wubenwei@mail.ustc.edu.cn, isejtang@nus.edu.sg, lakers@mail.ustc.edu.cn, laks@cs.ubc.ca}
%}

%\title{A Relational Model of Data for Large Shared Data Banks}
%
%%%
%%% The "author" command and its associated commands are used to define the authors and their affiliations.
\author{Kai Han}
\affiliation{%
  \institution{School of Computer Science and Technology / SuZhou Research Institute, University of Science and Technology of China}
}
\email{hankai@ustc.edu.cn}

\author{Benwei Wu}
\affiliation{%
  \institution{School of Computer Science and Technology, University of Science and Technology of China}
}
\email{wubenwei@mail.ustc.edu.cn}

\author{Jing Tang}
\affiliation{%
  \institution{School of Computing,\\ National University of Singapore}
}
\email{isejtang@nus.edu.sg}

\author{Shuang Cui}
\affiliation{%
  \institution{School of Computer Science and Technology, University of Science and Technology of China}
}
\email{lakers@mail.ustc.edu.cn}

\author{Cigdem Aslay}
\affiliation{%
\institution{Department of Computer Science, Aarhus University}
}
\email{cigdem@cs.au.dk}

\author{Laks V. S. Lakshmanan}
\affiliation{%
  \institution{Department of Computer Science, University of British Columbia}
}
\email{laks@cs.ubc.ca}

%\author{Donald Fauntleroy Duck}
%\affiliation{%
%  \institution{Scientific Writing Academy}
%  \city{Duckburg}
%  \country{Calisota}
%}
%\affiliation{%
%  \institution{Donald's Second Affiliation}
%  \city{City}
%  \country{country}
%}
%\email{donald@swa.edu}

%%
%% The abstract is a short summary of the work to be presented in the
%% article.
\begin{abstract}
We consider the revenue maximization problem in social advertising, where a social network platform owner needs to select seed users for a group of advertisers, each with a payment budget, such that the total expected revenue that the owner gains from the advertisers by propagating their ads in the network is maximized. Previous studies on this problem show that it is intractable and present approximation algorithms. We revisit this problem from a fresh perspective and develop novel efficient approximation algorithms, both under the setting where an exact influence oracle is assumed and under one where this assumption is relaxed. Our approximation ratios significantly improve upon the previous ones. Furthermore, we empirically show, using extensive experiments on four datasets, that our algorithms considerably outperform the existing methods on both the solution quality and computation efficiency.
\end{abstract}

\maketitle

% %%% do not modify the following VLDB block %%
% %%% VLDB block start %%%
% \begingroup\small\noindent\raggedright\textbf{PVLDB Reference Format:}\\
% \vldbauthors. \vldbtitle. PVLDB, \vldbvolume(\vldbissue): \vldbpages, \vldbyear.\\
% \href{https://doi.org/\vldbdoi}{doi:\vldbdoi}
% \endgroup
% \begingroup
% \renewcommand\thefootnote{}\footnote{\noindent
% This work is licensed under the Creative Commons BY-NC-ND 4.0 International License. Visit \url{https://creativecommons.org/licenses/by-nc-nd/4.0/} to view a copy of this license. For any use beyond those covered by this license, obtain permission by emailing \href{mailto:info@vldb.org}{info@vldb.org}. Copyright is held by the owner/author(s). Publication rights licensed to the VLDB Endowment. \\
% \raggedright Proceedings of the VLDB Endowment, Vol. \vldbvolume, No. \vldbissue\ %
% ISSN 2150-8097. \\
% \href{https://doi.org/\vldbdoi}{doi:\vldbdoi} \\
% }\addtocounter{footnote}{-1}\endgroup
% %%% VLDB block end %%%

\def\done{\hspace*{\fill} {$\square$}}
\def\header{\vspace{2.5mm} \noindent}
\def\extraspacing{\vspace{2mm}}
\def\myspacing{\vspace{2mm}}
\def\figcapup{\vspace{-2mm}}
\def\figcapdown{\vspace{-3mm}}
\def\tblcapup{\vspace{-0mm}}
\def\tblcapdown{\vspace{2mm}}
\def\tbldown{\vspace{-2mm}}
\def\tblup{\vspace{2mm}}
\def\codecapup{\vspace{-3mm}}
\def\codecapdown{\vspace{-2mm}}
\newcommand{\E}{{\mathbb E}\xspace}

\newcommand{\AG}{$\mathsf{AdaptGreedy}$\xspace}
\newcommand{\EP}{$\mathsf{EPIC}$\xspace}
\newcommand{\KM}{\mathit{KM}}
\newcommand{\KC}{\mathit{KC}}
\newcommand{\UB}{\mathit{UB}}
\newcommand{\LB}{\mathit{LB}}
\newcommand{\R}{\mathcal{R}}
\newcommand{\OPT}{\mathrm{OPT}}
\newcommand{\cpe}{\mathit{cpe}}

\newcommand{\wpi}{\widetilde{\pi}}

\newcommand{\feasibles}{\mathcal{F}}
\newcommand{\groundset}{\mathcal{U}}
\newcommand{\posreals}{\mathbb{R}^{\ge 0}}
\newcommand{\ra}{\rightarrow}

\newcommand{\spara}[1]{\vspace{0mm}\noindent\textbf{#1.}}
\newcommand{\tabincell}[2]{\begin{tabular}{@{}#1@{}}#2\end{tabular}}

\newcommand{\jing}[1]{{\color{black}Jing: #1}}
\newcommand{\tj}[1]{{\color{black} #1}}
\newcommand{\hkl}[1]{{\color{black} #1}}
\newcommand{\hk}[1]{{\color{black} #1}}
\newcommand{\hkr}[1]{{\color{black} #1}}
\newcommand{\laks}[1]{{\color{black}{#1}}}
\newcommand{\ca}[1]{{\color{black}{#1}}}

\newcommand{\grn}[1]{{\color{green}{#1}}}

\newcommand{\brn}[1]{{\color{black}{#1}}}

\newcommand{\vlt}[1]{{\color{black}{#1}}}

\newcommand{\revise}[1]{{\color{black}{#1}}}

\newcommand{\XX}{\textcolor{red}{XXX~}}

\def\correspondingauthor{\footnote{Corresponding author.}}

\newcommand{\eat}[1]{\ignorespaces}
\newcommand{\red}[1]{{\color{black}{#1}}}

\newcommand{\toblue}[1]{{\color{black}{#1}}}

\newcommand{\figdir}{figs-revision} 
\newlength{\textfloatsepsave}
\setlength{\textfloatsepsave}{\textfloatsep}

\begin{sloppy}
\section{Introduction} \label{sec:intro}

With the proliferation of Online Social Networks (OSNs) such as Facebook and Twitter, there emerge great opportunities for social network platform owners and advertisers to gain revenue through placing advertisements on OSNs. The availability of rich information in OSNs, such as user profiles, shared posts, and user behavioral features, brings a tremendous  opportunity for personalized advertising, while the interactions between OSN users make it possible for advertisers to propagate their marketing messages to a large audience in short time. Due to these advantages, the paradigm of \textit{social advertising} on OSNs has aroused great interest from both academia and industry.
%For example, it is reported that Facebook had gained \$11 billion revenue by 2017 since it first launched the advertising service in May 2005~\cite{Aslay2017}.
%For example, it is reported that since launching the advertising service in May 2005, Facebook has gained a revenue of \$39 B in 2017, which has grown to \$67 B in 2019.\footnote{\url{https://www.statista.com/statistics/544001/facebooks-advertising-revenue-worldwide-usa/}.}
\revise{For example, statistics on social advertising across companies worldwide, published in a Hootsuite blogpost\footnote{\url{https://blog.hootsuite.com/social-media-advertising-stats/}.}, include interesting trends, such as ``27\% of internet users say they find new products and brands through paid social ads''.}

Typically, social advertising is implemented by showing ``promoted posts'' in the news feed of OSN users, and these promoted posts can take various forms such as images, videos, and hyper-links \eat{directing a user} to advertisements. After a user sees a promoted post, she may react to it by performing a social action in the form of \textit{comment}, \textit{like}, or \textit{reshare}. Once a user performs a social action, it is counted as an engagement with the advertisement, and the advertiser would pay a unit amount to the OSN platform owner for the engagement. This is a typical marketing paradigm known as Cost-Per-Engagement (CPE) advertising, where advertisers only pay when users engage with the advertisements.

As the users in OSNs often influence each other based on their social affinity, the promoted posts could propagate in the network as a result of the users' social actions.\footnote{\footnotesize It is reported that the promoted posts in Tumblr are reposted more than ten thousand times on average~\cite{Aslay2017}.} Therefore, a large number of users could eventually engage with the advertisement through influence propagation, even when only a small number of ``seed users'' initially engage with the promoted posts inserted into their feed by the platform.
\eat{
Clearly, the revenue (measured by the number of influenced users) gained through this ``multi-hop'' advertising paradigm strictly outperforms that of the traditional one-hop advertising, as can be seen from \figurename~\ref{fig:hops}, where we plot the average number of influenced nodes of a random seed node under different hop constraints in the well-known Flixster and LastFM social networks~\cite{barbieri2014influence,Aslay2017},  by simulating the influence propagation for ten thousand times based on the influence probabilities learned from the TIC influence model~\cite{barbieri2012topic}. }   \eat{To understand the potential marginal revenue gained from this  ``multi-hop'' advertising over traditional one-hop advertising, we conducted an experiment on the real networks of Flixster and LastFM with influence probabilities learned from the TIC influence model~\cite{barbieri2012topic}. For a random seed node, we measured the average number of influenced users by simulating influence propagation 10,000 times, shown in  \figurename~\ref{fig:hops} for different number of hops. It can be clearly seen that after several hops, the influence, and hence social actions on promoted posts, keep propagating to a large number of users.} \toblue{Based on the above observation,} it is possible to boost the revenue of social advertising by intelligently selecting a few most ``influential'' seed users for initial endorsements of the advertisements such that the total engagements through influence propagation, and hence the revenue of social advertising, are maximized. In this paradigm, seed users should be properly incentivized by the advertiser to ensure that they are actually engaged with the advertisement.
%
%\begin{figure}[h]
%    \centering
%    \vspace{-2mm}
%    \includegraphics[width=0.485\textwidth]{}
%    \vspace{-0.8cm}
%    \caption{Influence spread vs. hop-number in social advertising}
%    \vspace{-2mm}
%    \label{fig:hops}
%\end{figure}
%
%
\revise{
In fact, the paradigm of ``incentivized multi-hop social advertising'' described above has already been implemented or experimented with in several major companies. For example, Youtube and Twitch currently pay content creators (influencers) selected as seeds, a portion of the revenue from a video ad included in the content~\cite{news1, news2, news4}. In May 2020, Instagram rolled out a new feature, IGTV, providing similar monetization opportunities for influencers: IGTV ads appear when users click on IGTV videos in their feed. In these platforms, user actions (watching/clicking) are visible to their peers and the content propagates to the peers, and with them the ads. Instagram decides which ad will appear in which content, and shares 55\% of the IGTV ad revenue with the content creator~\cite{news21}. Similar situations also appear in FaceBook Stories, a social media application where users can share photos/videos with their friends, allowing the associated advertisements to naturally propagate virally in multi-hops~\cite{news3, news5}. Due to a lot of successful stories on multi-hop advertising (a.k.a. ``viral marketing'' or ``viral advertising'')~\cite{news18}, it also has aroused significant interest from  researchers in the area of advertising~\cite{himelboim2019social,liu2020creating,book1,book2}.
}

In this work, we consider a social advertising scenario where there is a social networking platform owner (referred as the \textit{host}) and a set of $h$ advertisers who need social advertising service provided by the host. Each advertiser $i\in \{1,\dotsc,h\}$ needs to propagate an advertisement in the social network and has a \textit{budget} $B_i$ to pay (i) the host for total engagements with their ad and (ii) the seed users for  incentivizing them. As the products associated with different ads could be competitive, we assume that a seed user can endorse at most one ad. In such a scenario, the host is faced with a \textit{revenue maximization} problem, i.e., how to select the seed users for each advertiser under the constraints described above, such that the revenue gained by the host is maximized.

\vspace{1mm}
\spara{Previous Work and Limitations} We discuss the related work on computational and social advertising and other related topics in Section~\ref{sec:relatedwork} and provide here a brief comparison with the work that is most related to ours.
\eat{A set function $f: 2^{\groundset} \ra \posreals$ is monotone, if $f(S) \le f(T)$, for $S\subset T\subseteq \groundset$; it is submodular, if $f(T \cup \{x\}) -f(T) \le f(S \cup \{x\}) -f(S)$, for $S \subset T\subseteq \groundset$ and $x\in \groundset \setminus T$.}
%\note[cigdem]{I'm adding monotone and submodular definition as a footnote for this paragraph}
Aslay et al.~\cite{Aslay2017} studied revenue maximization for social advertising, and to our our knowledge, theirs is the only work that studies the problem within the ``\emph{incentivized}" social advertising framework, where seed users are paid monetary incentives. They study the problem under the \textit{Topic-aware Independent Cascade (TIC)} model~\cite{barbieri2012topic} as we do in our work, allowing ad-specific IC propagation parameters on each edge. They show that the revenue maximization problem in incentivized advertising corresponds to monotone\ca{\footnote{\footnotesize $f(S) \le f(T)$ whenever $S\subseteq T \subseteq  \groundset$.}} submodular\ca{\footnote{\footnotesize $f(T \cup \{x\}) -f(T) \le f(S \cup \{x\}) -f(S)$ whenever $S \subseteq T \subseteq \groundset$ for $x \in \groundset \setminus T$.}} function maximization subject to a partition matroid constraint on the ads-to-seeds allocation, and submodular knapsack constraints on advertisers’ budgets, generalizing the special case of the problem with a single submodular knapsack constraint~\cite{Iyer2013,iyer2015submodular}. \ca{As in \cite{Iyer2013,iyer2015submodular}, they propose cost-agnostic and a cost-sensitive greedy algorithms with provable approximation guarantees.}

%instance of submodular function maximization  subject to a partition matroid and multiple submodular knapsack constraints, generalizing the \textit{submodular maximization under a single submodular knapsack constraint} studied by Iyer et al.~\cite{Iyer2013}.

%to mimic the influence propagation in social networks with multiple ads, so the \textit{influence spread} (i.e., the expected number of influenced nodes) can be cast as the value of a submodular function on the set of selected seed users. They further show that the revenue maximization problem under the TIC model is essentially a \textit{submodular maximization problem with a partition matroid and multiple submodular knapsack constraints}, which generalizes the problem of \textit{submodular maximization under a single submodular knapsack constraint} studied by Iyer et al.~\cite{Iyer2013}. They also propose some greedy algorithms with provable performance guarantees to address the revenue maximization problem.

%\note[cigdem]{do we need the definition of submodular and monotone functions in the above paragraph? it looks out of place and takes too much space.}

\eat{
The algorithms in~\cite{Aslay2017}, however, leave much room for improvements. First, without considering the implementation issues, the theoretical performance guarantees of the algorithms proposed in~\cite{Aslay2017} are rather weak. }
%In fact,
The approximation ratios of the above algorithms depend on the input social network instance and could be arbitrarily small, which can hurt the quality of the approximation achieved. Clearly, an approximation ratio that does not depend on the network instance, if possible, is more desirable.
\eat{This impedes their algorithms from achieving theoretically accurate results even if their performance guarantees can be perfectly implemented. \redc{Second, it is unclear how their algorithms can %provide a guarantee on
ensure budget-feasibility in practice.}}
Second, the manner in which budget feasibility is ensured in their algorithms appeals to upper bounds on the expected spread, since it is $\#\mathrm{P}$-hard to compute the expected spread exactly. This has the consequence of making their algorithm ``conservative", in that the seed allocation provided by the algorithms may end up \textit{under-utilizing} the budget. Third, the experimental results in~\cite{Aslay2017} reveal that the computational overheads of their algorithms can be significant. %are rather high even without achieving a performance guarantee.
%\note[Laks]{Budget feasibility is fixed. So I have removed that criticism and added a new one, anticipating that in *this* paper, the budget is used to the fullest possible, w.h.p. This needs to be verified, though. Basically, need to revisit the above para after the whole paper is stable.}
More detailed discussions on the results in~\cite{Aslay2017} can be found in Section~\ref{sec:prelim-existing}.

\vspace{1mm}
\spara{Contributions} Motivated by the limitations of the existing studies, we propose new approximation algorithms for the revenue maximization problem in social advertising that provide significantly improved approximation ratios, which are \emph{independent of the network instance}. \eat{Moreover, our algorithms can be implemented efficiently in practice while still achieving rigorous performance guarantees.} Furthermore, our algorithms are more efficient and achieve better revenue thanks to a combination of improved approximation guarantees and a \ca{bicriteria approximation} strategy for better utilizing given advertiser budgets. \laks{We further elaborate on the need for bicriteria approximation and how it can be managed in practice \ca{in Sections~\ref{sec:prelim-existing} and ~\ref{sec:onebatchsampling}.}}

%\note[cigdem]{I think we need to elaborate a bit on the bicriteria approximation in the introduction.}

%\vlt{Please see the first paragraph and my response in Section 4.3. -Kai}

\hkl{
{\sl More specifically, our major contributions include the following.} First, we propose (Section~\ref{sec:solvingkmedian}) approximation algorithms with provable performance guarantees under the assumption that there is an \textit{influence spread oracle}, which %means that
returns the exact influence spread of any set of seed nodes.
%can be evaluated exactly.
Our algorithms are based on several novel methods combining a greedy node-selection strategy and binary search, which fully exploit the special structure of the revenue maximization problem. The approximation ratio $~\lambda$ of our algorithms is \textit{independent} of the input social network and, depending on the number $h$ of advertisers, is characterized as follows: %(i) for $h=1$, $~\lambda = {1}/{3}$; (ii) for $h\in \{2,3\}$, $~\lambda = \frac{1}{2(h+1)(1+\tau)}$; (iii) for $h \ge 4$, $~\lambda = \frac{1}{(h+6)(1+\tau)}$, where $\tau$ is any number in $(0,1)$, which reflects the trade-off between accuracy and efficiency of our revenue maximizing algorithm.
%by the following equations, where $h$ is the number of advertisers:
%$h$ is the number of advertisers and
%\note[cigdem]{shouldn't we briefly define $\tau$ here?}
%done
%\eat{
\begin{equation}
\lambda=
\begin{cases}
{1}/{3}, &\mathsf{if}~h= 1; \\
\frac{1}{2(h+1)(1+\tau)},&\mathsf{if}~h\in \{2,3\}; \\
\frac{1}{(h+6)(1+\tau)},&\mathsf{if}~h\geq 4.
\end{cases}
\end{equation}
}
\noindent Here $\tau$ is any number in $(0,1)$, which reflects the trade-off between accuracy and efficiency of our revenue maximizing algorithm. \revise{Compared to the network-dependent approximation ratios proposed in \cite{Aslay2017} which could be arbitrarily bad, our approximation ratio $\lambda$ is essentially a \textit{constant} for a given $h$. We believe this  improvement is highly-nontrivial and theoretically interesting.
}
%}

Second, we extend our algorithms (Section~\ref{sec:rmwithoutoracle}) to the practical case where there is no exact influence spread oracle\footnote{Computing the exact influence spread is $\#\mathrm{P}$-hard~\cite{ChenWW2010}.}, by using \laks{novel adaptation of} the notion of Reverse-Reachable Sets proposed by Borgs et al.~\cite{Borgs2014}. We prove that our algorithms can achieve a $\lambda-\epsilon$ approximation ratio under the relaxed budget constraint of $(1+\varrho)B_i$ for each advertiser $i$ with high probability, where $\lambda$ is the approximation ratio shown above, $\epsilon$ is any number in $(0,\lambda)$, and $\varrho$ is any number in $(0,1)$, \laks{which controls how much the budget is overshot. We discuss how budget overshoot can be managed in practice in Section~\ref{sec:limitations}.}

Third, we evaluate the efficiency and effectiveness of our algorithms with extensive experiments on 4 social networks containing up to $\sim$69M edges (Sec.~\ref{sec:pe}). The experimental results show that our algorithms significantly outperform the algorithms in~\cite{Aslay2017} on both the processing time and the achieved revenue under TIC model. %\hk{even under some cases favorable to~\cite{Ceccarello17} where their algorithms do not have guaranteed performance ratio.}

We provide the necessary preliminaries in the next section and formally define the problem studied. Section~\ref{sec:relatedwork} discusses related work. Section~\ref{sec:concl} summarizes the paper and discusses interesting directions for future work. %\red{Additional proofs of some technical results and experimental results, omitted for lack of space, can be found in \cite{RMA_report}.} %Additional proofs of some technical results, omitted for lack of space, can be found in \cite{RMA_report}.

\section{Preliminaries} \label{sec:prelim}

\subsection{Problem Statement} \label{sec:prelim-IM}

Following the social advertising model in \cite{Aslay2017}, we assume that there exist a set of $h$ \textit{advertisers} and a \textit{host}, which is the owner of a social networking platform. The host owns a social network represented as a directed graph $G=(V,E)$, where $V$ and $E$ denote the sets  of nodes (i.e., users) and edges in $G$ respectively, with $|V|=n$ and $|E|=m$. Each advertiser $i$ provides the host with an ad $i$, and the host is responsible to select a set of \textit{seed users} $S_i\subseteq V$ to endorse ad $i$. It is assumed that each node $u\in V$ has a cost $c_i(u)$ to be ``activated'' to endorse ad $i$; an influence propagation process will be triggered to activate more nodes in $V$ after the seed users in $S_i$ are activated. Moreover, each activated node in the influence propagation process would bring revenue to the host as it  engages with ad $i$. After the influence propagation process ends, advertiser $i$ should pay an amount of money for:
\begin{enumerate}
\item The incentive cost of activating the users in $S_i$, i.e., $c_i(S_i)=\sum_{u\in S_i}c_i(u)$; this amount is paid to the seed users in $S_i$.
\item A cost-per-engagement amount $\mathit{cpe}(i)$ for each engagement with ad $i$ during the influence propagation process as described above; this amount is paid to the host for its service. % for the social advertising service.
\end{enumerate}

% \vspace{-1mm}
\revise{
\spara{Discussion} In our problem, we assume that the value of $\cpe(i)$ is agreed upon between advertiser $i$ and the host for each $i\in [h]$. This could happen in the scenarios, e.g., the host posts $\cpe(i)$ to advertiser $i$ as a ``take-it-or-leave-it'' price according to prior marketing studies on the advertised product, and advertiser $i$ clearly would only accept this price if $\cpe(i)$ is less than her/his ``value-per-engagement'', resulting in no negative utility of any part. We note that some excellent studies~\cite{grubenmann2020tsa} consider a scenario where $\cpe(i)$ is unknown and is  determined by truthful auction mechanisms. The auction problem considered in these studies is orthogonal to ours.

We adopt a general model in which seed node costs can be any positive number obtained by any existing pricing strategy for social networks. For example, a simple strategy prices nodes based on their number of  followers.\footnote{\revise{Klear’s survey~\cite{emarketer} shows that brands pay, on average, 114 dollars for each video post on Instagram to nano-influencers ($500{\sim} 5K$ followers) and 775 dollars per video to more powerful users ($30K{\sim} 50K$ followers).}} As another example, a recent study~\cite{zhu2020pricing} proposed another pricing strategy based on the expected influence gains of candidate seeds.}

\spara{Influence Propagation Model} We adopt the \textit{Topic-aware Independent Cascade} (TIC) model proposed  in~\cite{barbieri2012topic} to characterize the influence propagation process for each ad $i$, as described below.
%many clicks can be got in social advertising. %The TIC model is essentially the same as the well-known Independent Cascade (IC) model and defines an influence propagation as follows.
%Under the TIC model, an influence propagation is defined as follows.
At first, the set of ``seed nodes'' in $S_i$ are activated at time 0. Afterwards, each node $u$ newly activated at time $t-1$ has a single chance to activate each of its inactive out-neighbors $v$ at time $t$, succeeding with probability $p_{(u,v)}^i$. \eat{The node $u$ then stays active but activates no more neighboring nodes at a time after $t$.} The expected number of total activated nodes when the influence propagation ends is denoted by $\sigma_i(S_i)$ and is called the (expected) spread of $S_i$.

The activation probability $p_{(u,v)}^i$ associated with each edge $(u,v)\in E$ under the TIC model is defined as follows. Assume that there exist $L$ latent \textit{topics} for ads and users' interests, and there is a hidden random variable $Z$ ranging over the $L$ topics.  The TIC model then maps ad $i$ to a distribution $\phi_i(\cdot)$ over the $L$ latent topics with $\phi_i(z)=\Pr [Z=z\mid i]$ and $\sum_{z=1}^L \phi_i(z)=1$. %Suppose that
The influence propagation in social advertising can be topic-dependent, i.e., user $u$'s influence on user $v$ may depend on the topic of the ad that is being propagated.
%is topic-dependent,
In the TIC model, the probability that $u$ can activate $v$ for ad $i$ (i.e., $p_{(u,v)}^i$) is  defined as $p_{(u,v)}^i=\sum_{z=1}^L \phi_i(z)\cdot \hat{p}_{(u,v)}^z$, where $\hat{p}_{(u,v)}^z$ is the probability that $u$ can activate $v$ under latent topic $z$.

% \vspace{-2mm}

\spara{The Revenue Maximization (RM) Problem}
\revise{
Following the social advertising model, the host can gain an expected revenue of $\pi_i(S_i)=\mathit{cpe}(i)\cdot \sigma_i(S_i)$ from advertiser $i$, and the \textit{total expected revenue} of the host is $\sum_{i\in [h]}\pi_i(S_i)$. The Revenue Maximization (RM) Problem aims to maximize this total expected revenue under the following constraints: (1) each advertiser $i$ has a budget $B_i$ for its total social ad spend, i.e., the total amount paid to the host and to the seed users in $S_i$; (2) each user in $V$ can endorse at most one ad within a certain time window.\footnote{\revise{Note that limiting the ads endorsed by a seed can increase the credibility for followers. The same constraint has been widely adopted to avoid undesirable situations, e.g., the same celebrity endorsing Nike and Adidas at the same time~\cite{grubenmann2020tsa,Aslay2017,lu2013bang,Chalermsook2015}.}} Formally, the RM problem is defined as follows.}

%Following the social advertising model, the host can gain an expected revenue of $\pi_i(S_i)=\mathit{cpe}(i)\cdot \sigma_i(S_i)$ from advertiser $i$, and the \textit{total expected revenue} of the host is $\sum_{i\in [h]}\pi_i(S_i)$. The Revenue Maximization (RM) Problem aims to maximize this total expected revenue under the following constraints: (1) each advertiser $i$ has a budget $B_i$ for its total social ad spend, i.e., the total amount paid to the host and to the seed users in $S_i$; (2) each user in $V$ can endorse at most one ad within a certain time window. Formally, the RM problem is defined as follows.

\vspace{-1ex}
\begin{definition} \label{def:k-median}
The \textit{Revenue Maximization} (RM) problem for social advertising aims to identify an optimal solution $\vec{S} = (S_1,\dotsc, S_h)$ to the following optimization problem:
\vspace{-1ex}
\begin{align*}
        {\mathbf{Maximize}}\quad&\sum\nolimits_{i\in [h]}\pi_i(S_i) \nonumber\\
        %\qquad\qquad\qquad\qquad \mathbf{[KMD]}\nonumber\\
        \mathbf{s.t.}\quad&\pi_i(S_i)+c_i(S_i)\leq B_i,~~~\forall i\in [h],\\
        \quad&S_i\cap S_j=\emptyset,~~~i\neq j,~~\forall i,j\in [h].
\end{align*}
%where $O_i:i\in [h]$ denotes the seed users selected to endorse
\label{def:rmproblem}
\vspace{-5ex}
\end{definition}

It is well-known that the influence spread function $\sigma_i(\cdot)$ is monotone and submodular under the TIC model~\cite{barbieri2012topic}, so the revenue function $\pi_i(\cdot)$ for every $i\in [h]$ is also monotone and submodular. Aslay et al.~\cite{Aslay2017} have shown that the RM problem is NP-hard.
% Moreover, it is known that simply evaluating $\sigma_i(S_i)$ for any $S_i\subseteq V$ is an NP-hard problem~\cite{Borgs2014}, so the RM problem is also NP-hard.

%\footnote{\footnotesize Table~\ref{tbl:prelim-notations} lists the frequently notations used thorughout the paper.}
We now introduce some notations\eat{(see Table~\ref{tbl:prelim-notations} for a summary)}. We use $\vec{S}$ to represent an allocation, i.e., a list of sets $(S_1,\dotsc, S_h)$, and use $\vec{O}=(O_1,\dotsc, O_h)$ to represent an optimal solution to the revenue maximization problem. We abuse these notations slightly by using $\vec{S}$ to also represent the set $\{(u,i) \mid u\in S_i\wedge i\in [h]\}$ (and also abuse $\vec{O}$ similarly), as these representations are essentially equivalent. For any set $M\subseteq V\times [h]$, we define $\pi(M)=\sum_{i\in [h]}\pi_i(M_i)$ where \laks{$M_i=\{u\in V \mid  (u,i)\in M\}$,} and we define $\mathrm{OPT}=\pi(\vec{O})$. Furthermore, for any set function $f(\cdot)$, we use $f(X \mid Y)=f(X\cup Y)-f(Y)$ to denote the \textit{marginal gain} of $X$ with respect to $Y$. For example, we have $\pi_i(u\mid S_i)=\pi_i(S_i\cup\{u\})-\pi_i(S_i)$ and  $\pi((v,j)\mid \vec{S})=\pi_j(v\mid S_j)$.
\tj{Finally, we use $\zeta_i(u\mid S_i)$ to denote the \textit{marginal rate} of node $u$ upon seed set $S_i$ for advertiser $i$, defined as the ratio of the marginal gain in revenue to the marginal gain in payment, i.e.,
\begin{equation}\label{eqn:marginal-rate}
    \zeta_i(u\mid S_i)=\frac{\pi_i(u\mid S_i)}{c_i(u)+\pi_i(u\mid S_i)}.
\end{equation}}

\vspace{-6mm}
\revise{
\subsection{Existing Solutions} \label{sec:prelim-existing}
To the best of our knowledge, only Aslay et al.~\cite{Aslay2017} have addressed the revenue maximization problem in this framework. They show that the RM problem is essentially a \textit{submodular maximization problem with a partition matroid and multiple submodular knapsack constraints}, and propose two approximation algorithms\textemdash Cost-Agnostic Greedy (i.e., CA-Greedy) and Cost-Sensitive Greedy (i.e., CS-Greedy). %, as described below.
Both algorithms iteratively select seed nodes under the budget constraint. In each iteration, CA-Greedy (resp. CS-Greedy) greedily selects an element $(u,i)$ such that the marginal gain $\pi_i(u\mid S_i)$ (resp. the marginal rate $\zeta_i(u\mid S_i)$) is maximized.

Aslay et al.~\cite{Aslay2017} prove that the CS-Greedy algorithm has an approximation ratio of
\begin{equation}
1-\frac{R\cdot \rho_{max}}{R\cdot \rho_{max}+(1-\max_{i\in [h]}\kappa_i)\rho_{min}}, \label{eqn:boundcsgreedy}
\end{equation}
and that the approximation ratio of CA-Greedy follows from the result of Conforti et al.~\cite{conforti1984submodular} for submodular maximization subject to an independence system
\begin{equation}
\left(1-(1-{\kappa}/{R})^r\right)/\kappa. \label{eqn:boundcagreedy}
\end{equation}
The parameters $r,R,\kappa,\kappa_i,\rho_{min},\rho_{max}$ in the above ratios all depend on the input social network and the detailed definitions of them can be found in~\cite{Aslay2017}.}

%The parameters $r,R,\kappa,\kappa_i,\rho_{min},\rho_{max}$ in the above ratios all depend on the input social network and are defined as:
%\allowdisplaybreaks[4]
%\begin{align*}
%&\rho_{max}=\max \{ c_i(u)+\pi_i(u) \mid (u,i)\in V\times [h]\},\\
%&\rho_{min}=\min \{ c_i(u)+\pi_i(u) \mid (u,i)\in V\times [h]\},\\
%&\kappa=1-\min \left\{\frac{\pi_i(u\mid V\setminus \{u\})}{\pi_i(u)} \mid (u,i)\in V\times [h]\right\},\\
%&\kappa_i=1-\min \left\{\frac{c_i(u)+\pi_i(u\mid V\setminus \{u\})}{c_i(u)+\pi_i(u)} \mid u\in V\right\},\\
%&R=\max\left\{\sum\nolimits_{i\in [h]}|S_i| \mid (S_1,S_2,\dotsc,S_h)\in \feasibles \right\},\\
%&r=\min\left\{\sum\nolimits_{i\in [h]}|S_i| \mid (S_1,S_2,\dotsc,S_h)\in \feasibles \right\},
%\end{align*}
%where $\feasibles$ denotes the set of all \textit{maximal feasible solutions} to the RM problem. Any feasible solution $\vec{S}=(S_1,\dotsc,S_h)$ to RM problem is called a {maximal} feasible solution if and only if there does not exist another feasible solution $\vec{S}'=(S_1',\dotsc,S_h')$ such that $\vec{S}'\neq \vec{S}$ and $\forall i\in [h]\colon S_i\subseteq S_i'$.

\ca{Note that the approximation ratios in Eqn.~\eqref{eqn:boundcsgreedy} and~\eqref{eqn:boundcagreedy} hold under the assumption that there is an influence spread oracle which can exactly evaluate $\sigma_i(\cdot)$. Given the $\#\mathrm{P}$-hardness of computing $\sigma_i(A)$ for any given $A\subseteq V$~\cite{ChenWW2010}, Aslay et al.~\cite{Aslay2017} further propose  algorithms  TI-CARM and TI-CSRM, as practical versions of CA-Greedy and CS-Greedy, by extending TIM~\cite{TangXS2014} for influence spread estimation.}

\eat{the influence spread function $\sigma_i(\cdot)$ can be exactly evaluated, which is unrealistic as it is $\#\mathrm{P}$-hard to compute  $\sigma_i(A)$ for any given $A\subseteq V$~\cite{ChenWW2010}. Therefore, the approximation ratios in Eqn.~\eqref{eqn:boundcagreedy} and Eqn.~\eqref{eqn:boundcsgreedy} cannot be achieved in practice. To address this problem, Aslay et al.~\cite{Aslay2017} further propose two algorithms called TI-CARM and TI-CSRM that are adaptions of CA-Greedy and CS-Greedy, respectively.}

They prove that these two algorithms can return a solution $\vec{S}$ satisfying the following performance bound:
\begin{equation}
\pi(\vec{S})\geq \beta\cdot \mathrm{OPT} - \epsilon\cdot \sum\nolimits_{i\in [h]}\mathit{cpe}(i)\cdot \sigma_i(N_i), \label{eqn:boundforsampledgreedy}
\end{equation}
\laks{where $N_i\subseteq V$ is a node set that maximizes $\sigma_i(N_i)$ under the condition that $|N_i|$ equals the estimated cardinality of
\eat{an optimal solution to the RM problem when there is only one advertiser $i$.}
the maximum allocation to ad $i$ under the given  budget.}
%\note[Laks]{The above characterization of $N_i$ is not accurate. Needs to be revisited. }
The value of $\beta$ equals Eqn.~\eqref{eqn:boundcagreedy} and Eqn.~\eqref{eqn:boundcsgreedy} for TI-CARM and TI-CSRM, respectively.

% \vspace{-1mm}
\revise{
\subsubsection{Limitations of the Existing Solutions}
\label{sec:limitations}

Unfortunately, the solutions provided in~\cite{Aslay2017} suffer from the following major shortcomings:
(i) The exact value of the approximation bounds in Eqn.~\eqref{eqn:boundcsgreedy}--\eqref{eqn:boundcagreedy} cannot be computed easily\textemdash there is no obvious way to calculate them in polynomial time. This is a direct consequence of the bounds depending on the network instance and the $\#\mathrm{P}$-hardness of influence spread computation.
(ii) The theoretical approximation ratios of CA-Greedy and CS-Greedy %, which depend on the input social network,
could be arbitrarily small. To see this, consider the case $h = 1$ and a network $G=(V,E)$, with $|V|=n$ nodes. Assume that $E$ contains an edge $(u,v)$, where the out-degree of $v$ is $0$, and $p_{(u,v)}^1=1$. On this instance, the approximation ratio of CS-Greedy is at most  $\frac{c_1(v)}{R\cdot \rho_{max}+c_1(v)}$, which can be arbitrarily small as $\rho_{max}/c_1(v)$ can be arbitrarily large. CA-Greedy also has similar problems and Aslay et al.~\cite{Aslay2017} actually indicate in their paper that the worst-case approximation ratio of CA-Greedy is $1/R$, which is in the order of $\mathcal{O}(1/n)$.
\eat{
\redc{Third, it is unclear how budget-feasibility can be achieved in practice by the algorithms proposed in~\cite{Aslay2017}. Due to the space constraint, a more detailed discussion on this issue can be found in our technical report~\cite{techreport}.}
}
}
(iii) The algorithms in~\cite{Aslay2017} incur large computational overheads by their implementation. In fact, experimental results reveal that the running time and memory consumption of the TI-CARM and TI-CSRM algorithms in \cite{Aslay2017} both grow drastically when $\epsilon$ gets small. Therefore, the work in \cite{Aslay2017} has to set $\epsilon$  to a relatively large number (e.g., $0.3$) such that TI-CARM and TI-CSRM can handle a social network with 4.8M nodes using a computer equipped with 264GB memory.
(iv)
The manner in which budget feasibility is ensured by the TI-CARM and TI-CSRM appeals to upper bounds on the expected spread when using estimations from a sample. This results in their seed allocation under-utilizing the budget to a great extent for the sake of not violating budget constraints. Naturally, allowing the host to control how much the budget can be \textit{overshot}, for the sake of fully-utilizing advertisers budgets, is more desirable as this would also imply higher revenue for the host. There can be an agreement between the host and an advertiser that specifies who would pay the excess amount when the budget is overshot: this might be the advertiser as they would be receiving more engagements to their ad compared with when their budget is  under-utilized; \textit{or} this might be the host preferring to give some ``free service" to advertisers since they can fully earn $B_i$. Note that in the latter case, the host can simply control the amount of free service provided, using a parameter $\varrho$. In an extreme, the host could use $B_i/(1+\varrho)$ as the input budget to the algorithms, thus canceling out the effect of the overshoot (details in Section~\ref{sec:rmwithoutoracle}). Our contributions address all four limitations.

% \section{Solving the RM problem with an Oracle} \label{sec:solvingkmedian}
\section{Solving RM with an Oracle} \label{sec:solvingkmedian}

%\subsection{Solutions for $k$-Median}
%
%and analyze the hardness of the $k$-median and $k$-center problems under this assumption. We also propose several new algorithms to address these two clustering problems. The main idea of our algorithms is that we can reformulate the $k$-median and $k$-center problems into some submodular optimization problems, and hence

%\subsection{Hardness of the $k$-Median Problem}

In this section, we present algorithms for the revenue maximization problem under the assumption that there is an oracle to compute $\sigma_i(A)$ for any $i\in [h]$ and $A\subseteq V$. We will present algorithms without this assumption in the next section.

 \vspace{-2mm}
\subsection{Algorithms for a Single Advertiser} \label{sec:algsa}

\hkl{
We first consider the case where there is only one advertiser $i$. In this case, the RM problem defined in Definition~\ref{def:rmproblem} belongs to the class of submodular maximization problems with a \textit{single submodular knapsack constraint}, %which has been
introduced by Iyer et al.~\cite{Iyer2013}.
\ca{As in \cite{Aslay2017}, the greedy approximation guarantees provided by Iyer et al.~\cite{Iyer2013, iyer2015submodular} are instance dependent and could be arbitrarily small, and in the case of cost-sensitive approximation, the guarantee can be unbounded as they acknowledge.}
\eat{However, Iyer et al.~\cite{Iyer2013} have not provided a constant approximation ratio.}
We now show that a simple $\mathsf{Greedy}$ algorithm (Algorithm~\ref{alg:naivegreedy}) can achieve a constant $\frac{1}{3}$-approximation\eat{,as shown in Theorem~\ref{thm:greedyratioforsa}}. The %$\mathsf{Greedy}$
algorithm greedily selects a seed user with the maximum \tj{marginal rate} from the input candidate set $U$, and adds $u$ into $S_i$ if the total cost of the currently selected nodes is no more than $B_i$. It adds $u$ into $D_i$ if $u$ is the first node satisfying $c_i(S_i\cup \{u\})>B_i$ (we call such a node $u$ as a ``stopple node''). Finally, $\mathsf{Greedy}$ returns one of $S_i$ or  $D_i$, whichever has the larger revenue. %By such a node-selecting process, we can guarantee that the
}

\vspace{-1mm}
\hkl{
\begin{theorem}
When there is only one advertiser $i$, the $\mathsf{Greedy}(V,i)$ algorithm returns a solution $S^*_i\subseteq V$ to the revenue maximization problem with approximation ratio of $1/3$.
%\begin{eqnarray}
%\pi_i(S^*_i)\geq (1-{e}^{-1})\mathrm{OPT}/2
%\end{eqnarray}
\label{thm:greedyratioforsa}
\end{theorem}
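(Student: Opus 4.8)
The plan is to bound $\pi_i$ of the greedy output against $\OPT$ by tracking, at each greedy step, the marginal rate of the chosen node relative to the residual optimality gap. Let $f_i(\cdot):=\pi_i(\cdot)+c_i(\cdot)$ be the total payment; since $\pi_i$ is monotone submodular and $c_i$ is nonnegative modular, $f_i$ is monotone submodular, the budget constraint reads $f_i(S_i)\le B_i$, and the optimal allocation $O_i$ satisfies $f_i(O_i)\le B_i$ (and $B_i>0$, else $\OPT=0$ trivially). Write $S^{(0)}=\emptyset\subseteq S^{(1)}\subseteq\cdots\subseteq S^{(k)}$ for the sets maintained by $\mathsf{Greedy}(V,i)$ up to but excluding the first stopple node $u$, with $s_{j+1}$ the node added at step $j{+}1$ and $s_{k+1}=u$. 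If no stopple node ever occurs, $\mathsf{Greedy}$ returns $V\supseteq O_i$ and we are done; if $\pi_i(S^{(k)})\ge\OPT$ we are also done; so assume otherwise. Observe that for any $T$ and $v\notin T$ we have $f_i(v\mid T)=c_i(v)+\pi_i(v\mid T)$, so the marginal rate is exactly $\zeta_i(v\mid T)=\pi_i(v\mid T)/f_i(v\mid T)$, i.e., the ratio of marginal revenue to marginal payment.

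The key lemma I would establish is: for every $j\in\{0,\dots,k\}$, the marginal rate $\rho_j:=\zeta_i(s_{j+1}\mid S^{(j)})$ of the node chosen at step $j{+}1$ satisfies $\rho_j\ge(\OPT-\pi_i(S^{(j)}))/B_i$. To prove it, order $O_i\setminus S^{(j)}=\{o_1,\dots,o_t\}$, put $T_\ell:=S^{(j)}\cup\{o_1,\dots,o_{\ell-1}\}$, and telescope $\pi_i(O_i\cup S^{(j)})-\pi_i(S^{(j)})=\sum_{\ell}\pi_i(o_\ell\mid T_\ell)$. Since $x\mapsto x/(c+x)$ is nondecreasing on $x\ge0$ and $\pi_i(o_\ell\mid T_\ell)\le\pi_i(o_\ell\mid S^{(j)})$ by submodularity, each term obeys $\pi_i(o_\ell\mid T_\ell)/f_i(o_\ell\mid T_\ell)\le\zeta_i(o_\ell\mid S^{(j)})\le\rho_j$, the last step because greedy picks the largest marginal rate over $V\setminus S^{(j)}\ni o_\ell$ (the degenerate cases $c_i(o_\ell)=0$ and $\pi_i(o_\ell\mid T_\ell)=0$ are checked directly). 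Summing gives $\pi_i(O_i\cup S^{(j)})-\pi_i(S^{(j)})\le\rho_j\bigl(f_i(O_i\cup S^{(j)})-f_i(S^{(j)})\bigr)$. Finally, monotonicity of $\pi_i$ makes the left side at least $\OPT-\pi_i(S^{(j)})$, while monotonicity and submodularity of $f_i$ give $f_i(O_i\cup S^{(j)})-f_i(S^{(j)})\le f_i(O_i)\le B_i$, and the lemma follows.

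To finish, let $\delta_j:=f_i(S^{(j+1)})-f_i(S^{(j)})$, so that $\pi_i(s_{j+1}\mid S^{(j)})=\rho_j\delta_j$. Summing over the steps $j=0,\dots,k-1$ that build $S^{(k)}$ and using the lemma together with monotonicity ($\pi_i(S^{(j)})\le\pi_i(S^{(k)})$ there) and $\sum_{j=0}^{k-1}\delta_j=f_i(S^{(k)})$ yields $\pi_i(S^{(k)})\ge\frac{\OPT-\pi_i(S^{(k)})}{B_i}f_i(S^{(k)})$; applying the lemma to the stopple step gives $\pi_i(u\mid S^{(k)})\ge\frac{\OPT-\pi_i(S^{(k)})}{B_i}f_i(u\mid S^{(k)})$. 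Adding these and invoking $f_i(S^{(k)})+f_i(u\mid S^{(k)})=f_i(S^{(k)}\cup\{u\})>B_i$ (the defining property of the stopple node, and recalling $\OPT>\pi_i(S^{(k)})$) gives $\pi_i(S^{(k)})+\pi_i(u\mid S^{(k)})>\OPT-\pi_i(S^{(k)})$, i.e., $2\pi_i(S^{(k)})+\pi_i(u\mid S^{(k)})>\OPT$; since $\pi_i(u\mid S^{(k)})\le\pi_i(\{u\})$ by submodularity, $2\pi_i(S^{(k)})+\pi_i(\{u\})>\OPT$, whence $\max\{\pi_i(S^{(k)}),\pi_i(\{u\})\}\ge\frac13\bigl(2\pi_i(S^{(k)})+\pi_i(\{u\})\bigr)>\OPT/3$. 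As $\mathsf{Greedy}(V,i)$ returns whichever of $S_i=S^{(k)}$ and $D_i=\{u\}$ has larger revenue, the returned $S_i^*$ satisfies $\pi_i(S_i^*)\ge\OPT/3$.

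I expect the main obstacle to be the marginal-rate lemma — specifically the reduction from ``inserting all of $O_i$ at once'' to a sum of per-element marginal rates, which hinges on monotonicity of $x\mapsto x/(c+x)$ and on submodularity shrinking each $\pi_i(o_\ell\mid\cdot)$, plus a clean treatment of the zero-cost/zero-gain corner cases. A secondary subtlety worth flagging is that lumping $S^{(k)}$ and the stopple node together and bounding $\pi_i(S^{(k)}\cup\{u\})$ in one stroke only delivers a $1/4$ ratio; the improvement to $1/3$ requires the two-part summation above, applying monotonicity $\pi_i(S^{(j)})\le\pi_i(S^{(k)})$ only across the steps that build $S^{(k)}$ and handling the stopple step on its own.
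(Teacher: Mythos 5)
Your proof is correct and follows essentially the same route as the paper's: the heart of both arguments is the identical telescoping-over-$O_i$ plus mediant-inequality step showing that the best remaining marginal rate is at least $(\pi_i(O_i)-\pi_i(S_i))/B_i$, combined with the stopple node's total payment exceeding $B_i$ to conclude $\pi_i(O_i)\le 2\pi_i(S_i)+\pi_i(D_i)\le 3\pi_i(S_i^*)$. The only difference is bookkeeping: you apply the rate lemma at every greedy step and sum $\rho_j\delta_j$, whereas the paper applies it once at termination and instead upper-bounds $r(v_1)$ by the aggregate rate $\pi_i(S_i\cup D_i)/\big(c_i(S_i\cup D_i)+\pi_i(S_i\cup D_i)\big)$; both yield the same final inequality.
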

}
% \vspace{-3mm}

%As the $1-1/e$ approximation ratio is tight for~\cite{nemhauser1978best}
 \vspace{-2mm}
\subsection{Algorithms for Multiple Advertisers} \label{sec:algformulteadvertiser}

In this section, we provide algorithms for the RM problem when the number of advertisers is more than one (i.e., $h>1$). A straightforward approach is to apply the greedy selection rule in the CS-Greedy algorithm~\cite{Aslay2017}, i.e., selecting an element $(u,i)$ at each step with its \tj{marginal rate} being as large as possible. However, this approach could not have a ``nice'' (i.e., network independent) approximation ratio for the case of $h>1$ due to the following reasons. In fact, we cannot guarantee that the \tj{marginal rate} of $(u,i)$ is no less than any unselected element in the optimal solution $\vec{O}$, because there could exist $(v,j)\in \vec{O}$ with a larger \tj{marginal rate} but we have already selected $(v,\ell)$ (for certain $\ell\in [h]$), and hence we have to select $(u,i)$ instead of $(v,j)$.
\eat{
Due to this issue, we cannot use the method in the proof of Theorem~\ref{thm:greedyratioforsa} to derive a nice performance bound when $h>1$.
}
Due to this issue, applying the proof idea of Theorem~\ref{thm:greedyratioforsa} does not lead to non-trivial bounds when $h > 1$.
\textit{This raises a major challenge.}

%extend the $\mathsf{Greedy}$ algorithm to this case,
%
%A straightforward idea is to extend the $\mathsf{Greedy}$ algorithm to this case, i.e., selecting an element $(u,i)$ at each step, such that $u$ has not been allocated to any advertiser before and $(u,i)$ has the maximum marginal gain per unit cost. However, this approach could not provide a good approximation ratio, as there could exist $(v,j)$ in  due the following reasons. From the proof of Theorem~\ref{thm:greedyratioforsa}, it can be seen that the nice approximation ratio of $\mathsf{Greedy}$ leverages the fact that the ``marginal gain per unit cost'' of any node selected at each step is always larger than that of any unselected node in the optimal solution. This property cannot be guaranteed under the case of $h>1$ as $u$ could had been
\begin{algorithm} [t]
    % \setlength{\textfloatsep}{20pt}
    % \begin{small}
    \KwIn{Advertiser $i$, a set $U\subseteq V$ of candidate users;}%where $0<\sigma+\gamma<\alpha$}
    \KwOut{A subset of $U$ selected as the seed nodes;}
    % $U\gets U-\{v\mid v\in U \wedge c_i(v)+\pi_i(v)> B_i\};S_i\gets \emptyset;D_i\gets \emptyset$\;
    $U\gets U-\{v\mid v\in U \wedge c_i(v)+\pi_i(v)> B_i\}$\;
    $S_i\gets \emptyset;D_i\gets \emptyset$\;
    \While{$U\neq \emptyset\wedge D_i= \emptyset$}{
        $u\leftarrow \arg\max_{v\in U} \tj{\zeta_i({v}\mid S_i)}$;~~$U\gets U-\{u\}$\; \label{ln:greedyselectionrule}
        \lIf{$c_i(S_i\cup\{u\})+\pi_i(S_i\cup\{u\})\leq B_i$}{$S_i\leftarrow S_i\cup \{u\}$}
        \lElse{$D_i\leftarrow \{u\}$}
    }
    $S^*_i\leftarrow \arg\max_{X\in \{S_i,D_i\}} \pi_i(X);$~~\Return ${S}^*_i$\;
    \caption{$\mathsf{Greedy}(U,i)$}
    \label{alg:naivegreedy}
    % \end{small}
\end{algorithm}
\setlength{\textfloatsep}{7pt}
\begin{algorithm} [!t]
%    \KwIn{A set $I$ of advertisers and a set $U$ of candidate users;}%where $0<\sigma+\gamma<\alpha$}
%    \KwOut{The seed users selected for each advertiser in $I$;}
    $M\gets \{(v,j)\colon (v,j)\in V\times [h]\wedge c_j(v)+\pi_j(v)\leq B_j\}$\;
    \lForEach{$j\in [h]$}{
        $S_j\gets \emptyset;~~D_j\gets \emptyset;~~A_j\gets \emptyset;~~I\gets \emptyset$
    }
    \While{$M\neq \emptyset\wedge I\neq [h]$}{
        $(u,i)\leftarrow \arg\max_{(v,j)\in M} {\pi_j({v}\mid S_j)}$;~$M\gets M-\{(u,i)\}$\; \label{ln:greedyruleinthreshold}
        % \\
      \tj{\lIf{$\zeta_i(u \mid S_i\cup D_i)< {\gamma}/{B_i}\vee D_i\neq \emptyset$ \label{ln:startadding}}{\textbf{continue}}}
    %   \lIf{$\frac{\pi_i(u \mid S_i\cup D_i)}{c_i(u)+\pi_i(u \mid S_i\cup D_i)}< \frac{\gamma}{B_i}\vee D_i\neq \emptyset$ \label{ln:startadding}}{\textbf{continue}}
      \lIf{$u\in \bigcup_{j\in [h]}S_j\cup D_j$}{\textbf{continue}}
      \lIf{$c_i(S_i\cup\{u\})+\pi_i(S_i\cup\{u\})\leq B_i$}{$S_i\leftarrow S_i\cup \{u\}$}
      \lElse{$D_i\leftarrow \{u\}$;~$I\gets I\cup \{i\}$\label{ln:endadding}}
    }
    \If{$|I|=1$ \label{ln:startif}}{
        $i\gets$ the number in $I$;~\hkl{$A_i\gets \mathsf{Greedy}(V-\bigcup_{j\in [h]}S_j, i)$}\;\label{ln:callnaivegreedy}}
    \lForEach{$j\in [h]$}{
        $S_j'\gets \arg\max_{X\in \{S_j,D_j,A_j\} } \pi_j(X)$\label{ln:largestone}}
    $\vec{S}^*\gets \mathsf{Fill}(\vec{S}');~b\gets |I|$\; \label{ln:callfill}%~z\gets \mathsf{GetUB}(\vec{S},b)$;\\
    \Return $\vec{S}^*, b$\;
    \caption{$\mathsf{ThresholdGreedy}(\gamma)$}
    \label{alg:thresholdgreedy}
\end{algorithm}

\subsubsection{A Greedy Algorithm with a Threshold}

%According to the above discussion,
Accordingly, we design a new greedy algorithm dubbed $\mathsf{ThresholdGreedy}$ (see  Algorithm~\ref{alg:thresholdgreedy}) which abandons the greedy selection rule used in CS-Greedy, but simply selects the node with the maximum marginal gain at each step (as in CA-Greedy). However, this has the drawback that a lot of nodes with large seeding costs may be selected, quickly depleting the budgets~\footnote{\revise{Here is a toy example to illustrate the intuition. Suppose that $u,v,w$ are nodes with highest singleton revenues 91, 50 and 45, respectively, and that there are no common nodes reached by them. Let the costs of $u,v,w$ be $9$, $3$ and $2$, respectively. Then, for a budget of $100$, CA-Greedy would select $u$ and exhaust the budget for a revenue of $91$, while CS-Greedy would select $v,w$, obtaining a total revenue of $95$.}}. To address this problem, we set an additional rule that the \tj{marginal rate} of any selected node should be no less than a given threshold $\gamma$ (we will discuss in Section~\ref{sec:searchingforthreshold} how to set the value of $\gamma$). We next explain the details of $\mathsf{ThresholdGreedy}$.

$\mathsf{ThresholdGreedy}$ uses $M\subseteq V\times [h]$ to denote all candidate elements to be selected and uses $I$ to denote the set of advertisers whose budgets have been depleted by the already selected elements (i.e., elements in $\vec{S}\cup \vec{D}$). In each step, it removes an element $(u,i)$ from $M$ with the maximum marginal gain (Line~\ref{ln:greedyruleinthreshold}), and adds $u$ into $S_i$ or $D_i$ if and only if all of the three conditions are satisfied (Lines~\ref{ln:startadding}--\ref{ln:endadding}): (1) \tj{marginal rate} of $(u,i)$ is no less than $\gamma/B_i$; (2) the node $u$ has not been assigned to any advertiser yet; (3) the budget of advertiser $i$ has not been depleted by  nodes \laks{already} in $S_i\cup D_i$. This process terminates either when $M$ is empty or when $|I|=h$. \eat{Similar to that in the $\mathsf{Greedy}$ algorithm,} \ca{As in Algorithm~\ref{alg:naivegreedy}, the ``stopple node'' for each $i$ is stored in $D_i$.}

After this greedy procedure terminates, the budgets of the advertisers in $I$ must have been depleted by the nodes in $\bigcup_{i\in I} S_i\cup D_i$. If there is only one such advertiser $i$, we call the $\mathsf{Greedy}$ algorithm again to find a node set $A_i$ (Line~\ref{ln:callnaivegreedy}), and the revenue of $A_i$ can help to derive the approximation ratio of $\mathsf{ThresholdGreedy}$. Next, the $\mathsf{ThresholdGreedy}$ algorithm sets $S_j'$ to be the one in $\{S_j, A_j, D_j\}$ with the largest revenue for all $j\in [h]$ (Line~\ref{ln:largestone}). Finally, the function $\mathsf{Fill}$ is called (Line~\ref{ln:callfill}) to select more seed nodes for the advertisers whose budgets have not been depleted by the elements in $\vec{S}'$, and function $\mathsf{Fill}$ greedily selects nodes with the maximum \tj{marginal rate} until the budgets of all advertisers are depleted. After that, the $\mathsf{ThresholdGreedy}$ algorithm returns the final solution $\vec{S}^*$.

\begin{algorithm} [!t]
%   \KwIn{$\vec{A}=(A_1,\cdots, A_h)$, where $A_i$ denotes the set of candidate nodes that could be inserted to bottle $i$. $\vec{S}$ denotes the already selected nodes. $\vec{L}=\{L_1,\cdots,L_h\}$ where $L_i$ denotes the last selected node for bottle $i$ if bottle $i$ is already full. $\gamma$ is a threshold for the ``density'' of the selected node.}%where $0<\sigma+\gamma<\alpha$}
%   \KwOut{}
        %$a_{max}\leftarrow -1$\\
    $M\gets \{(v,j)\colon (v,j)\in V\times [h]\wedge c_j(v)+\pi_j(v)\leq B_j\}$\;
    \While{$M\neq \emptyset$}{
        \tj{$(u,i)\leftarrow \arg\max_{(v,j)\in M} \zeta_j(v \mid S_j)$\;}
        % $(u,i)\leftarrow \arg\max_{(v,j)\in M} \frac{\pi_j(v \mid S_j)}{c_j(v)+\pi_j(v \mid S_j)}$\;
        $M\gets M-\{(u,i)\}$\;
      \If{$c_i(S_i\cup\{u\})+\pi_i(S_i\cup\{u\})\leq B_i\wedge u\notin \bigcup_{j\in [h]}S_j$}{$S_i\leftarrow S_i\cup \{u\}$\;}
    }
    \Return $\vec{S}=(S_1,\dotsc,S_h)$\;
    \caption{$\mathsf{Fill}(\vec{S})$}
    \label{alg:fill}
\end{algorithm}

The performance bound of $\mathsf{ThresholdGreedy}$ is shown in Theorem~\ref{thm:boundofthresholdgreedy}. Roughly speaking, the main idea in the proof is to classify the elements in the optimal solution into several categories according to their \tj{marginal rates} with respect to the elements selected by $\mathsf{ThresholdGreedy}$; we then bound the ``revenue loss'' caused by missing the elements in each category by $\gamma$ or the revenue of the solution returned by $\mathsf{ThresholdGreedy}$. \eat{\red{For
lack of space, we omit the proof and refer the reader to \cite{RMA_report}~for details.}}
%the sake of flow, we defer the proof of
%Theorem~\ref{thm:boundofthresholdgreedy} to the appendix~\cite{RMA_report}.

%Theorem~\ref{eqn:boundofthresholdgreedy1} is to find quantitative relationships between the revenue of nodes selected by $\mathsf{ThresholdGreedy}$ and that of un proof of Theorem~\ref{eqn:boundofthresholdgreedy1} is involved

%and the revenue of $S_i,S_i',D_i$ can be used together to derive an upper-bound of the revenue of $O_i$, which is further used to derive the approximation ratio of $\mathsf{Threshold}$

%\begin{theorem}
%Let $\alpha=\frac{2e}{e-1}$. Suppose that the algorithm $\mathsf{ThresholdGreedy}(\gamma)$ returns $(\vec{S}^*, b)$. Then we have:
%\begin{eqnarray}
%&&\pi(\vec{S}^*)\geq b\gamma/2;~~~~~~\mathsf{if}~b\geq 2; \nonumber\\
%&&\pi(\vec{S}^*)\geq \max\left\{ \frac{1}{4+\alpha}(\mathrm{OPT}-(h-1)\gamma), \frac{\gamma}{2}\right\};~~~~\mathsf{if}~b=1; \nonumber\\
%&&{\pi}(\vec{S}^*)\geq \frac{1}{2}(\mathrm{OPT}-h\gamma)~~~~\mathsf{if}~b=0; \nonumber
%\end{eqnarray}
%\label{eqn:boundofthresholdgreedy}
%\end{theorem}

\vspace{-1ex}
\hkl{
\begin{theorem}
Suppose that Algorithm $\mathsf{ThresholdGreedy}(\gamma)$ returns $(\vec{S}^*, b)$. Then we have:
\vspace{-1ex}
\begin{equation}
\pi(\vec{S}^*)\geq
\begin{cases}
b\cdot\gamma/2, &\mathsf{if}~b\geq 2; \nonumber \\
\max\left\{ \frac{1}{6}(\mathrm{OPT}-h\cdot\gamma), \frac{\gamma}{2}\right\},&\mathsf{if}~b=1; \nonumber \\
 \frac{1}{2}(\mathrm{OPT}-h\cdot\gamma),&\mathsf{if}~b=0. \nonumber
\end{cases}
\end{equation}
\label{thm:boundofthresholdgreedy}
\end{theorem}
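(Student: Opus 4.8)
\emph{Proof plan.} The plan is to establish the three cases with two complementary arguments: a \emph{telescoping} argument that lower-bounds the revenue harvested from each advertiser whose budget gets exhausted during the main loop, and a \emph{greedy-versus-optimum} comparison that bounds $\mathrm{OPT}$ from above by a constant multiple of $\pi(\vec{S}^*)$ plus an additive $h\gamma$ term. I would first prove $\pi(\vec{S}^*)\ge b\gamma/2$, which settles $b\ge 2$ outright and supplies the ``$\gamma/2$'' alternative when $b=1$. Fix $i\in I$ and look at the iteration where $i$ enters $I$ at Line~\ref{ln:endadding}: there $D_i=\{u\}$ for a stopple node $u$ with $c_i(S_i\cup\{u\})+\pi_i(S_i\cup\{u\})>B_i$. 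Every node earlier placed in $S_i$, and $u$ as well, passed the test of Line~\ref{ln:startadding}, hence had marginal rate $\ge\gamma/B_i$ with respect to the $S_i$ current at its insertion. Writing $\pi_i(S_i\cup\{u\})$ as a telescoping sum of marginal revenues and $c_i(S_i\cup\{u\})+\pi_i(S_i\cup\{u\})$ as the matching sum of marginal payments (in insertion order, $u$ last) and applying $\zeta_i(\cdot)\ge\gamma/B_i$ term by term gives $\pi_i(S_i\cup\{u\})\ge(\gamma/B_i)\bigl(c_i(S_i\cup\{u\})+\pi_i(S_i\cup\{u\})\bigr)>\gamma$. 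Submodularity gives $\pi_i(S_i\cup\{u\})\le\pi_i(S_i)+\pi_i(\{u\})\le 2\max\{\pi_i(S_i),\pi_i(D_i)\}\le 2\pi_i(S_i')$, so $\pi_i(S_i')>\gamma/2$; since $\mathsf{Fill}$ only adds nodes, $\pi_i(S_i^*)\ge\pi_i(S_i')$, and summing over $i\in I$ yields $\pi(\vec{S}^*)\ge\sum_{i\in I}\pi_i(S_i^*)>b\gamma/2$.

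For $b=0$ we have $I=\emptyset$, so $D_j=A_j=\emptyset$ for all $j$, $\vec{S}'$ equals the final loop state $\vec{S}$, the loop stopped with $M=\emptyset$, and the ``else'' branch of Line~\ref{ln:endadding} was never taken. By monotonicity and submodularity $\mathrm{OPT}\le\pi(\vec{O}\cup\vec{S})\le\pi(\vec{S})+\sum_{(v,j)\in\vec{O},\,v\notin S_j}\pi_j(v\mid S_j)$, and every $(v,j)\in\vec{O}$ was considered in the loop because $c_j(v)+\pi_j(v)\le c_j(O_j)+\pi_j(O_j)\le B_j$. An advertiser $j$ with $\pi_j(O_j)<\gamma$ is handled wholesale via $\pi_j(O_j)-\pi_j(S_j)\le\pi_j(O_j\setminus S_j)\le\pi_j(O_j)<\gamma$. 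For the other advertisers I split the missed elements into type~(B), rejected at Line~\ref{ln:startadding} so that $\zeta_j(v\mid S_j)<\gamma/B_j$ hence $\pi_j(v\mid S_j)<\tfrac{\gamma}{B_j}\bigl(c_j(v)+\pi_j(v\mid S_j)\bigr)$ (monotonicity of $x\mapsto x/(c+x)$ lets me switch to the final $S_j$), and type~(C), rejected because $v$ is already in some $S_k$ with $k\ne j$, where the maximum-marginal-gain rule of Line~\ref{ln:greedyruleinthreshold} gives $\pi_j(v\mid S_j)\le\pi_k(v\mid S_k^{-})$ for $S_k^{-}$ the content of $S_k$ just before $v$ joined it. The type-(C) charges, summed over all pairs, telescope to $\pi(\vec{S})$ (each inserted node is charged at most once since $\vec{O}$ is an allocation); the type-(B) loss of $j$ is at most $\gamma$ by a short rearrangement using $\sum_{v\in O_j}c_j(v)\le B_j-\pi_j(O_j)$ and $\pi_j(O_j)\ge\gamma$, hence $h\gamma$ overall. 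Thus $\mathrm{OPT}\le 2\pi(\vec{S})+h\gamma$, and $\pi(\vec{S}^*)\ge\pi(\vec{S})\ge\tfrac12(\mathrm{OPT}-h\gamma)$.

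For $b=1$ let $i$ be the unique advertiser in $I$. For $j\ne i$ the type-(B)/(C) classification above applies unchanged (their budgets were never exhausted), accounting for $\sum_{j\ne i}\pi_j(S_j)$, the type-(C) charges, and at most $(h-1)\gamma$ of type-(B) loss. For advertiser $i$ I write $O_i=(O_i\cap\bigcup_k S_k)\cup(O_i\setminus\bigcup_k S_k)$ and use subadditivity: the second piece lies inside $V\setminus\bigcup_k S_k$, the exact candidate set handed to $\mathsf{Greedy}$ at Line~\ref{ln:callnaivegreedy}, and it is feasible for the single-advertiser instance (its payment is at most that of $O_i$, hence at most $B_i$), so Theorem~\ref{thm:greedyratioforsa} gives $\pi_i(O_i\setminus\bigcup_k S_k)\le 3\pi_i(A_i)$; the first piece is bounded exactly as in the $b=0$ analysis by $\pi_i(S_i)$ plus type-(B)/(C) charges, with the one extra subtlety that the stopple node stored in $D_i$ may belong to some $O_j$ and, being in no $S_k$, escapes the telescoping type-(C) charge\textemdash but it is a single node and contributes at most $\pi_i(D_i)$ by the maximum-gain rule. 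Summing, $\mathrm{OPT}\le 2\pi(\vec{S})+\pi_i(D_i)+3\pi_i(A_i)+h\gamma$; since $\pi(\vec{S})\le\pi(\vec{S}^*)$ and $\pi_i(D_i),\pi_i(A_i)\le\pi_i(S_i')\le\pi_i(S_i^*)\le\pi(\vec{S}^*)$, this gives $\mathrm{OPT}\le 6\pi(\vec{S}^*)+h\gamma$, i.e.\ $\pi(\vec{S}^*)\ge\tfrac16(\mathrm{OPT}-h\gamma)$; combining with $\pi(\vec{S}^*)\ge\gamma/2$ from the first argument proves the $b=1$ bound.

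I expect the $b=1$ bookkeeping to be the main obstacle: charging each ``stolen'' optimal node exactly once and against the correct quantity, and inserting the single-advertiser $1/3$-guarantee of Theorem~\ref{thm:greedyratioforsa} for the leftover nodes, so that the pieces combine into the constant $6$ rather than something larger. A secondary wrinkle is the $\gamma/B_j$-versus-$\gamma$ normalization in the type-(B) loss, which is what forces the separate, easy treatment of advertisers with $\pi_j(O_j)<\gamma$.
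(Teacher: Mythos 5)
Your proposal is correct and follows essentially the same route as the paper's proof: the same telescoping argument on marginal rates gives $\pi_i(S_i\cup D_i)\ge\gamma$ for each exhausted advertiser (hence $b\gamma/2$), and the same three-way partition of the missed optimal elements — low marginal rate (charged $h\gamma$ in total, the paper's $\vec{O}^-$), already claimed by another advertiser and charged against $\pi(\vec{S})$ via the max-marginal-gain rule (the paper's $\vec{O}^*$), and the residual handled through $D_i$ together with the single-advertiser $\mathsf{Greedy}$ and Theorem~\ref{thm:greedyratioforsa} (the paper's $\vec{O}^+$) — yields the constants $2$ and $6$ exactly as in the paper. The only cosmetic differences are that you classify elements by reason of rejection rather than by final marginal rate, and your case split on $\pi_j(O_j)<\gamma$ is avoidable via the paper's sorted weighted-average bound in its Lemma on $\pi_i(O_i^-\mid S_i)\le\gamma$.
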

}
\vspace{-2ex}

\vspace{-3mm}
\subsubsection{Searching for a Good Threshold} \label{sec:searchingforthreshold}

It can be seen from Theorem~\ref{thm:boundofthresholdgreedy} that the %performance
\laks{approximation quality} of $\mathsf{ThresholdGreedy}$ is affected by the threshold $\gamma$. Specifically, if $\gamma$ is  small, then the $\mathsf{ThresholdGreedy}$ algorithm could select more elements with large marginal gain. On the other hand, if $\gamma$ is  large, then  $\mathsf{ThresholdGreedy}$  could select more elements with large \tj{marginal rate}. Therefore, we propose a novel binary-search process (see  Algorithm~\ref{alg:search}) to find an appropriate $\gamma$ to strike a balance and find  a good approximation ratio.

\begin{algorithm} [!t]
%   $\vec{S}^*\leftarrow \mathsf{NaiveGreedy}([h],V);~\mathcal{S}\gets \{\vec{S}^*\}$;\\
%   \lIf{$h=1$}{\Return $\vec{S}^*$}

    $\gamma_2\gets (1+\tau)\gamma_{max};~\gamma_1\gets 0;~\mathcal{Q}\gets\emptyset;~\gamma\gets \gamma_1$\;
     $\vec{T}^*_1\gets \emptyset;~\vec{T}^*_2\gets \emptyset;~b_1\gets 0;~b_2\gets 0$\;
   % $[\gamma_{min}, \gamma_{max}]\gets [t_1,t_2]$;\\
%    $(\vec{T}^*_1,b_1)\gets \mathsf{ThresholdGreedy}(\gamma_{1});~\mathcal{S}\gets \mathcal{S}\cup \{\vec{T}^*_1\}$;\\
%    $(\vec{T}^*_2,b_2)\gets \mathsf{ThresholdGreedy}(\gamma_{2});~\mathcal{S}\gets \mathcal{S}\cup \{\vec{T}^*_2\}$;\\
%    \lIf{$b_1< b_{min}\vee b_2\geq b_{min}$}{~\textbf{goto} 15}

%
%    $\gamma_{max}\leftarrow \max \left\{\frac{B_j\cdot \pi_j({v})}{c_j(v)+\pi_j({v})}\mid v\in V; j\in [h]\right\}$;\\
%    $\gamma_{min}\gets \rho\cdot\pi(\vec{S}^*)/(1+\tau);~\gamma\gets \gamma_{min}$;\\
    \Repeat{$\big((1+\tau)\gamma_1\geq \gamma_2\big)\vee \big(\gamma_2\leq \min_{i\in [h]}\cpe(i)/(h+6)\big)$\label{ln:stoppcondsearch}}{
        $(\vec{T}, b)\gets \mathsf{ThresholdGreedy}(\gamma);~\mathcal{Q}\gets \mathcal{Q}\cup \{\vec{T}\}$ \label{ln:callthresholdgreedy}\;
        %\lIf{$\pi(\vec{T}^*)>\pi(\vec{S}^*)$}{
%            $\vec{S}^*\gets \vec{T}^*$
%            }
        \lIf{$b\geq b_{min}$}{
            $(\vec{T}^*_1, b_1,\gamma_1)\gets (\vec{T},b,\gamma)$}  \label{ln:searchright}
        \lElse{
            $(\vec{T}^*_2, b_2,\gamma_2)\gets (\vec{T},b,\gamma)$} \label{ln:searchleft}
        $\gamma\gets (\gamma_1+\gamma_2)/2$\;
    }
    $\vec{S}^*\gets\arg\max_{\vec{T}\in \mathcal{Q}}\pi(\vec{T})$\;
    \Return $\vec{S}^*, (\vec{T}^*_1, b_1,\gamma_1), (\vec{T}^*_2, b_2,\gamma_2)$\;
    \caption{$\mathsf{Search}(\tau,  b_{min})$}
    \label{alg:search}
\end{algorithm}

The $\mathsf{Search}$ algorithm takes two parameters $\tau\in (0,1)$ and $b_{min}\in \{1,2\}$ as input, and maintains an interval $[\gamma_1,\gamma_2]$ which is initialized to $[0,(1+\tau)\gamma_{max}]$ and is halved at each iteration during binary-search, where $\gamma_{max}$ is defined as
\begin{equation}
\gamma_{max}= \max \left\{ B_j\cdot \zeta_j(v\mid \emptyset)\colon v\in V, j\in [h]\right\}.
\end{equation}
Intuitively, if $\gamma>\gamma_{max}$, then no nodes would be selected by $\mathsf{ThresholdGreedy}$; if $\gamma=0$, then $\mathsf{ThresholdGreedy}$ would select nodes purely based on their marginal gains without considering their costs. The $\mathsf{Search}$ algorithm starts searching from $\gamma=0$, such that it can test as many thresholds as possible in the binary search process for the purpose of maximizing the revenue. \revise{The input parameter $b_{min}$ is a threshold used to guide the searching direction in binary search. For example, if $\gamma$ is too large, then it is very likely that no advertisers would deplete their budgets and hence Line~\ref{ln:callthresholdgreedy} returns $b<b_{min}$, which implies that we should try a smaller $\gamma$ (see Line~\ref{ln:searchleft}). More detailed explanations can be found in the sequel.}

\eat{\note[Laks]{Can we motivate right away, why $\gamma_{max}$ is defined in this way? Also, why call ThresholdGreedy the first time at $\gamma=0$? In fact, the explanation of this algorithm could be considerably improved. It is a bit opaque as it stands. A straightforward way to do this is to explain the role of the various parameters in Search and what roles they play.}
}

\brn{
Throughout the searching process, $\mathsf{Search}$ maintains two solutions $(\vec{T}^*_1, b_1)$ and $(\vec{T}^*_2, b_2)$ such that $(\vec{T}^*_1, b_1)=\mathsf{ThresholdGreedy}(\gamma_1)$ and $(\vec{T}^*_1, b_2)=\mathsf{ThresholdGreedy}(\gamma_2)$, and adds $\vec{T}^*_1$ and $\vec{T}^*_2$ into the set $\mathcal{Q}$ during the search process. Note that $b_1$ (resp.\ $b_2$) represents the number of advertisers whose budgets would be depleted by the nodes selected by $\mathsf{ThresholdGreedy}$ under threshold $\gamma_1$ (resp.\ $\gamma_2$). The binary-search process stops when the length of $[\gamma_1,\gamma_2]$ is sufficiently small, and the $\mathsf{Search}$ algorithm returns the solution in $\mathcal{Q}$ that has the maximum revenue. Roughly speaking, the reason that $\mathsf{Search}$ can return a solution with large revenue is that, it keeps adjusting the interval $[\gamma_1,\gamma_2]$ to ensure that $b_1\geq b_{min}$ and $b_2<b_{min}$. Therefore, when the $\mathsf{Search}$ algorithm stops, if $\gamma_1$ is sufficiently large, then we can guarantee that $\pi(\vec{T}^*_1)\geq b_{min}\gamma_1/2$ is also sufficiently large following Theorem~\ref{thm:boundofthresholdgreedy}; on the other hand, if $\gamma_1$ is small, then $\gamma_2$ should also be small thanks to the stopping condition of $\mathsf{Search}$, so we can guarantee that $\mathrm{OPT}-h\gamma_2$ hence $\pi(\vec{T}^*_2)$ is  sufficiently large according to Theorem~\ref{thm:boundofthresholdgreedy}. The parameter $b_{min}\in \{1,2\}$ is just used to control the solution quality according to the above explanation and Theorem~\ref{thm:boundofthresholdgreedy}. Based on these ideas, we show in Theorem~\ref{thm:proofofsearchtau2} that $\mathsf{Search}(\tau,2)$ achieves a good performance ratio. }%leverages Lemma~\ref{lma:gammamaxisgood} to

\hkl{
\begin{theorem}
$\mathsf{Search}(\tau,2)$ returns a solution $\vec{S}^*$ satisfying $\pi(\vec{S}^*)\geq \frac{1}{(h+6)(1+\tau)}\mathrm{OPT}$.
\label{thm:proofofsearchtau2}
\end{theorem}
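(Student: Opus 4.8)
The plan is to reason about the configuration of $\mathsf{Search}(\tau,2)$ at the moment its loop terminates and to combine the two candidate solutions it carries: $\vec{T}^*_1$, an output of $\mathsf{ThresholdGreedy}$ with $b\ge b_{min}=2$, and $\vec{T}^*_2$, one with $b<2$, using Theorem~\ref{thm:boundofthresholdgreedy} for each. We may assume $\mathrm{OPT}>0$, as the bound is trivial otherwise. Two preliminary facts drive everything. Since $\mathrm{OPT}>0$, the optimal solution has some $O_j\neq\emptyset$; fixing $v\in O_j$ and using monotonicity of $\pi_j$ together with additivity of $c_j$ gives $c_j(v)+\pi_j(\{v\})\le c_j(O_j)+\pi_j(O_j)\le B_j$, so $\{v\}$ is a feasible allocation to advertiser $j$; hence $\mathrm{OPT}\ge\pi_j(\{v\})=\cpe(j)\,\sigma_j(\{v\})\ge\cpe(j)\ge\min_{i\in[h]}\cpe(i)$, and likewise $\gamma_{max}\ge B_j\,\zeta_j(v\mid\emptyset)\ge\pi_j(\{v\})\ge\min_{i\in[h]}\cpe(i)$. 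Using $\tau<1$, this last inequality shows that neither clause of the stopping condition (Line~\ref{ln:stoppcondsearch}) can hold while $\gamma_2$ still equals its initial value $(1+\tau)\gamma_{max}$.

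Next I would pin down the termination state. The length of $[\gamma_1,\gamma_2]$ halves each round, so the loop terminates, and the update rules keep the invariant that, once assigned, $(\vec{T}^*_1,b_1)=\mathsf{ThresholdGreedy}(\gamma_1)$ with $b_1\ge2$, $(\vec{T}^*_2,b_2)=\mathsf{ThresholdGreedy}(\gamma_2)$ with $b_2\le1$, and both $\vec{T}^*_1,\vec{T}^*_2$ lie in $\mathcal{Q}$. If \emph{no} call ever returns $b\ge2$ (for instance when $h=1$), then already in the first iteration ($\gamma=\gamma_1=\gamma_2=0$) we set $\gamma_2\gets0$, the loop halts, and $\vec{S}^*=\mathsf{ThresholdGreedy}(0)\in\mathcal{Q}$ has $b\le1$, so Theorem~\ref{thm:boundofthresholdgreedy} gives $\pi(\vec{S}^*)\ge\tfrac16(\mathrm{OPT}-h\cdot0)=\tfrac16\mathrm{OPT}\ge\frac{\mathrm{OPT}}{(h+6)(1+\tau)}$ and we are done. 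Otherwise $\vec{T}^*_1$ is assigned; and since any query point above $\gamma_{max}$ forces $b=0$ (no candidate node meets the threshold), $\gamma_2$ must be lowered before $\gamma_1$ can climb to $\gamma_{max}$, hence before either stopping clause can fire, so $\vec{T}^*_2$ is assigned as well. Thus in this main case $\pi(\vec{S}^*)\ge\max\{\pi(\vec{T}^*_1),\pi(\vec{T}^*_2)\}$ with $b_1\ge2$ and $b_2\le1$ at the final values of $\gamma_1,\gamma_2$.

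Then I would combine. By Theorem~\ref{thm:boundofthresholdgreedy}, $b_1\ge2$ gives $\pi(\vec{T}^*_1)\ge b_1\gamma_1/2\ge\gamma_1$, and $b_2\le1$ gives $\pi(\vec{T}^*_2)\ge\tfrac16(\mathrm{OPT}-h\gamma_2)$ (for $b_2=1$ keep the weaker of the two displayed bounds; for $b_2=0$ use $\pi(\vec{T}^*_2)\ge\tfrac12(\mathrm{OPT}-h\gamma_2)$ when this is nonnegative and $\pi(\vec{T}^*_2)\ge0$ otherwise). If the loop stopped with $(1+\tau)\gamma_1\ge\gamma_2$, then $6\pi(\vec{T}^*_2)\ge\mathrm{OPT}-h\gamma_2\ge\mathrm{OPT}-h(1+\tau)\gamma_1$ while $h(1+\tau)\pi(\vec{T}^*_1)\ge h(1+\tau)\gamma_1$; adding these and using $\pi(\vec{S}^*)\ge\max\{\pi(\vec{T}^*_1),\pi(\vec{T}^*_2)\}$ yields $\big(6+h(1+\tau)\big)\pi(\vec{S}^*)\ge\mathrm{OPT}$, whence $\pi(\vec{S}^*)\ge\frac{\mathrm{OPT}}{(h+6)(1+\tau)}$ because $6+h(1+\tau)\le(h+6)(1+\tau)$. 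If instead $\gamma_2\le\min_i\cpe(i)/(h+6)\le\mathrm{OPT}/(h+6)$, then $\pi(\vec{S}^*)\ge\pi(\vec{T}^*_2)\ge\tfrac16\big(\mathrm{OPT}-\tfrac{h}{h+6}\mathrm{OPT}\big)=\frac{\mathrm{OPT}}{h+6}\ge\frac{\mathrm{OPT}}{(h+6)(1+\tau)}$.

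The arithmetic here is routine; the delicate part is the second paragraph — verifying that at termination the two carried solutions really are outputs of $\mathsf{ThresholdGreedy}$ at $\gamma_1$ and $\gamma_2$ with $b_1\ge2$ and $b_2\le1$ respectively, that both sit in $\mathcal{Q}$, and that the fast-exit corner cases (e.g.\ $h=1$, no run ever depleting two budgets, or a small $\gamma_{max}$) are all covered. This bookkeeping leans precisely on the preliminary facts $\gamma_{max}\ge\min_i\cpe(i)$ and $\tau<1$.
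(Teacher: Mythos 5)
Your overall strategy is the same as the paper's: analyze the configuration $(\vec{T}^*_1,b_1,\gamma_1),(\vec{T}^*_2,b_2,\gamma_2)$ at termination and apply Theorem~\ref{thm:boundofthresholdgreedy} to each candidate. Your preliminary facts $\mathrm{OPT}\geq\min_i\cpe(i)$ and $\gamma_{max}\geq\min_i\cpe(i)$ are correct, the fallback case (first call returns $b<2$, immediate exit with $\gamma_2=0$) matches the paper's Case~1, and your ``add the two bounds'' combination $6\pi(\vec{T}^*_2)+h(1+\tau)\pi(\vec{T}^*_1)\geq\mathrm{OPT}$ in the main case is valid and in fact yields the marginally sharper constant $6+h(1+\tau)$ where the paper splits on whether $\gamma_1\geq\frac{\mathrm{OPT}}{(h+6)(1+\tau)}$.

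The genuine gap is exactly where you suspected: the claim that $\vec{T}^*_2$ must be assigned whenever $\vec{T}^*_1$ is. Your argument is that a query point \emph{above} $\gamma_{max}$ forces $b=0$, so $\gamma_2$ is lowered before $\gamma_1$ can reach $\gamma_{max}$. But the first stopping clause $(1+\tau)\gamma_1\geq\gamma_2=(1+\tau)\gamma_{max}$ only requires $\gamma_1=\gamma_{max}$ exactly, and a query point \emph{equal to} $\gamma_{max}$ need not force $b=0$: if $1/(1+\tau)$ is dyadic (e.g., $\tau=1/3$) the midpoint sequence can land exactly on $\gamma_{max}$, and on degenerate instances (say, many isolated nodes with identical marginal rates and two small equal budgets) $\mathsf{ThresholdGreedy}(\gamma_{max})$ can still deplete two budgets and return $b\geq 2$. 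Then the loop halts via the first clause with $\vec{T}^*_2=\emptyset$, a configuration your combination step cannot process, since it has no bound tying $\gamma_1=\gamma_{max}$ to $\mathrm{OPT}$. This is the paper's Case~2, and it is handled there by the additional inequality $\mathrm{OPT}\leq h\gamma_{max}$ (proved via submodularity, $\pi_i(O_i)\leq\sum_{v\in O_i}\pi_i(\{v\})$, and a mediant bound), which gives $\pi(\vec{T}^*_1)\geq b_1\gamma_1/2\geq\gamma_{max}\geq\mathrm{OPT}/h\geq\frac{\mathrm{OPT}}{(h+6)(1+\tau)}$. Adding that one lemma and this one extra case closes the gap; the rest of your argument stands.
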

}

By an argument analogous to the proof of Theorem~\ref{thm:proofofsearchtau2}, we get:%can also prove:  %Theorem~\ref{thm:approxboundofsearchtau1}:

\begin{theorem}
$\mathsf{Search}(\tau,1)$ returns a solution $\vec{S}^*$ satisfying $\pi(\vec{S}^*)\geq \frac{1}{2(h+1)(1+\tau)}\mathrm{OPT}$.
\label{thm:approxboundofsearchtau1}
\end{theorem}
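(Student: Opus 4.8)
The plan is to follow the template of the proof of Theorem~\ref{thm:proofofsearchtau2}, only with the invariants that $\mathsf{Search}$ maintains when its second argument is $b_{min}=1$ instead of $2$. First I would record the loop invariant that holds at every iteration, and in particular when the repeat-loop exits: $(\vec{T}^*_1,b_1)=\mathsf{ThresholdGreedy}(\gamma_1)$ with $b_1\ge 1$, and $(\vec{T}^*_2,b_2)=\mathsf{ThresholdGreedy}(\gamma_2)$ with $b_2<1$, hence $b_2=0$ (as $b_2$ is a nonnegative integer). This is immediate from Lines~\ref{ln:searchright}--\ref{ln:searchleft}, which push the freshly computed pair into the ``$\gamma_1$-slot'' exactly when $b\ge b_{min}=1$ and into the ``$\gamma_2$-slot'' otherwise; and since both $\vec{T}^*_1$ and $\vec{T}^*_2$ are inserted into $\mathcal{Q}$, the returned solution satisfies $\pi(\vec{S}^*)\ge\max\{\pi(\vec{T}^*_1),\pi(\vec{T}^*_2)\}$. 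The degenerate cases in which one slot is never assigned I would dispatch separately: if the first call $\mathsf{ThresholdGreedy}(0)$ already returns $b=0$, then by Theorem~\ref{thm:boundofthresholdgreedy} that single solution has revenue $\ge\tfrac12(\OPT-h\cdot 0)=\tfrac12\OPT\ge\tfrac{1}{2(h+1)(1+\tau)}\OPT$; and the case where the $\gamma_2$-slot stays at its initial value $(1+\tau)\gamma_{max}$ is harmless, because $\mathsf{ThresholdGreedy}\big((1+\tau)\gamma_{max}\big)$ returns $b=0$ (no node can meet the marginal-rate bar $\gamma/B_i$ once $\gamma=(1+\tau)\gamma_{max}$).

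Next I would feed these facts into Theorem~\ref{thm:boundofthresholdgreedy}: $b_1\ge 1$ yields $\pi(\vec{T}^*_1)\ge\gamma_1/2$ in both sub-branches ($b_1=1$ and $b_1\ge 2$), and $b_2=0$ yields $\pi(\vec{T}^*_2)\ge\tfrac12(\OPT-h\gamma_2)$, so
\[
\pi(\vec{S}^*)\ \ge\ \max\Big\{\tfrac{\gamma_1}{2},\ \tfrac12(\OPT-h\gamma_2)\Big\}.
\]
Then I would split on why the repeat-loop exited (Line~\ref{ln:stoppcondsearch}). If it exited because $(1+\tau)\gamma_1\ge\gamma_2$, then $\gamma_1\ge\gamma_2/(1+\tau)$, so $\pi(\vec{S}^*)\ge\max\{\tfrac{\gamma_2}{2(1+\tau)},\tfrac12(\OPT-h\gamma_2)\}$; whatever the value of $\gamma_2$, this maximum is at least $\tfrac{\OPT}{2(1+(1+\tau)h)}$ (the worst case being when the two terms coincide, at $\gamma_2=\tfrac{(1+\tau)\OPT}{1+(1+\tau)h}$), and since $1+(1+\tau)h\le(h+1)(1+\tau)$ this is $\ge\tfrac{\OPT}{2(h+1)(1+\tau)}$. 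If instead it exited because $\gamma_2\le\min_{i\in[h]}\cpe(i)/(h+6)$, I would use $\OPT\ge\min_{i\in[h]}\cpe(i)$ (allocating a single feasible node to the advertiser with the smallest $\cpe$ already yields that much revenue, since a seed contributes at least $1$ to its own spread), whence $h\gamma_2\le\tfrac{h}{h+6}\OPT$ and $\pi(\vec{T}^*_2)\ge\tfrac12\cdot\tfrac{6}{h+6}\OPT=\tfrac{3}{h+6}\OPT\ge\tfrac{1}{2(h+1)(1+\tau)}\OPT$, the last step because $6(h+1)(1+\tau)\ge h+6$ for every $h\ge 1$ and $\tau\in(0,1)$. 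In both cases the claimed bound holds.

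The two elementary inequalities in $h$ and $\tau$ and the balancing computation are mechanical, and the rest is a transcription of the $b_{min}=2$ proof with the constant $\tfrac12$ in place of $\tfrac16$ in the $\gamma_2$-bound and $\tfrac{\gamma_1}{2}$ in place of $\gamma_1$ in the $\gamma_1$-bound (this factor-$2$ loss in the $\gamma_1$ term is exactly what turns the $h+6$ of Theorem~\ref{thm:proofofsearchtau2} into the $2(h+1)$ here). The part that needs the most care is the bookkeeping in the first paragraph: verifying that at termination both slots genuinely hold $\mathsf{ThresholdGreedy}$ outputs at the current endpoints with the stated $b$-values, and making sure the boundary situations (the first evaluation $\gamma=0$ landing in the $\gamma_2$-slot, or $\gamma_2$ never being updated) are each absorbed into the $b=0$ branch of Theorem~\ref{thm:boundofthresholdgreedy} rather than slipping through a gap in the case analysis.
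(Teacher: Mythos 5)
Your proposal is correct and follows exactly the route the paper intends: the paper gives no separate proof of Theorem~\ref{thm:approxboundofsearchtau1}, stating only that it follows ``by an argument analogous to the proof of Theorem~\ref{thm:proofofsearchtau2},'' and your write-up is a faithful instantiation of that analogy, feeding the $b\ge 1$ bound $\pi(\vec{T}^*_1)\ge\gamma_1/2$ and the $b=0$ bound $\pi(\vec{T}^*_2)\ge\tfrac12(\mathrm{OPT}-h\gamma_2)$ from Theorem~\ref{thm:boundofthresholdgreedy} into the stopping condition of $\mathsf{Search}$. Your explicit min--max balancing over $\gamma_2$ is just a cleaner packaging of the threshold-split the paper uses in Case~3 of Theorem~\ref{thm:proofofsearchtau2}, and your handling of the degenerate slots matches the paper's Cases~1 and~2.
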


\subsection{Putting It Together} \label{sec:puttogether}

\hkl{
Theorems~\ref{thm:greedyratioforsa}--\ref{thm:approxboundofsearchtau1} imply that we can make optimizations based on the number of advertisers $h$. As $h+6\leq 2(h+1)$ when $h\geq 4$, we design an algorithm $\mathsf{RM\_with\_Oracle}$ (Algorithm~\ref{alg:rmwithoracle}) to get the best performance.
%of our algorithms.
The following theorem is immediate:

%Therefore, combining Theorem~\ref{}-\ref{}, we immediately get:

\vspace{-2ex}
\begin{theorem}
Given any $\tau\in (0,1)$, $\mathsf{RM\_with\_Oracle}(\tau)$ can return a solution to the RM problem with the approximation ratio $\lambda$, where $\lambda=\frac{1}{3}$ for $h=1$, $\lambda=\frac{1}{2(h+1)(1+\tau)}$ for $h\in \{2,3\}$, and $\lambda=\frac{1}{(h+6)(1+\tau)}$ for $h\geq 4$.
\label{thm:finalapproxratio}
\end{theorem}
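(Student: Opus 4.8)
The plan is to prove Theorem~\ref{thm:finalapproxratio} by a straightforward case analysis on the number of advertisers $h$, invoking the guarantees already established in Theorems~\ref{thm:greedyratioforsa}--\ref{thm:approxboundofsearchtau1}. The algorithm $\mathsf{RM\_with\_Oracle}(\tau)$ branches on $h$: for $h=1$ it runs $\mathsf{Greedy}(V,1)$, and for $h>1$ it runs the $\mathsf{Search}$ subroutine(s) and returns the highest-revenue solution found. In every branch the returned allocation is already feasible for the RM problem (it satisfies the per-advertiser budget constraints and the disjointness of the seed sets), because each invoked subroutine only outputs feasible allocations, as established in its respective theorem; hence feasibility requires no separate argument and only the approximation ratio must be checked in each case.

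For $h=1$, Theorem~\ref{thm:greedyratioforsa} immediately gives that $\mathsf{Greedy}(V,1)$ returns $S_1^*$ with $\pi_1(S_1^*)\geq \frac{1}{3}\mathrm{OPT}$, matching $\lambda=1/3$ (and, as stated, with no $\tau$-dependence). For $h>1$, the returned solution $\vec{S}^*$ has revenue at least $\max\bigl\{\frac{1}{2(h+1)(1+\tau)},\, \frac{1}{(h+6)(1+\tau)}\bigr\}\cdot\mathrm{OPT}$ by Theorems~\ref{thm:proofofsearchtau2} and~\ref{thm:approxboundofsearchtau1} applied to $\mathsf{Search}(\tau,2)$ and $\mathsf{Search}(\tau,1)$ respectively. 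It therefore remains only to evaluate this maximum as a function of $h$: since $2(h+1)<h+6 \iff h<4$, the $b_{min}=1$ bound dominates for $h\in\{2,3\}$, giving $\lambda=\frac{1}{2(h+1)(1+\tau)}$; while $h+6\leq 2(h+1)\iff h\geq 4$, so the $b_{min}=2$ bound dominates for $h\geq 4$, giving $\lambda=\frac{1}{(h+6)(1+\tau)}$.

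Assembling the three cases yields the piecewise $\lambda$ claimed in the theorem. I do not expect a genuine obstacle: the substantive content was fully carried by Theorems~\ref{thm:greedyratioforsa}--\ref{thm:approxboundofsearchtau1}, and what remains is the bookkeeping of comparing the two elementary linear expressions $2(h+1)$ and $h+6$ to locate the crossover at $h=4$, together with a check that the branching logic of Algorithm~\ref{alg:rmwithoracle} matches this case split and that each branch's output inherits feasibility from the subroutine guarantees. If the pseudocode of $\mathsf{RM\_with\_Oracle}$ were instead to run only one $\mathsf{Search}$ variant per $h$-range (rather than both and take the better), the same comparison shows the chosen variant is the better one, so the conclusion is unchanged.
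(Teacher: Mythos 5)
Your proposal is correct and matches the paper's treatment: the paper declares Theorem~\ref{thm:finalapproxratio} ``immediate'' from Theorems~\ref{thm:greedyratioforsa}--\ref{thm:approxboundofsearchtau1} together with precisely the observation that $h+6\leq 2(h+1)$ iff $h\geq 4$, which is the same case split and crossover comparison you carry out. Your closing remark correctly anticipates that Algorithm~\ref{alg:rmwithoracle} runs only the dominant $\mathsf{Search}$ variant in each $h$-range, so nothing further is needed.
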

\vspace{-1ex}

%\begin{theorem}
%Let $\tau$ be any number in $(0,1)$. $\mathsf{RM\_with\_Oracle}(\tau)$ can return a solution to the RM problem with the approximation ratio $\lambda$, where $\lambda=\frac{1}{3}$ for $h=1$, $\lambda=\frac{1}{2(h+1)(1+\tau)}$ for $h\in \{2,3\}$, and $\lambda=\frac{1}{(h+6)(1+\tau)}$ for $h\geq 4$.
%\label{thm:finalapproxratio}
%\end{theorem}
}

\begin{algorithm} [t]
    \lIf{$h=1$}{\Return $\mathsf{Greedy}(V,1)$}
    \lIf{$2\leq h\leq 3$}{\Return $\mathsf{Search}(\tau,1)$}
    \lIf{$h\geq 4$}{\Return $\mathsf{Search}(\tau,2)$}
    \caption{$\mathsf{RM\_with\_Oracle}(\tau)$}
    \label{alg:rmwithoracle}
\end{algorithm}

\eat{
\note[Laks]{Somewhere, there needs to be a brief discussion of who chooses $\tau$ and how.}

\vlt{I have added a paragraph below--Kai}
}

\laks{Note that $\tau$ reflects a trade-off between accuracy and efficiency: the $\mathsf{Search}$ algorithm needs $\mathcal{O}(\log\frac{h\gamma_{max}}{\min_{i\in [h]}\cpe(i)})$ iterations in the worst case and needs $\mathcal{O}(\log\frac{1}{\tau})$ iterations in most cases, while the approximation ratio improves with smaller $\tau$. In practice, $\tau$ can be set as a small number (e.g., $\tau=0.1$) while our algorithm still runs very fast, as our experiments in Section~\ref{sec:pe} demonstrate.}

% \section{Solving the RM Problem without an Oracle} \label{sec:rmwithoutoracle}

\vspace{-1ex}
\section{Solving RM Without an Oracle} \label{sec:rmwithoutoracle}

\laks{
In this section, we present algorithms for the RM problem without the influence spread oracle assumed in last section. Our algorithms are based on %a novel method for generating
a novel adaptation to Reverse Reachable Sets~\cite{Borgs2014} and several efficient sampling techniques with guaranteed performance bounds for the RM problem.
%The basic idea is to estimate the revenue of any seed set using the Reverse Reachable Sets proposed in~\cite{Borgs2014}. Compared to the existing studies, we propose several novel sampling methods and algorithms to achieve guaranteed approximation ratio and time efficiency for the RM problem.
}

\vspace{-2mm}
\subsection{Reverse Reachable Sets} \label{sec:rrset}
The concept of Reverse Reachable Set (RR-set) was  first proposed in~\cite{Borgs2014} for the Independent Cascade (IC) influence model. Given a social network $G$ with each edge $(u,v)$ associated with an influence propagation probability $p_{u,v}$ under the IC model, a random RR-set $R$ is generated by first selecting a node $v\in V$ uniformly at random, and then setting $R$ as the set of nodes in $V$ that are reverse-reachable from $v$ in a random graph generated by independently removing each edge $(u,v)\in E$ with probability $1-p_{u,v}$. Given any node set $A$ and a random RR-set $R$, we can define a random variable $X(A,R)$ such that $X(A,R)=1$ when $A$ intersects $R$ and $X(A,R)=0$ otherwise. Borgs et al.~\cite{Borgs2014} show  that the influence spread of $A$ under the IC model equals  $n\cdot \E[X(A,R)]$, and  $\E[X(A,R)]$ can be  estimated in an unbiased manner  by the empirical mean $\sum_{R\in \R} X(A,R)/|\R|$ based on concentration bounds, where $\R$ is a set of generated RR-sets.

%Intuitively, a seed set with larger influence spread under the IC model also have a higher probability to intersect $R$, and hence the influence spread of any seed set can be estimated by generating a sufficient number of RR-sets.

\vspace{-2mm}
\subsection{A New Method for Generating RR-Sets} \label{sec:newmethodforrrsets}

In our problem, we need to design a method to estimate $\pi(\vec{S})=\sum_{i\in [h]}\cpe(i)\cdot \sigma_i(S_i)$ for any solution $\vec{S}=(S_1,\dotsc,S_h)$ to the RM problem.
%For convenience, we say a random RR-set $R$ is generated for advertiser $i$ if it is generated using the edge probability $p_{u,v}^i$ for each $(u,v)\in E$.
%According to Section~\ref{sec:rrset}, we have $\sigma_i(S_i)=n\cdot \E[X(S_i,R)]$ if $R$ is generated for advertiser $i$, and
According to Section~\ref{sec:rrset}, a straightforward idea for estimating $\pi(\vec{S})$ is to generate a set $\mathcal{R}_i$ of random RR-sets for each advertiser $i\in [h]$ with $|\R_1|=|\R_2|=\dotsb=|\R_h|$, such that $\sigma_i(S_i)$ can be estimated using $\R_i$ for each $i\in [h]$. However, the estimation accuracy of this method is unsatisfactory, as the random variables in $\{X(S_i,R)\colon i\in [h]\wedge R\in \R_i\}$ have $h$ different distributions while the concentration bounds are generally sharper when the considered random variables are identically distributed. To overcome this hurdle, we propose a \textit{uniform sampling} method for generating a random RR-set, as described below:

\begin{enumerate}
    \item Sample a random advertiser $i\in [h]$ with probability proportional to $\mathit{cpe}(i)$.
    %\item Select a node $v_i\in V$ uniformly at random, and then set $R$ as the set of nodes in $V$ that are reversely reachable from $v_i$ in a random graph generated by independently removing each edge $(u,v)\in E$ with probability $1-p^i_{(u,v)}$.
    \item Generate an RR-set $R$ for advertiser $i$ selected in the first step using the edge probability $p^i_{(u,v)}$ for each edge $(u,v)\in E$.
\end{enumerate}

\vspace{-1ex}
%Note that the above method actually produces a random tuple $\langle i, R\rangle$.
Given $\vec{S}=(S_1,\dotsc,S_h)$ and a random RR-set $R$ generated as above, we define a random variable $\Lambda(\vec{S},R)$ such that $\Lambda(\vec{S},R)=1$ if $R$ is generated for certain advertiser $j\in [h]$ and \eat{$S_j\cup R\neq \emptyset$} $S_j\cap R\neq \emptyset$, and $\Lambda(\vec{S},R)=0$ otherwise. Let $\Gamma=\sum_{i\in [h]}\cpe(i)$. Then it follows that:
\laks{
\vspace{-.5ex}
\begin{lemma}
$\pi(\vec{S})=n\Gamma\cdot \E [\Lambda(\vec{S},R)]$.
\label{lem:samplingthm}
\end{lemma}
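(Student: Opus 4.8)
The plan is to condition on the advertiser selected in the first step of the uniform sampling procedure and reduce to the single-advertiser identity of Borgs et al.~\cite{Borgs2014} recalled in Section~\ref{sec:rrset}. First I would fix a solution $\vec{S}=(S_1,\dotsc,S_h)$ and let $E_j$ denote the event that the random RR-set $R$ is generated for advertiser $j$, which by construction has probability $\Pr[E_j]=\cpe(j)/\Gamma$. Conditioned on $E_j$, the set $R$ is distributed exactly as a random RR-set generated under the edge probabilities $\{p^j_{(u,v)}\}_{(u,v)\in E}$ for advertiser $j$; moreover, by the definition of $\Lambda$, on the event $E_j$ we have $\Lambda(\vec{S},R)=1$ iff $S_j\cap R\neq\emptyset$, i.e.\ $\Lambda(\vec{S},R)=X(S_j,R)$ with $X$ as in Section~\ref{sec:rrset}.

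Next I would apply the law of total expectation over the $h$ disjoint events $E_1,\dotsc,E_h$ (which partition the probability space):
\begin{align*}
\E[\Lambda(\vec{S},R)]
&=\sum_{j\in[h]}\Pr[E_j]\cdot \E[\Lambda(\vec{S},R)\mid E_j]
=\sum_{j\in[h]}\frac{\cpe(j)}{\Gamma}\cdot \E[X(S_j,R)\mid E_j].
\end{align*}
By the result of Borgs et al.~\cite{Borgs2014}, since conditioned on $E_j$ the set $R$ is a random RR-set for advertiser $j$ under the TIC-induced edge probabilities $p^j_{(u,v)}$, we have $\E[X(S_j,R)\mid E_j]=\sigma_j(S_j)/n$. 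Substituting this in yields
\begin{align*}
\E[\Lambda(\vec{S},R)]=\sum_{j\in[h]}\frac{\cpe(j)}{\Gamma}\cdot\frac{\sigma_j(S_j)}{n}
=\frac{1}{n\Gamma}\sum_{j\in[h]}\cpe(j)\cdot\sigma_j(S_j)
=\frac{\pi(\vec{S})}{n\Gamma},
\end{align*}
and multiplying through by $n\Gamma$ gives the claim.

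I do not anticipate a serious obstacle here; the argument is essentially bookkeeping via conditioning. The one point that requires care is the observation that on the event $E_j$ the indicator $\Lambda(\vec{S},R)$ coincides with $X(S_j,R)$ for the \emph{correct} advertiser $j$ — this is why the per-advertiser RR-set distribution (with the advertiser-specific edge probabilities $p^j_{(u,v)}$) must match exactly, so that the single-advertiser identity $\sigma_j(S_j)=n\,\E[X(S_j,R)\mid E_j]$ can be invoked verbatim. A secondary subtlety worth stating explicitly is that the node $v\in V$ seeding the RR-set is still chosen uniformly at random within step~2 regardless of $j$, so the factor $n$ in Borgs et al.'s identity is the same for every advertiser and can be pulled out of the sum cleanly.
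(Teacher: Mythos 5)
Your proof is correct, and it is exactly the argument the paper intends: the paper states Lemma~\ref{lem:samplingthm} without an explicit proof, treating it as an immediate consequence of the uniform sampling construction, and your conditioning on the sampled advertiser $E_j$ followed by the per-advertiser Borgs et al.\ identity $\sigma_j(S_j)=n\,\E[X(S_j,R)\mid E_j]$ is the standard bookkeeping that justifies it. No gaps; the two points you flag (that $\Lambda(\vec{S},R)$ reduces to $X(S_j,R)$ on $E_j$, and that the root node is uniform over $V$ for every advertiser so the factor $n$ is common) are precisely the right ones to make explicit.
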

\vspace{-.5ex}
}
Given a set $\R$ of random RR-sets generated by using the uniform sampling method described above, Lemma~\ref{lem:samplingthm} suggests that $\widetilde{\pi}(\vec{S},\mathcal{R})=n\Gamma\cdot \sum\nolimits_{R\in \R} \Lambda(\vec{S}, R)/|\R|$~
%\begin{eqnarray}
%\widetilde{\pi}(\vec{S},\mathcal{R})=n\Gamma\cdot \sum\nolimits_{R\in \R} \Lambda(\vec{S}, R)/|\R|
%\end{eqnarray}
is an unbiased estimation of $\pi(\vec{S})$. Moreover, as the random variables in $\{\Lambda(\vec{S}, R)\colon R\in \R\}$ follow the same distribution, it is possible to use sharper concentration bounds to improve the estimation accuracy. Similarly, we \ca{also have} $\widetilde{\pi}_i(S_i,\R)=n\Gamma\cdot \sum\nolimits_{R\in \R} \Lambda(S_i, R)/|\R|$
%\note[cigdem]{I remove that we wording of prove in case a reviewer doesn't see the obbious and complains that there is no proof :) }
%\begin{eqnarray}
%&&\widetilde{\pi}_i(S_i,\R)=n\Gamma\cdot \sum\nolimits_{R\in \R} \Lambda(S_i, R)/|\R|
%\end{eqnarray}
\ca{as} \eat{is} an unbiased estimation of $\pi_i(S_i)$ for any $i\in [h]$, where \laks{$\Lambda(S_i, R)=\min\{|S_i\cap R|, 1\}$} if $R$ is generated for advertiser $i$, and $\Lambda(S_i, R)=0$ otherwise.

%\subsection{A One-batch Sampling Algorithm}
\vspace{-2mm}
\subsection{One-Batch Sampling}
\label{sec:onebatchsampling}

With the uniform sampling method described above, a simple \textit{one-batch} sampling algorithm can be used to address the RM problem: we first generate a set $\mathcal{R}$ of RR-sets, then call the $\mathsf{RM\_with\_Oracle}$ algorithm with the function $\pi_i(\cdot)$ replaced by ${\wpi}_i(\cdot,\R)$ and the budget $B_i$ replaced by $(1+\varrho/2)B_i$ for all $i\in [h]$, where $\varrho\in (0,1)$ is an input parameter \ca{for bicriteria approximation}. \brn{Note that we have to seek  a bicriteria approximation for the RM problem as the optimal solution $\vec{O}$ may violate the budget constraint in the sampling space due to the sampling error. Therefore, it is hopeless to find an approximate solution as we can only derive an approximation ratio through sampling, \laks{unless we under-utilize the available budget, based on estimates and concentration bounds.\footnote{Recall, the effect of budget overshoot can be canceled out by the host using a ``corrected'' budget $B_i := B_i/(1+\varrho)$, if desired (see Section~\ref{sec:limitations}).}}} The following theorem shows that such a one-batch algorithm can return an approximate \eat{a bi-criteria approximate} solution to the RM problem when $|\R|$ is sufficiently large:

%\note[Laks]{1. Bicriteria approx is "suddenly" mentioned: may come out of the blue for the reader. It needs to be developed: why is an approx. not possible and why must it be a bicriteria approx.\\ 2. Theorem 4.2 is impenetrable right now. It needs to be developed and explained. At least a proof sketch shd be included. What are the variables $u_i^j$? What is the range of $j$ in the summation? If the intention is for $j$ to range over the nodes in the decreasing order of cost until the budget "runs out", it should be made clear, perhaps in words as well. }

%\vlt{For 1: it is impossible to find an approximate solution, as the optimal solution can violate the budget constraint in the sampling space, so we have to allow that the approximate solution also violate the budget constraint in the sampling space to make it comparable to the optimal solution to derive an approximation ratio. I have revised the text above accordingly. For 2: I have revised the text below --Kai}

\begin{theorem}
	The one-batch approach described above can return a solution $\vec{S}^*$ satisfying $c_i(S_i^*)+\pi_i(S_i^*)\leq (1+\varrho)B_i$ for all $i\in [h]$ and $\pi(\vec{S}^*)\geq (\lambda-\epsilon)\mathrm{OPT}$ with probability at least $1-\delta$, as long as $|\mathcal{R}|\geq \theta_{max}=\max\{\hat{\theta}_{max}, \bar{\theta}_{max}\}$ by setting
	\begin{align*}
	&\hat{\theta}_{max}={\frac{2n}{\epsilon^{2}}\left(\lambda\sqrt{\ln \frac{4}{\delta}}+\sqrt{\lambda\left(\ln \frac{4}{\delta}+\sum\nolimits_{i\in[h]}{\mu_i}\ln\frac{\mathrm{e}n}{\mu_i}\right) } \right)^2 },\\
	&\bar{\theta}_{max}=\frac{8n\Gamma(1+\varrho)}{\varrho^2 B_{min}}(\ln \frac{4h}{\delta}+{\mu}\ln\frac{\mathrm{e}n}{{\mu}}),
	\end{align*}
	\brn{where $\mu_i$ is the maximum number of nodes that can be selected by advertiser $i$ without exceeding the \laks{relaxed} budget of $(1+\varrho)B_i$, $\mu=\max\{\mu_i\colon i \in [h]\}$, and $B_{min}=\min\{B_i \colon i\in [h]\}$.}
%	where $\mu_i=\arg\max_j \{\sum_j c_i(u_i^j)\leq (1+\varrho)B_i\}$ with $c_i(u_i^1)\leq c_i(u_i^2)\leq \dotsb\leq c_i(u_i^n)$ for each $i\in [h]$, $\mu=\max\{\mu_i\colon i \in [h]\}$, and $B_{min}=\min\{B_i \colon i\in [h]\}$.
	\label{thm:upperboundofrrsets}
\end{theorem}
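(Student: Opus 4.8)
The plan is to treat $\wpi_i(\cdot,\R)$, for a fixed realization of the RR-set collection $\R$, as an \emph{exact} oracle for a surrogate RM instance, and then pay for the surrogate-to-truth gap with concentration. The enabling observation is that, for fixed $\R$, each $\wpi_i(\cdot,\R)=\frac{n\Gamma}{|\R|}\bigl|\{R\in\R: R\text{ is for }i,\ S_i\cap R\neq\emptyset\}\bigr|$ is (a positive multiple of) a coverage function, hence monotone and submodular; therefore Theorem~\ref{thm:finalapproxratio} applies verbatim to the instance obtained by replacing $\pi_i$ with $\wpi_i(\cdot,\R)$ and $B_i$ with $(1+\varrho/2)B_i$. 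This gives, deterministically, $\wpi(\vec{S}^*,\R)\ge\lambda\cdot\widetilde{\OPT}$, where $\widetilde{\OPT}$ is the optimum of the surrogate instance. The whole argument then reduces to showing that, with probability at least $1-\delta$: (a) $\widetilde{\OPT}\ge(1-\epsilon_2)\OPT$ for a suitable $\epsilon_2$; (b) $\pi(\vec{S}^*)\ge\wpi(\vec{S}^*,\R)-\epsilon_1\OPT$ for a suitable $\epsilon_1$; and (c) the surrogate feasibility $c_i(S_i^*)+\wpi_i(S_i^*,\R)\le(1+\varrho/2)B_i$ that the algorithm enforces upgrades to the true relaxed feasibility $c_i(S_i^*)+\pi_i(S_i^*)\le(1+\varrho)B_i$.

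For the probabilistic part I would define three good events and split the failure probability $\delta$ among them by a union bound (this is the source of the $\ln\frac{4}{\delta}$ and $\ln\frac{4h}{\delta}$ factors). \emph{Event~1} controls the single fixed allocation $\vec{O}$: a Chernoff bound on the i.i.d.\ Bernoulli variables $\{\Lambda(\vec{O},R)\}_{R\in\R}$ (whose scaled empirical mean $\wpi(\vec{O},\R)$ has expectation $\OPT$) yields $\wpi(\vec{O},\R)\ge(1-\epsilon_2)\OPT$ and $\wpi_i(O_i,\R)\le\pi_i(O_i)+\tfrac{\varrho}{2}B_i$ for every $i$ once $|\R|=\Omega\bigl(\tfrac{n\Gamma}{\OPT}\cdot\tfrac{\ln(1/\delta)}{\epsilon_2^2}\bigr)$; combined with the easy bound $\OPT\ge\Gamma$ (any advertiser that can afford a seed contributes at least $\cpe(i)$ to $\OPT$), the factor $n\Gamma/\OPT$ collapses to $n$. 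On Event~1 the true optimum $\vec{O}$ is feasible for the surrogate instance, so $\widetilde{\OPT}\ge\wpi(\vec{O},\R)\ge(1-\epsilon_2)\OPT$. \emph{Event~2} is a uniform lower bound $\pi(\vec{S})\ge\wpi(\vec{S},\R)-\epsilon_1\OPT$ over all allocations with $|S_i|\le\mu_i$ for all $i$ --- a family that contains every possible output of the algorithm because the hard cost constraint $c_i(S_i^*)\le(1+\varrho/2)B_i\le(1+\varrho)B_i$ bounds $|S_i^*|$ by $\mu_i$ independently of $\R$; a union bound over the at most $\prod_i\sum_{k\le\mu_i}\binom{n}{k}\le\prod_i(\mathrm{e}n/\mu_i)^{\mu_i}$ such allocations produces the $\sum_{i\in[h]}\mu_i\ln\frac{\mathrm{e}n}{\mu_i}$ term. \emph{Event~3} is a uniform one-sided bound $\pi_i(S_i)\le\wpi_i(S_i,\R)+\tfrac{\varrho}{2}B_i$ over all $i$ and all $S_i$ with $\pi_i(S_i)+c_i(S_i)\le(1+\varrho)B_i$; here the Bernoulli mean satisfies $p\le(1+\varrho)B_i/(n\Gamma)$, so a Chernoff bound with this variance proxy together with a union bound over the $h$ advertisers and the $(\mathrm{e}n/\mu)^{\mu}$ relevant sets per advertiser yields precisely $\bar\theta_{max}$; on Event~3 the algorithm's surrogate feasibility becomes $c_i(S_i^*)+\pi_i(S_i^*)\le(1+\varrho)B_i$, which also certifies the self-consistency used in Event~2.

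Chaining on Events~1--3 gives $\pi(\vec{S}^*)\ge\wpi(\vec{S}^*,\R)-\epsilon_1\OPT\ge\lambda\widetilde{\OPT}-\epsilon_1\OPT\ge\lambda(1-\epsilon_2)\OPT-\epsilon_1\OPT=(\lambda-\lambda\epsilon_2-\epsilon_1)\OPT$, so it suffices to pick $\epsilon_1,\epsilon_2>0$ with $\lambda\epsilon_2+\epsilon_1\le\epsilon$. The sample sizes forced by Events~1 and~2 are of the form $c\,n/\epsilon_2^2$ and $c\,n(\ln\frac4\delta+\sum_i\mu_i\ln\frac{\mathrm{e}n}{\mu_i})/\epsilon_1^2$; minimizing the larger of the two subject to $\lambda\epsilon_2+\epsilon_1=\epsilon$ (a Cauchy--Schwarz / AM--GM balancing identical to the one in IMM-style sample-complexity proofs) gives the closed form $\hat\theta_{max}=\frac{2n}{\epsilon^2}\bigl(\lambda\sqrt{\ln\frac4\delta}+\sqrt{\lambda(\ln\frac4\delta+\sum_i\mu_i\ln\frac{\mathrm{e}n}{\mu_i})}\bigr)^2$. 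Taking $|\R|\ge\theta_{max}=\max\{\hat\theta_{max},\bar\theta_{max}\}$ makes all three events hold simultaneously with probability $\ge1-\delta$, which delivers both the $(\lambda-\epsilon)$-approximation and the $(1+\varrho)$-relaxed budget feasibility.

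I expect the delicate step to be pinning down $\hat\theta_{max}$ exactly: a naive single Chernoff bound with a crude error split would leave a factor $n\Gamma/\OPT$ (or even $n^2$ in Event~2 if the deviation were treated additively rather than multiplicatively), so one must (i) exploit $\OPT\ge\Gamma$ and the fact that the surrogate functions are honest coverage functions to keep each per-event sample size linear in $n$, using a martingale / multiplicative concentration bound in the spirit of~\cite{Borgs2014}, and (ii) optimally divide the error budget $\epsilon$ between the ``$\widetilde{\OPT}$ vs.\ $\OPT$'' gap and the ``$\pi(\vec{S}^*)$ vs.\ $\wpi(\vec{S}^*,\R)$'' gap. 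The apparent circularity in Event~2 (its union-bound family must contain the random output $\vec{S}^*$, yet feasibility of $\vec{S}^*$ is only secured through Event~3) is a minor nuisance, sidestepped by observing that the algorithm's deterministic cost constraint already bounds $|S_i^*|$ by $\mu_i$.
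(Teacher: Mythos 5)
Your overall architecture is the same as the paper's: treat $\wpi_i(\cdot,\R)$ as an exact oracle for a surrogate instance with budgets $(1+\varrho/2)B_i$, invoke Theorem~\ref{thm:finalapproxratio} to get $\wpi(\vec{S}^*,\R)\geq\lambda\,\wpi(\vec{O},\R)$ once $\vec{O}$ is surrogate-feasible, and condition on a handful of good events: concentration of $\wpi(\vec{O},\R)$ around $\OPT$ (the paper's $\mathcal{E}_2,\mathcal{E}_3$), a uniform lower bound $\pi(\vec{S})\geq\wpi(\vec{S},\R)-\epsilon_2\OPT$ over a family of at most $\prod_i(\mathrm{e}n/\mu_i)^{\mu_i}$ candidate allocations (the paper's $\mathcal{E}_4$), and a budget-feasibility event. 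Your splitting of $\epsilon$ and the balancing that produces $\hat\theta_{max}$, your use of $\OPT\geq\Gamma$ to collapse $n\Gamma/\OPT$ to $n$, and your observation that the deterministic cost constraint already bounds $|S_i^*|$ by $\mu_i$ (so the union-bound family for your Event~2 does not depend on $\R$) all match the paper's proof.

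The one step that fails as stated is your Event~3. You assert the bound $\pi_i(S_i)\leq\wpi_i(S_i,\R)+\varrho B_i/2$ uniformly over the sets $S_i$ satisfying $c_i(S_i)+\pi_i(S_i)\leq(1+\varrho)B_i$, and then claim that surrogate feasibility of $S_i^*$ upgrades to true relaxed feasibility. But whether $S_i^*$ belongs to that family is exactly the conclusion you are trying to reach: if $S_i^*$ happens to violate $c_i(S_i^*)+\pi_i(S_i^*)\leq(1+\varrho)B_i$, Event~3 says nothing about it and no contradiction arises. The paper's $\mathcal{E}_1$ takes the contrapositive: it ranges over the \emph{infeasible} sets $\mathcal{Q}_1^i=\{S_i\colon c_i(S_i)\leq(1+\varrho)B_i<c_i(S_i)+\pi_i(S_i)\}$ and shows each is rejected by the surrogate test, i.e.\ $c_i(S_i)+\wpi_i(S_i,\R)>(1+\varrho/2)B_i$; since $\vec{S}^*$ passes that test, it cannot lie in any $\mathcal{Q}_1^i$. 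This reformulation is not only what makes the logic close, it is also what produces $\bar\theta_{max}$: the lower-tail bound for $S_i\in\mathcal{Q}_1^i$ has exponent $\varrho^2|\R|\big(c_i(S_i)+\pi_i(S_i)\big)^2\big/\big(8n\Gamma(1+\varrho)^2\pi_i(S_i)\big)$, and the inequality $\big(c_i(S_i)+\pi_i(S_i)\big)^2/\pi_i(S_i)\geq c_i(S_i)+\pi_i(S_i)>(1+\varrho)B_i$ is what yields the $(1+\varrho)/(\varrho^2 B_{min})$ dependence; your ``variance proxy $p\leq(1+\varrho)B_i/(n\Gamma)$'' for feasible sets does not give this, since $\pi_i(S_i)$ can be much larger than $B_i$ on that family. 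The fix is a one-line contrapositive, but as written the budget-feasibility half of the theorem is not established.
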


The proof of Theorem~\ref{thm:upperboundofrrsets} is highly non-trivial compared to the existing results for the traditional influence maximization problem~\cite{TangXS2014,TangSX2015}, as the RM problem is more complex. \brn{In a nutshell, the proof of Theorem~\ref{thm:upperboundofrrsets} shows that, when the number of RR-sets in $\R$ is sufficiently large (i.e., $\geq \theta_{max}$), the one-batch algorithm can achieve the claimed performance guarantee because all the following conditions simultaneously hold with high probability:}

%In a nutshell, the number of RR-sets in $\R$ should be sufficiently large such that all the following conditions simultaneously hold with high probability:

\begin{enumerate}[(i)]
\item The optimal solution $\vec{O}$ is budget-feasible in the ``sampling space'', i.e., $c_i(O_i)+\wpi_i(O_i, \R)\leq (1+\varrho/2)B_i$ for all $i\in [h]$.
\item The approximate solution $\vec{S}^*$ is ``almost'' budget-feasible, i.e., $c_i(S^*_i)+\pi_i(S^*_i, \R)\leq (1+\varrho)B_i$ for all $i\in [h]$.
\item The approximate solution $\vec{S}^*$ satisfies the $\lambda-\epsilon$ approximation ratio, i.e., $\pi(\vec{S}^*)\geq (\lambda-\epsilon)\OPT$.
\end{enumerate}

Roughly speaking, Condition (i) ensures that the optimal solution is comparable to the approximate solution $\vec{S}^*$, as we require $\forall i\in [h]\colon c_i(S_i^*)+\wpi_i(S^*_i, \R)\leq (1+\varrho/2)B_i$ in searching $\vec{S}^*$; Conditions (ii)--(iii) ensure that $\vec{S}^*$ is a valid bi-criteria approximate solution.
\eat{\note[Laks]{Is (ii) supposed to use $\wpi_i$ instead of $\pi_i$?}

\vlt{the current (ii) is correct, it ensures that the solution is ``trully'' budget-feasible (not in the sampling space, as in (i)). -kai}}

%$\widetilde{\pi}(\vec{S},\mathcal{R})=\sum_{i=1}^h \widetilde{\pi}_i(S_i, \mathcal{R})$

%\begin{algorithm} [t]
%    \lIf{$h=1$}{$\rho\gets (1-e^{-1})/2$ }
%    \lIf{$2\leq h\leq 5$}{$b_{min}\gets 1;~\rho\gets 1/(2h+2)$}
%    \lIf{$h\geq 6$}{$b_{min}\gets 2;~\rho\gets 1/(h+4+\alpha)$}
%    \caption{$\mathsf{RM\_without\_Oracle}(\tau,\epsilon,\delta)$}
%    \label{alg:search}
%\end{algorithm}

\begin{algorithm}[t]
	\setlength{\hsize}{0.94\linewidth}
	$\lambda\gets$ the approximation ratio shown in Theorem~\ref{thm:finalapproxratio};\\
	$\delta'\gets \delta/4$ and compute $\theta_{max}$ by replacing $\delta$ with $\delta'$ in $\hat{\theta}_{max}$ and $\bar{\theta}_{max}$ defined in Theorem~\ref{thm:upperboundofrrsets}\;
		 %\toblue{$\theta_0\gets \frac{400n\Gamma(2+\varrho/3)}{B_{min}}\ln \frac{h}{\delta^\prime}$;}~$t_{\max}\gets \lceil \log_2 \frac{\theta_{max}}{\theta_0}\rceil$;~~$q\gets \ln \frac{h+2}{\delta^\prime t_{\max}}$\;\label{ln:initial}
 		$\theta_0\gets \frac{4n\Gamma(2+\varrho/3)}{\varrho^2 B_{min}}\ln \frac{h}{\delta^\prime}$;~$t_{\max}\gets \lceil \log_2 \frac{\theta_{max}}{\theta_0}\rceil$;~~$q\gets \ln \frac{h+2}{\delta^\prime t_{\max}}$\;\label{ln:initial}
	generate two sets $\mathcal{R}_1$ and  $\mathcal{R}_2$ of random RR sets, with $|\mathcal{R}_1|=|\mathcal{R}_2|=\theta_0$\;
	\While{true}{
		$\vec{S}^*, (\vec{T}^*_1, b_1,\gamma_1), (\vec{T}^*_2, b_2,\gamma_2)\leftarrow\mathsf{RM\_with\_Oracle}(\tau)$ with function $\pi_i(\cdot)$ replaced by $\widetilde{\pi}_i(\cdot,\mathcal{R}_1)$ and $B_i$ replaced by $(1+\frac{\varrho}{2})B_i$ for all $i\in [h]$\; \label{ln:callrmwithoracle}
		$z\gets \mathsf{SeekUB}(\vec{S}^*, \vec{T}^*_1, b_1, \gamma_1, \vec{T}^*_2, b_2, \gamma_2, b_{min}, \lambda, \mathcal{R}_1)$\;\label{ln:upperofoptimum}
		$\mathit{Feasible}\gets \mathrm{True};~B_{min}\gets \min\{B_i\colon i\in [h]\}$\;
		\ForEach{$i\in [h]$ \label{ln:starttestfeasible}}{
			$\mathit{UB}(S_i^*)\gets\bigg(\sqrt{ \frac{\widetilde{\pi}_i({S}_i^*,\mathcal{R}_2)|\mathcal{R}_2|}{n\Gamma} +\frac{q}{2}}+\sqrt{\frac{q}{2}}\bigg)^{2}\cdot\frac{n\Gamma}{|\mathcal{R}_2|}$\;
			\lIf {$\mathit{UB}(S_i^*)>(1+\varrho)B_i-c_i(S_i^*)$}{$\mathit{Feasible}\gets \mathrm{False}$\label{ln:endtestfeasible}}
		}
		$\mathit{LB}(\vec{S}^*)\gets\bigg(\Big(\sqrt{\frac{\widetilde{\pi}(\vec{S}^*,\mathcal{R}_2)|\mathcal{R}_2|}{n\Gamma}+\frac{2q}{9}}-\sqrt{\frac{q}{2}}\Big)^{2}-\frac{q}{18}\bigg)\cdot\frac{n\Gamma}{|\mathcal{R}_{2}|}$\; \label{ln:derivelbstar}
		$\mathit{UB}(\vec{O})\gets\bigg(\sqrt{\frac{z|\mathcal{R}_1|}{n\Gamma}+\frac{q}{2}}+\sqrt{\frac{q}{2}}\bigg)^{2}\cdot\frac{n\Gamma}{|\mathcal{R}_1|}$\;\label{ln:deriveubo}
		$\beta \gets \mathit{LB}(\vec{S}^*)/\mathit{UB}(\vec{O})$\;
		\lIf{$(\beta \geq \lambda -\epsilon \wedge \mathit{Feasible})\vee |\mathcal{R}_1|\geq \theta_{max}$ \label{ln:judgeappratio}}{\Return $\vec{S}^*$}
		double the sizes of $\mathcal{R}_1$ and $\mathcal{R}_2$ with new random RR sets\;\label{ln:doublerrsets}
	}
	\caption{$\mathsf{RM\_without\_Oracle}(\epsilon, \delta, \tau, \varrho)$}
	\label{alg:rmwithoutoracle}
\end{algorithm}

\vspace{-2ex}

\subsection{Progressive Sampling} \label{sec:progressivesampling}

Theorem~\ref{thm:upperboundofrrsets} implies that $\theta_{max}$ is an upper  bound on the required number of RR-sets
%needed to be generated to achieve
for guaranteed performance. In this section, we propose a progressive sampling algorithm in Algorithm~\ref{alg:rmwithoutoracle} that generates fewer RR-sets in practice without compromising the performance guarantee. The design of Algorithm~\ref{alg:rmwithoutoracle} is similar in spirit to the OPIM-C framework in~\cite{Tang2018} for the  influence maximization problem, but it involves more complex operations as RM problem has more stringent requirements on bounding the sampling errors.

Instead of generating $\theta_{max}$ RR-sets in one batch, Algorithm~\ref{alg:rmwithoutoracle} first generates two sets of RR-sets (i.e., $\mathcal{R}_1$ and $\mathcal{R}_2$) with $|\mathcal{R}_1|=|\mathcal{R}_2| =\theta_0$, where $\theta_0$ is much smaller than $\theta_{max}$. It then uses $\mathcal{R}_1$ as the input to the one-batch algorithm to find a solution $\vec{S}^*$ (Line~\ref{ln:callrmwithoracle}). Afterwards, it tests whether Conditions (i)--(iii) listed in Section~\ref{sec:onebatchsampling} can be satisfied by $\vec{S^*}$, $\R_1$ and $\R_2$ with high probability. Specifically, Line~\ref{ln:initial} ensures that $|\R_1|$ is sufficiently large such that Condition (i) can be satisfied; Lines~\ref{ln:starttestfeasible}--\ref{ln:endtestfeasible} check whether Condition (ii) can be satisfied, where $\UB(S_i^*)$ is an upper bound of $\pi_i(S_i^*)$ computed using the concentration bounds; Line~\ref{ln:judgeappratio} checks whether Condition (iii) is satisfied (i.e., whether $\LB(\vec{S}^*)/\UB(\vec{O}) \geq \lambda-\epsilon$), where $\LB(\vec{S}^*)$ and $\UB(\vec{O})$ are lower bound and upper bound of $\pi(\vec{S}^*)$ and $\pi(\vec{O})$ with high probability, respectively. When all the three conditions are satisfied, the solution $\vec{S}^*$ is returned immediately. Otherwise, the algorithm doubles the sizes of $\mathcal{R}_1$ and $\mathcal{R}_2$ and repeats the above process until a satisfying solution is returned or the number of generated RR-sets reaches $\theta_{max}$ (Lines~\ref{ln:judgeappratio}--\ref{ln:doublerrsets}). \brn{Although $\mathsf{RM\_without\_Oracle}$ may theoretically generate $\theta_{max}$ RR-sets in the worst case, our experimental results in Section~\ref{sec:pe} show that it runs very fast in practice.}

\brn{
A key challenge in Algorithm~\ref{alg:rmwithoutoracle} is that we need to make the upper bound $\mathit{UB}(\vec{O})$ and lower bound $\mathit{LB}(\vec{S}^*)$ as tight as possible, so that the condition in Line~\ref{ln:judgeappratio} can be met more easily, making the algorithm  more likely to generate  fewer RR-sets and stop early.}
%A key challenge in Algorithm~\ref{alg:rmwithoutoracle} is that we need to make the upper bound $\mathit{UB}(\vec{O})$ and the lower bound $\mathit{LB}(\vec{S}^*)$ as tight as possible, because loose bounds could cause the effect that the stopping condition in Line~\ref{ln:judgeappratio} cannot be met, hence more RR-sets would be generated.
\eat{
\note[Laks]{I agree tight bounds are important, but if you keep doubling the size of $\mathcal{R}_1$, it will eventually exceed $\theta_{max}$ and the algorithm will stop.}
\vlt{Please see the brown text revised above. This concern also suits for the OPIM-C framework in Jing's SIGMOD'18.}
}

\brn{
Although it is relatively easy to get $\mathit{LB}(\vec{S}^*)$ based on concentration bounds, finding a tight $\mathit{UB}(\vec{O})$ is non-trivial as $\vec{O}$ is unknown. To address this problem, Algorithm~\ref{alg:rmwithoutoracle} calls the $\mathsf{SeekUB}$ function to find an upper bound $z$ of $\widetilde{\pi}(\vec{O}, \mathcal{R}_1)$ (Line~\ref{ln:upperofoptimum}), which is further used to derive $\mathit{UB}(\vec{O})$ based on concentration bounds (Line~\ref{ln:deriveubo}). The $\mathsf{SeekUB}$ function adopts a novel method to find a tight upper bound of $\widetilde{\pi}(\vec{O}, \mathcal{R}_1)$ based on the special binary-search process of the $\mathsf{Search}$ algorithm. More specifically, as $\widetilde{\pi}(\cdot, \R_1)$ is also a monotone submodular function,  $\mathsf{ThresholdGreedy}$ called by the $\mathsf{Search}$ algorithm also satisfies Theorem~\ref{thm:boundofthresholdgreedy} in the sampling space with $\OPT$ replaced by $\widetilde{\pi}(\vec{O}, \mathcal{R}_1)$, which can be used for $\mathsf{SeekUB}$ to derive an upper bound of $\widetilde{\pi}(\vec{O}, \mathcal{R}_1)$. For example, when the $\mathsf{Search}$ algorithm returns $(\vec{T}_2^*, b_2,\gamma_2)$ with $b_2=0$, we can know from Theorem~\ref{thm:boundofthresholdgreedy} that $\widetilde{\pi}(\vec{T}_2^*,\R_1)\geq \frac{1}{2}(\widetilde{\pi}(\vec{O}, \mathcal{R}_1)-h\cdot\gamma_2)$, so $2\widetilde{\pi}(\vec{T}_2^*,\mathcal{R}_1)+h\gamma_2$ is an upper bound of $\widetilde{\pi}(\vec{O},\R_1)$ and it could be tighter than the naive upper bound of $\wpi(\vec{S}^*,\R_1)/\lambda$. Based on all the methods decribed above, we can get the following theorem:}

%\hl{More detailed analysis on  Algorithm~\ref{alg:rmwithoutoracle} and the $\mathsf{SeekUB}$ function can be found in the proof of Theorem~\ref{thm:highprobratio}}:

%Although it is relatively easy to get $\mathit{LB}(\vec{S}^*)$ based on concentration bounds, finding a tight $\mathit{UB}(\vec{O})$ is non-trivial as $\vec{O}$ is unknown. To address this problem, Algorithm~\ref{alg:rmwithoutoracle} calls the $\mathsf{SeekUB}$ function to find an upper bound $z$ of $\widetilde{\pi}(\vec{O}, \mathcal{R}_1)$ (Line~\ref{ln:upperofoptimum}), which is further used to derive $\mathit{UB}(\vec{O})$ based on concentration bounds (Line~\ref{ln:deriveubo}). The $\mathsf{SeekUB}$ function adopts a novel method to find a tight upper bound of $\widetilde{\pi}(\vec{O}, \mathcal{R}_1)$ based on the special binary-search process of the $\mathsf{Search}$ algorithm. More specifically, as the solutions $\vec{T}^*_1$ and $\vec{T}^*_2$ returned by $\mathsf{Search}$ satisfy some quantitative relation to $\mathrm{OPT}$ (as shown in Theorem~\ref{thm:boundofthresholdgreedy}), it is possible for $\mathsf{SeekUB}$ to use them to find an upper bound of $\widetilde{\pi}(\vec{O}, \mathcal{R}_1)$ which is tighter than the naive upper bound $\wpi(\vec{S}^*,\R_1)/\lambda$. \hl{More detailed analysis on  Algorithm~\ref{alg:rmwithoutoracle} and the $\mathsf{SeekUB}$ function can be found in the proof of Theorem~\ref{thm:highprobratio}}:

%\brn{For example, when the $\mathsf{Search}$ algorithm returns $(\vec{T}^*_2, b_2,\gamma_2)$ with $b_2=0$ }

\vspace{-1ex}
\begin{theorem}
$\mathsf{RM\_without\_Oracle}(\epsilon, \delta, \tau, \theta_0, \theta_{max})$ returns a solution $\vec{S}^*$ satisfying  $c_i(S_i^*)+\pi_i(S_i^*)\leq (1+\varrho)B_i$ for all $i\in [h]$ and $\pi(\vec{S}^*)\geq (\lambda-\epsilon)\mathrm{OPT}$ with probability  at least $1-\delta$ for any $\delta \in (0,1)$, where~$\lambda$ is the approximation ratio shown in Theorem~\ref{thm:finalapproxratio}.
\label{thm:highprobratio}
\end{theorem}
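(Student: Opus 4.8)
The plan is to bootstrap from the one-batch guarantee of Theorem~\ref{thm:upperboundofrrsets} and control the error introduced by the adaptive doubling loop via a union bound, splitting the failure budget $\delta$ into four shares of $\delta'=\delta/4$. First I would fix the ``good event'' $\mathcal{E}$ asking that, throughout all $t_{\max}=\lceil\log_2(\theta_{max}/\theta_0)\rceil$ iterations of the \textbf{while} loop, the following hold simultaneously: (a) the optimal solution is budget-feasible in the sampling space, $c_i(O_i)+\wpi_i(O_i,\R_1)\le(1+\varrho/2)B_i$ for every $i\in[h]$; (b) every $\UB(S_i^*)$ computed from $\R_2$ upper-bounds $\pi_i(S_i^*)$; (c) $\LB(\vec{S}^*)$ computed from $\R_2$ lower-bounds $\pi(\vec{S}^*)$; and (d) $\UB(\vec{O})$ computed from $z$ and $\R_1$ upper-bounds $\OPT$. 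Because $|\R_1|$ never shrinks, the initialization $\theta_0=\frac{4n\Gamma(2+\varrho/3)}{\varrho^2 B_{min}}\ln\frac{h}{\delta'}$ suffices to make (a) hold with probability $\ge 1-\delta'$ for the single fixed solution $\vec{O}$ (no union bound over solutions is needed here, which is why the $\mu\ln(\mathrm{e}n/\mu)$ term absent from $\theta_0$ is not missed), and the choice $q=\ln\frac{h+2}{\delta' t_{\max}}$ is calibrated so that the $h+2$ martingale Chernoff bounds of the form $(\sqrt{\cdot}\pm\sqrt{\cdot})^2$ checked per iteration each fail with small enough probability that a union bound over the $\le t_{\max}$ iterations keeps the total failure of (b)--(d) within the remaining shares. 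I would thus obtain $\Pr[\mathcal{E}]\ge 1-3\delta/4$.

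Next I would establish the correctness of $\mathsf{SeekUB}$, i.e.\ that the value $z$ it returns satisfies $z\ge\wpi(\vec{O},\R_1)$, which is the only nontrivial ingredient of event~(d). The key observation is that $\wpi(\cdot,\R_1)$ is itself monotone and submodular, so running $\mathsf{Search}$ with $\pi_i$ replaced by $\wpi_i(\cdot,\R_1)$ makes $\mathsf{ThresholdGreedy}$ obey Theorem~\ref{thm:boundofthresholdgreedy} verbatim in the sampling space, with $\OPT$ replaced by $\wpi(\vec{O},\R_1)$ --- this uses part~(a) of $\mathcal{E}$ to guarantee $\vec{O}$ is a valid competitor for the relaxed budgets $(1+\varrho/2)B_i$. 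Depending on which of $b_1,b_2\in\{0,1,2\}$ the binary search terminated with, one inverts the corresponding inequality: e.g.\ $b_2=0$ yields $\wpi(\vec{T}^*_2,\R_1)\ge\frac12(\wpi(\vec{O},\R_1)-h\gamma_2)$, hence $\wpi(\vec{O},\R_1)\le 2\wpi(\vec{T}^*_2,\R_1)+h\gamma_2$, and analogously for $b=1,2$; taking the value for the realized branch (and never worse than the trivial $\wpi(\vec{S}^*,\R_1)/\lambda$) gives $z$. Feeding $z$ through the concentration inequality in Line~\ref{ln:deriveubo} then yields $\UB(\vec{O})\ge\OPT$ with the prescribed probability.

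Finally, conditioning on $\mathcal{E}$, I would analyze termination of the loop. If the algorithm returns at Line~\ref{ln:judgeappratio} because $\beta\ge\lambda-\epsilon$ and $\mathit{Feasible}$ holds, then for each $i$ the feasibility test gives $\UB(S_i^*)\le(1+\varrho)B_i-c_i(S_i^*)$, and $\pi_i(S_i^*)\le\UB(S_i^*)$ by~(b), so $c_i(S_i^*)+\pi_i(S_i^*)\le(1+\varrho)B_i$; moreover $\pi(\vec{S}^*)\ge\LB(\vec{S}^*)\ge\beta\cdot\UB(\vec{O})\ge(\lambda-\epsilon)\OPT$ by~(c) and~(d). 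If instead it returns because $|\R_1|\ge\theta_{max}$, then $\R_1$ is a one-batch sample of size $\ge\theta_{max}$ computed with $\delta'$, so Theorem~\ref{thm:upperboundofrrsets} directly certifies that the $\vec{S}^*$ produced in Line~\ref{ln:callrmwithoracle} is $(1+\varrho)$-budget-feasible and $(\lambda-\epsilon)$-approximate with probability $\ge 1-\delta'$; this is the fourth $\delta/4$ share. A union bound over $\mathcal{E}$ and this last event gives the overall $1-\delta$ guarantee, with $\lambda$ as in Theorem~\ref{thm:finalapproxratio}.

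The hard part will be the concentration analysis behind parts (b)--(d) of $\mathcal{E}$: because $\vec{S}^*$ and the certificates $\vec{T}^*_1,\vec{T}^*_2,b_1,b_2,\gamma_1,\gamma_2$ are data-dependent functions of $\R_1$, one cannot apply a plain Chernoff bound to a fixed set but must either evaluate on the independent sample $\R_2$ while union-bounding over the $n^{O(\mu)}$ candidate allocations of bounded size (the source of the $\mu\ln(\mathrm{e}n/\mu)$ terms in $\theta_{max}$), or invoke a martingale stopping-time argument across the doubling schedule --- and the $\mathsf{SeekUB}$ bound must be shown tight enough that Line~\ref{ln:judgeappratio} can fire well before $|\R_1|$ reaches $\theta_{max}$, which is what makes the progressive scheme worthwhile. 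Reconciling the two budget levels --- searching under $(1+\varrho/2)B_i$ but certifying feasibility only under $(1+\varrho)B_i$ --- also needs care, since the slack $\varrho/2$ must absorb the two-sided sampling error between $\wpi_i(S_i^*,\R_1)$ and $\pi_i(S_i^*)$.
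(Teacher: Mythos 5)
Your proposal is correct and follows essentially the same route as the paper's proof: the same decomposition into the events (optimal solution budget-feasible in the sampling space, $\mathit{UB}(S_i^*)$/$\mathit{LB}(\vec{S}^*)$/$\mathit{UB}(\vec{O})$ valid), the same correctness argument for $\mathsf{SeekUB}$ via Theorem~\ref{thm:boundofthresholdgreedy} applied to the submodular function $\wpi(\cdot,\R_1)$, the same invocation of Theorem~\ref{thm:upperboundofrrsets} for the $|\R_1|\geq\theta_{max}$ exit, and the same $\delta/4$ budgeting with a union bound over the $t_{\max}$ iterations. The only cosmetic difference is that the paper tracks the per-iteration failure of event (a) explicitly as $\delta'^{2^{t-1}}$ (exploiting the doubling sample size) and sums the series to $\delta'/(1-\delta')$, whereas you assert it more loosely; your overall accounting still lands on the same $1-\delta$ guarantee.
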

\vspace{-1ex}

\brn{\ca{The intuition is that as in the proof of Theorem~\ref{thm:upperboundofrrsets}, the proof of Theorem~\ref{thm:highprobratio} similarly shows that the conditions (i)--(iii) given in Section~\ref{sec:onebatchsampling} can be satisfied by $\mathsf{RM\_without\_Oracle}$ with high probability.} \eat{The proof idea of Theorem~\ref{thm:highprobratio} is similar to that of Theorem~\ref{thm:upperboundofrrsets}, as it also tries to show that conditions (i)--(iii) that we have mentioned in Section~\ref{sec:onebatchsampling} can be satisfied by $\mathsf{RM\_without\_Oracle}$ with high probability.} The major difference is that, \ca{since} $\mathsf{RM\_without\_Oracle}$ adopts a ``trial-and-error'' approach and returns an approximate solution immediately in each trial if it judges that the current solution has already satisfied the performance guarantee, we need to show that the total probability \ca{of $\mathsf{RM\_without\_Oracle}$ making wrong judgements in all the trials is no more than $\delta$}. We  prove this by using concentration bounds.} \eat{Due to space constraint, we provide the detailed proof in \eat{Appendix~\ref{sec:othermissingproofs} of} \cite{RMA_report}.}
%\toblue{Similarly, we also provide the time complexity analysis for RMA and the existing solutions~\cite{Aslay2017} in  Appendix~\ref{sec:timecompana}.}

\toblue{\spara{Time complexity} We provide the theoretical time complexity of $\mathsf{RM\_without\_Oracle}$ (RMA), and of the algorithms of \cite{Aslay2017},  left open in \cite{Aslay2017}. RMA has an expected time complexity of $O\big(\frac{m \sum_{i\in[h]}\E[\pi_i(\{v^\ast\}) (\ln \frac{1}{\delta}+n\ln h)}{\epsilon^2 B_{min}}\big)$, where $v^\ast$ denotes a random node selected from $V$ with probability proportional to its in-degree. Both algorithms of \cite{Aslay2017} on the other hand have a time complexity of
$O\big(\frac{n (1 + \frac{\ln 1/\delta}{\ln n})(m + n) \ln n}{\epsilon^2})$. These results show that the running time of RMA is dominated by the factor $mn$ while the running time of the algorithms of \cite{Aslay2017} are dominated by the factor $n(m + n)$, translating to the superiority of the RMA algorithm in terms of asymptotic worst-case running time.} %The proofs can be found in  \cite{RMA_report}.

%\note[Laks]{Given this "ad", shouldn't we at least sketch the proof here, even if details are deferred to the TR?}

%\note[cigdem]{following up on Laks' comment, I think we could carry the statement of Lemma~\ref{lma:correctubofopt} to here, but we'd probably need to state the lemma with some confidence, i.e., "The function $\mathsf{SeekUB}$ can return a correct upper bound of $\widetilde{\pi}(\vec{O}, \mathcal{R}_1)$ \ca{with probability at least ??}"}

%\vlt{Please see the revised brown text above. SeekUB actually returns a correct upper bound with probability 1, as it does not compute an upper bound of the ``real'' revenue of $O$.   -Kai}

\begin{algorithm} [t]
    \lIf{$h=1$}{\Return $\widetilde{\pi}(\vec{S}^*,\mathcal{R}_1)/\lambda$}
    %${Z}\gets \{(1+\tau)\widetilde{\pi}(\vec{S}^*)/\rho\}$;\\
    \lIf{$b_1<b_{min}$}{
        $z\gets 6\widetilde{\pi}(\vec{T}_2^*,\mathcal{R}_1)$}
    \If{$b_1\geq b_{min}\wedge \vec{T}^*_2\neq \emptyset$}{
        \lIf{$b_{2}=0$}{$z\gets 2\widetilde{\pi}(\vec{T}_2^*,\mathcal{R}_1)+h\gamma_2$} \label{ln:setupperbound1}
        \lIf{$b_{2}=1$}{$z\gets 6\widetilde{\pi}(\vec{T}_2^*,\mathcal{R}_1)+h\gamma_2$} \label{ln:setupperbound2}
    }
    \lIf{$b_1\geq b_{min}\wedge \vec{T}^*_2=\emptyset$}{
        $z\gets \widetilde{\pi}(\vec{T}_1^*,\mathcal{R}_1)/\lambda$}
    \Return $\min\{z, \widetilde{\pi}(\vec{S}^*,\mathcal{R}_1)/\lambda\}$\;
    \caption{$\mathsf{SeekUB}(\vec{S}^*, \vec{T}^*_1, b_1, \gamma_1, \vec{T}^*_2, b_2, \gamma_2, b_{min}, \lambda,\mathcal{R}_1)$}
    \label{alg:seekub}
\end{algorithm}

\revise{
\vspace{1mm}
\spara{Discussion} We note that although Algorithm~\ref{alg:rmwithoutoracle} borrows some ideas from the OPIM-C framework~\cite{Tang2018}, it embodies necessary and nontrivial extensions (e.g., the sampling method in Sec.~\ref{sec:newmethodforrrsets}) because OPIM-C was originally designed for the simpler Influence Maximization (IM) problem. We also note that a recent study~\cite{Guo_IM_2020} on the IM problem also used the OPIM-C framework, but it presented an algorithm, dubbed SUBSIM, to accelerate the generation of a single RR-set, which can also be used by RMA.} %We discuss how SUBSIM can impact the performance of our algorithms in \cite{RMA_report}.
~\toblue{Another useful extension for improving the empirical performance of RMA is as follows. Before RMA returns a solution $\vec{S}^*$, check  whether the ratio of $\widetilde{\pi}(\vec{S}^*,\R_2)$ to $\widetilde{\pi}(\vec{S}^*,\R_1)$ is too small (e.g., less than 80\%); if so, then  generate more RR-sets to enlarge the sizes of $\R_1$ and $\R_2$ by a constant factor  (e.g., 15$\times$), and then repeat the solution-seeking process as before to find a new solution $\vec{S}^c$ using the new collections of RR-sets $\R_1$ and $\R_2$. Finally, return $\vec{S}^c$
%instead of $\vec{S}^*$
if it also satisfies the stopping condition in Line~\ref{ln:judgeappratio}, and otherwise return $\vec{S}^*$. Clearly,  such an extension does not affect the  theoretical performance bound of RMA, and it enables RMA to potentially output a solution with an empirical  (instance-dependent) approximation ratio better than $\lambda-\epsilon$ (and also minimize biases), by possibly generating more RR-sets.}

%OPIM-C [52], which isn’t our major contribution. However such extension is necessary and nontrivial as OPIM-C was designed for the simpler IM problem. [31] adopts the same framework as [52] except SUBSIM is used to generate a single RR-set. The key steps in our new sampling method are 1)sample a random advertiser i w/ prob proportional to cpe(i), and 2)generate a random RR-set for advertiser i. We can speed up 2) using SUBSIM, utilizing advanced subset sampling. We’ll incorporate this suggestion.}

%\documentclass{vldb}
%% \usepackage{geometry}
%% \geometry{left=3.18cm,right=3.18cm,top=2.54cm,bottom=2.54cm}
%\usepackage{times}
%\usepackage{helvet}
%\usepackage{courier}
%\usepackage{cite}
%\usepackage[vlined,linesnumbered,ruled,norelsize]{algorithm2e}
%% \usepackage{subfig}
%\usepackage{graphicx}
%\usepackage{color}
%\usepackage{amsfonts}
%\usepackage{amsmath}
%\usepackage{amssymb}
%\usepackage{url}
%\usepackage{color}
%\usepackage{caption}
%\usepackage{array}
%\usepackage{makecell}
%\usepackage{multirow}
%\usepackage{indentfirst}
%\usepackage{enumerate}
%\usepackage{subfigure}
%\usepackage{pgfplots}
% \usepackage{fontspec}

% \vldbTitle{Clustering Uncertain Graphs}

%\begin{document}

% \title{Clustering Uncertain Graphs}

% \maketitle

\vspace{-1ex}
\section{Performance Evaluation} \label{sec:pe}
In this section, we compare the performance of our algorithm with the state-of-the-art algorithms proposed in~\cite{Aslay2017}.  The performance of considered algorithms is evaluated in terms of their revenue, seeding costs and running time. All algorithms are implemented using C++ and all experiments are run on a Linux server with Intel Xeon 2.20GHz CPU and 192GB memory.

		% Datasets

\vspace{-2ex}
\subsection{Experimental Setting} \label{sec:expsetting}

% \vspace{1ex}
\spara{Datasets} We use several public datasets in our experiments. Flixster~\cite{Aslay2017} is from a social movie rating network website (www.flixster.com), where each node represents a user and two users are connected by a directed edge if they are friends, both rating the same movies. LastFM~\cite{barbieri2014influence} is a social network where people can specify their interests on music types and make friends. %Diggs is a social network constructed by friends rating and ranking news stories.
As both Flixster and LastFM have action logs that record users' activities of rating movies or music (i.e., “a log of past propagation” in \cite{barbieri2012topic}), we use the method provided in ~\cite{barbieri2012topic} to learn the topic-dependent influence probabilities (i.e., $\hat{p}_{u,v}^z$) on each edge $(u,v)\in E$. We also follow the settings in~\cite{Aslay2017} to set the default numbers $L=10$ and $h=10$ for the Flixster and LastFM datasets, and use the same topic distributions as that in~\cite{Aslay2017} for the Flixster dataset. The topic distributions used for LastFM are learned from its action logs. \revise{As a result, more than $95\%$ (resp. $77\%$) of the influence probabilities generated for Flixster (resp. LastFM) are positive.}

\begin{table}
\centering
%\captionsetup{margin=20pt,font=small,labelfont=bf}
\caption{Datasets}
\vspace{-4mm}
% \resizebox{0.35\textwidth}{8mm}{
 \setlength{\tabcolsep}{4.0mm}{
\begin{tabular}{|c|c|c|c|}
\hline
 \bf{Dataset}&  \bf{$|V|$} &  \bf{$|E|$} & \bf{Type}   \\ \hline
{Lastfm}&  1.3K&  14.7K&  directed \\ \hline
 {{Flixster}}&  30K& 425K&  directed \\ \hline
 {{DBLP}}&  317K& 1.05M&  undirected \\ \hline
{{LiveJournal}}&  4.8M& 69M &  directed\\ \hline
\end{tabular}}
\label{tab:datasets}
\vspace{-3ex}
\end{table}

\begin{table}
%\captionsetup{margin=20pt,font={small,bf}, labelfont=bf}
\centering
\caption{Advertiser budgets and CPE values}
\vspace{-4mm}
 %\resizebox{0.45\textwidth}{10mm}{
 \setlength{\tabcolsep}{1.5mm}{
\begin{tabular}{|c|c|c|c|c|c|c|}
\hline
 \multirow{2}{*}{\bf{Dataset}}&  \multicolumn{3}{c|}{\bf{Budgets}}&\multicolumn{3}{c|}{ \bf{CPEs}}\cr
 \cline{2-7}
  &mean&max&min &mean&max&min\cr
 \hline
{Lastfm} &320&1200&100&1.5&2&1\\
 \hline
{{Flixster}}&10.1K&20K&6K&1.5&2&1\\
 \hline
%{{DBLP}}&10K&10K&10K&1&1&1\\
% \hline
%{{LiveJournal}}&100K&100K&100K&1&1&1\\
% \hline
\end{tabular}}
\label{tab:budgets}
\vspace{-1ex}
\end{table}

%in Flixster, we randomly select five of the learned topic distributions so that every two ads are in pure competition (i.e., have the same topic distribution with probability 0.91 in one randomly selected latent topic, and 0.01 in all others). At the same time, we randomly select ten of the learned topic distributions in Lastfm.

We also use the DBLP and LiveJournal Datasets to test the scalability of the implemented algorithms. DBLP~\cite{leskovec2014snap} is a collaboration network where each node represents an author and co-authors are adjacent in the network. LiveJournal~\cite{leskovec2014snap} is a free on-line blogging community where users declare friendship with each other. These two datasets are also used in~\cite{Aslay2017}. The details of our datasets used in the experiments are listed in Table~\ref{tab:datasets}. %As DBLP and LiveJournal have no action logs,
\spara{Seed Incentive Models} \revise{Similar to~\cite{Aslay2017}, we use three seed incentive models (i.e., node seeding cost models) in the experiments. Given a fixed constant $\alpha>0$ and any pair $(u,i)\in V\times [h]$, these models set the cost of node $u$ for advertiser $i$ as follows:}

%In order to compare the performance of the algorithm in different cost payment models, as with the settings of \cite{aslay2017revenue}, we use three different seed incentive models.

%\begin{itemize}
%\item {Linear incentive model}: the cost of $u$ is proportional to its influence spread, i.e., $c_{i}(u)=\alpha \cdot\sigma_{i}(\{u\})$.
%\item {Uniform incentive model}: the cost of $u$ is proportional to the average influence spread of all nodes, i.e., $c_{i}(u)=\alpha \cdot \sum_{v \in V} \sigma_{i}(\{v\}) / n$
%\item {Sublinear incentive model}: the cost of $u$ is proportional to the logarithm of its influence spread. i.e., $c_{i}(u)=\alpha \cdot \ln\left(\sigma_{i}(\{u\})\right)$.
%\end{itemize}

\revise{
\begin{itemize}
	\item {Linear incentive model}: the cost of $u$ is proportional to its influence spread, i.e., $c_{i}(u)=\alpha \cdot\sigma_{i}(\{u\})$.
	\item {QuasiLinear incentive model}: the cost of $u$ is a quasi-linear function, i.e., $c_{i}(u)=\alpha \cdot \sigma_{i}(\{u\})\ln\left(\sigma_{i}(\{u\})\right)$. %where $\sigma_{i}(\{u\})$ is its influence spread.
\item {SuperLinear incentive model}: the cost of $u$ is a quadratic function of its influence spread. i.e., $c_{i}(u)=\alpha \cdot (\sigma_{i}(\{u\}))^2$.
\end{itemize}}

\vspace{-1mm}
\spara{Baseline Algorithms} As mentioned in Section~\ref{sec:prelim-existing}, only Aslay et al.~\cite{Aslay2017} have addressed the revenue maximization problem considered in this paper \ca{and proposed TI-CSRM and TI-CARM that can be implemented in practice.}
	    %, and only two algorithms (i.e., TI-CSRM and TI-CARM) proposed in~\cite{Aslay2017} can be implemented in practice.
Therefore, we use  TI-CSRM and TI-CARM to compare with our algorithm $\mathsf{RM\_without\_Oracle}$ (RMA for short) (Section~\ref{sec:progressivesampling}).
	    %. We also implement our $\mathsf{RM\_without\_Oracle}$ algorithm described in Sec.~\ref{sec:progressivesampling}, hereafter referred as {RMA} for brevity.
\eat{\toblue{In our implementation, we also employ a ``boosting method'' to improve the empirical performance of RMA, as described in the following. Before RMA returns a solution $\vec{S}^*$, it checks whether the ratio of $\widetilde{\pi}(\vec{S}^*,\R_2)$ to $\widetilde{\pi}(\vec{S}^*,\R_1)$ is too small (e.g., less than 85\%); if yes, then RMA generates more RR-sets to enlarge the sizes of $\R_1$ and $\R_2$ by a constant times (e.g., 15 times), and then repeats the solution-seeking process as before to find a new solution $\vec{S}^c$ using the new collections of RR-sets $\R_1$ and $\R_2$. Finally, $\vec{S}^c$ is returned instead of $\vec{S}^*$ if it also satisfies the stopping condition in Line~\ref{ln:judgeappratio} (otherwise $\vec{S}^*$ is returned). It can be easily seen that such a boosting method does not affect any theoretical performance bound of RMA, but it makes it possible to output a solution with a practical (instance-dependent) approximation ratio better than $\lambda-\epsilon$ (and also avoid large biases), by generating more RR-sets.}}

% \vspace{1ex}

\spara{Parameter Settings} In all the experiments, we set $\epsilon=0.02$, $\delta=1/n$, $\varrho=0.1$ and $\tau=0.1$ for our RMA algorithm unless otherwise stated, \revise{so RMA always returns an approximate solution with non-negative approximation ratios, according to Theorem~\ref{thm:highprobratio}}. However, TI-CSRM and TI-CARM algorithms cannot terminate successfully on all the four datasets due to memory issues and high running time when  $\epsilon$ is set to $0.02$, where $\epsilon$ is the parameter in Eqn.~\eqref{eqn:boundforsampledgreedy}. Therefore, we follow the same setting in~\cite{Aslay2017} to set their parameter $\epsilon=0.1$ for the {Flixster} and {LastFM} datasets, and set $\epsilon=0.3$ for {DBLP} and {LiveJournal}. %\revise{Besides the experiments conducted using the above implicit parameter settings, we have also conducted experiments to study the impact of parameters $\epsilon,\rho$ and $\tau$, and the results can be found in the sequel and the Appendix.
\revise{
In all experiments, for fair comparison, we set the budget input to each advertiser in TI-CSRM and TI-CARM to $(1+\varrho) \times$ %times as large as
the budget to the same advertiser in the RMA algorithm, due to the consideration that RMA is a bi-criteria approximation algorithm. With this implicit rule, we will only %claim
cite the budget setting of TI-CSRM and TI-CARM in the experiments. Note that this budget setting is practically equivalent to the setting that the budget used by RMA is $(1+\varrho)^{-1}$ fraction of that used by TI-CSRM and TI-CARM.} \toblue{In all our experiments, we measure the revenue of the implemented algorithms by using $10^7$ RR-sets, generated independently of the considered algorithms. Alternatively, we could measure the revenue using (the much slower) Monte-Carlo simulations, but we found in the experiments that the accuracy of these methods does not have a noticeable difference as long as the samples used are independent of the considered algorithms.}
%, while Monte-Carlo simulations are much more slower, making it inappropriate for large datasets.}
%
\revise{Besides conducting experiments under the implicit parameter settings described above, we will also study the impact of varying these parameters in Sec.~\ref{sec:impactofpara}. }

%\noindent \textbf{Parameter Settings:} In all the experiments, we set $\epsilon=0.02$, $\delta=1/n$, $\varrho=0.1$ and $\tau=0.1$ for our RMA algorithm, so RMA always returns an approximate solution with non-negative approximation ratios according to Section~\ref{sec:rmwithoutoracle}. However, TI-CSRM and TI-CARM algorithms cannot terminate successfully on all the four datasets due to memory issues and high running time under the setting of $\epsilon=0.02$, where $\epsilon$ is the parameter in Eqn.~\eqref{eqn:boundforsampledgreedy}. Therefore, we follow the same setting in~\cite{Aslay2017} to set their parameter $\epsilon=0.1$ for the {Flixster} and {Lastfm} datasets, and set $\epsilon=0.3$ for {DBLP} and {LiveJournal}. Finally, in all experiments, \laks{for fair comparison, we set} the budget input to any advertiser in TI-CSRM and TI-CARM to $(1+\varrho) \times$ %times as large as
%the budget to the same advertiser in the RMA algorithm, due to the consideration that RMA is a bi-criteria approximation algorithm. With this implicit rule, we will only %claim
%cite the budget setting of TI-CSRM and TI-CARM in the experiments.

\eat{
Note that these parameter settings are in favor of TI-CSRM and TI-CARM due to the following reasons. \redc{First, TI-CSRM and TI-CARM do not have a performance guarantee respecting budget-feasibility in practice (see Sec.~\ref{sec:prelim-existing}), while RMA must achieve the provable performance bounds shown in Theorem~\ref{thm:highprobratio}.} Second, TI-CSRM and TI-CARM are granted larger budgets than RMA in the experiments, which is propitious to their revenue performance.
}

\eat{\note[Laks]{In view of the fixes recently applied to the proofs, I have removed the criticism that was here and have rephrased it. BTW, the claim that the budget input to RMA is $(1+\varrho)$ times smaller than for TI-CSRM is misleading since RMA, being a bicriteria approximation algorithm, is allowed to exceed its budget by the same factor, making the field level. }
%On all datasets, the parameter $\epsilon$, $\delta$, $\varrho$ and $\tau$ of our OPIM  are set to 0.02, $1/|V|$, 0.1 and 0.1 respectively. Due to the large running time and memory usage, comparison algorithms are difficult to run under $\epsilon=0.02$ . Therefore, as with the setting used in \cite{aslay2017revenue}, we set the parameter $\epsilon$ of comparison algorithms to be 0.1 on \texttt{Flixster} and \texttt{Lastfm}, and 0.3 on \texttt{DBLP} and \texttt{LiveJournal}.

\vlt{We agree.-Kai}}

\vspace{-2ex}
\subsection{Experimental Results}

%\vspace{1ex}
%\noindent \textbf{General Comparisons}

\begin{figure}[!t]
    \centering
    \includegraphics[width=0.485\textwidth]{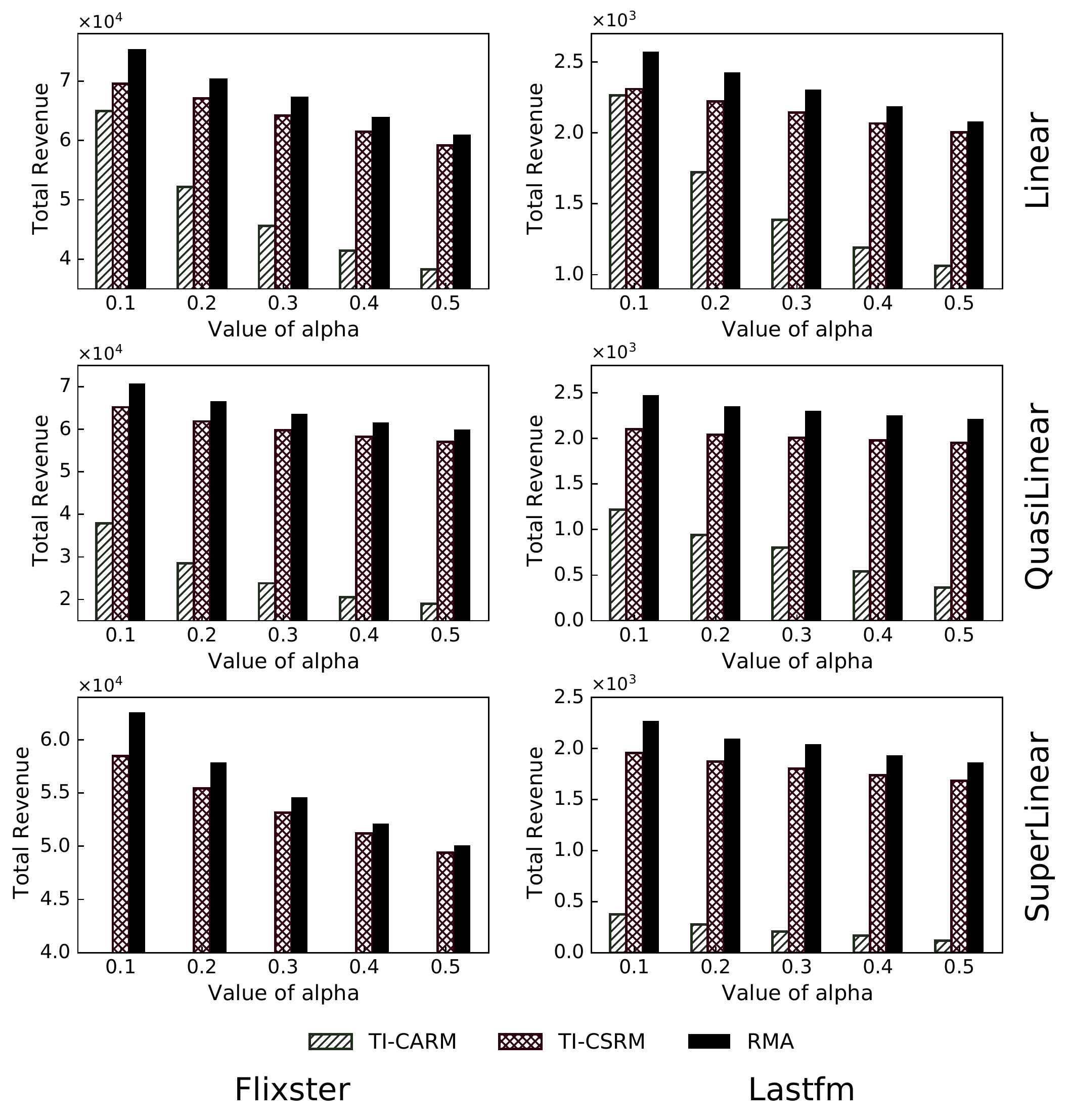}
    \vspace{-0.8cm}
    \caption{Total revenue as a function of $\alpha$, on $\texttt{\texttt{Flixster}}$(left) and $\texttt{Lastfm}$(right), for linear (top), quasi-linear (middle), and super-linear (bottom) incentive models.}
    \label{fig:totalrevenue}
    \vspace{-3mm}
\end{figure}

%\begin{figure}[t]
%\centering
%\includegraphics[width=0.485\textwidth]{figs/totalrevenue}
%\caption{Total revenue as a function of $\alpha$, on \texttt{Flixster} (left) and \texttt{Lastfm} (right), for linear (top), uniform (middle), and sublinear (bottom) incentive models.}
%\label{fig:totalrevenue}
%\end{figure}
%
%\begin{figure}[t]
%\centering
%\includegraphics[width=0.485\textwidth]{figs/totalcost}
%\caption{Total seeding cost as a function of $\alpha$, on \texttt{Flixster} (left) and \texttt{Lastfm} (right), for linear (top), uniform (middle), and sublinear (bottom) incentive models.}
%\label{fig:totalcost}
%\end{figure}

\subsubsection{General Comparisons} \label{sec:pegeneralcompare}

In this section, we compare the algorithms under different node cost models on LastFM and Flixster datasets. Following the experimental settings in~\cite{Aslay2017}, the advertisers are assigned heterogeneous budgets and CPE values for seed selection in TI-CARM and TI-CSRM, as shown in Table~\ref{tab:budgets}. We first plot the the revenue performance of TI-CARM, TI-CSRM and the RMA algorithm in Fig.~\ref{fig:totalrevenue} under the linear, quasilinear and superlinear cost modes, with varying levels of the parameter $\alpha$. It can be seen that the revenue of all algorithms decreases when $\alpha$ increases, which can be explained by the fact that the costs of all nodes increase with $\alpha$, so fewer seed nodes can be selected by all algorithms when $\alpha$ gets larger. %The results in Fig.~\ref{} also show that the TI-CSRM algorithm generally outperforms TI-CARM, which coincides with the results shown in~\cite{Aslay2017}.
The results in Fig.~\ref{fig:totalrevenue} also reveal that our RMA algorithm consistently outperforms TI-CARM and TI-CSRM algorithms under all three seed incentive models and for different values of $\alpha$. Specifically, the RMA algorithm can achieve up to \toblue{15.81$\times$  (resp. 17.68\%)}  gain on the revenue compared to TI-CARM  (resp. TI-CSRM).  \revise{Moreover, we observe that TI-CARM performs very poorly under the superlinear cost model. The intuition is that TI-CARM greedily selects elements purely based on their marginal gains while neglecting the costs, so it may quickly meet an element violating the budget constraint and hence terminate with very few seeds selected. This situation is analogous to choosing items without considering their costs in the traditional knapsack problem. These results demonstrate the effectiveness of our methods used in RMA for revenue optimization.}

\begin{figure}[!t]
    \centering
    \includegraphics[width=0.485\textwidth]{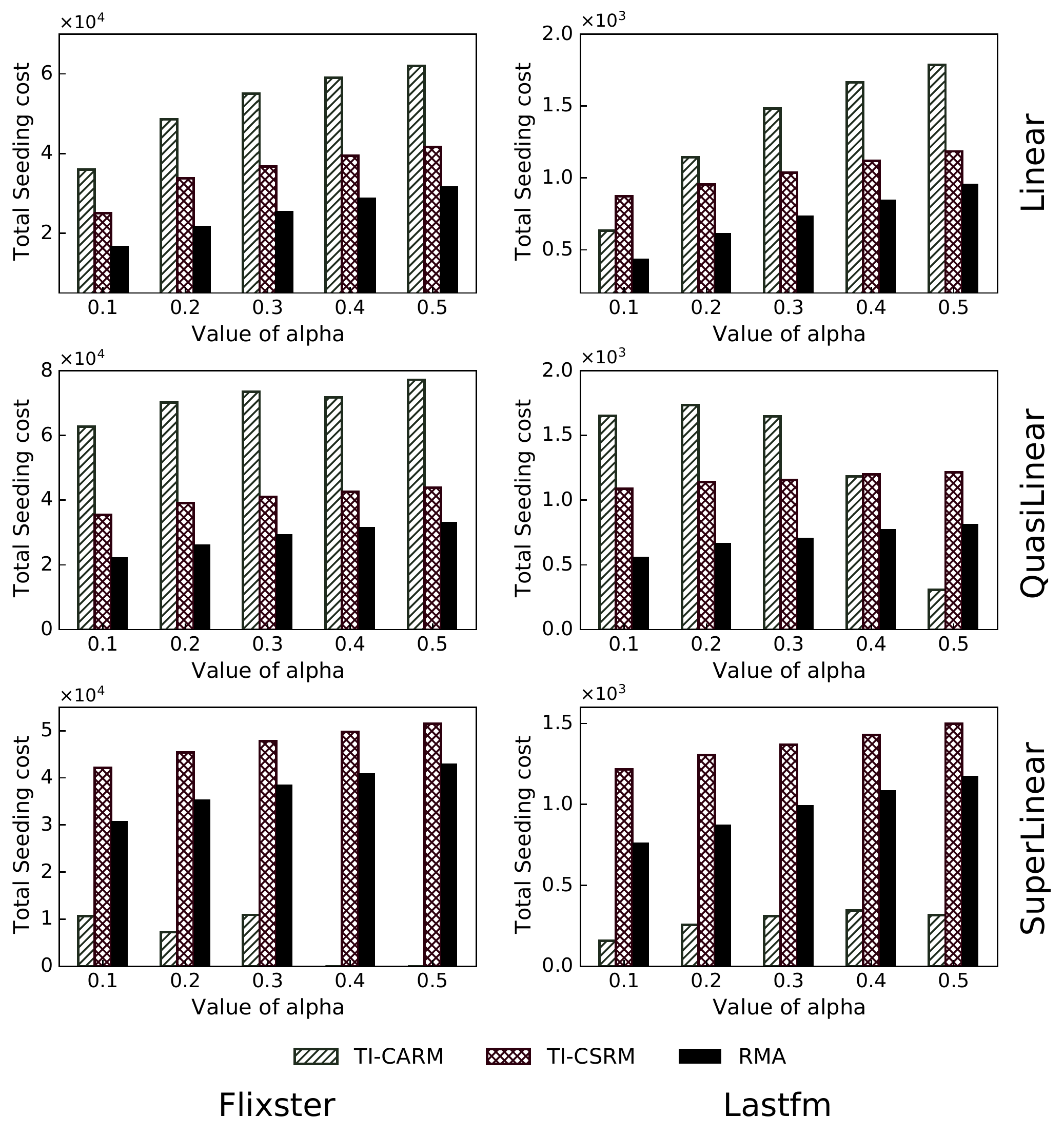}
    \vspace{-0.8cm}
    \caption{Total seeding cost as a function of $\alpha$, on $\texttt{\texttt{Flixster}}$(left) and $\texttt{Lastfm}$(right), for linear (top), quasi-linear (middle), and super-linear (bottom) incentive models.}
    \vspace{-3mm}
    \label{fig:totalcost}
\end{figure}

\revise{
In Fig.~\ref{fig:totalcost}, we plot the total seed costs (i.e., the amount paid by all advertisers for incentivizing seeds) assigned by different algorithms. RMA produces consistently lower seeding cost than TI-CSRM and the lowest cost for most cases under the linear and quasilinear models.
\eat{It can be seen that RMA produces the lowest total seed costs for most cases under linear and quasilinear models. Meanwhile, we observe that} Under the superlinear cost model, TI-CARM achieves very low seeding costs, which can be explained by a reason similar to that described above for the revenue performance of TI-CARM. \eat{These results demonstrate that RMA can achieve a superior expected revenue under a more economic seed incentive cost, compared to TI-CARM and TI-CSRM.} These experiments show that RMA achieves the best revenue performance. Its seeding cost is always lower than TI-CSRM. While TI-CARM has lower costs in some settings (especially under the superlinear model), that does not translate to better revenue.}

\revise{In Fig.~\ref{fig:seedsize}, we study the impact of $\alpha$ on the number of seeds selected by the implemented algorithms under the linear cost model, where the other settings are the same with Fig.~\ref{fig:totalrevenue}. It can be seen that the seed size decreases when $\alpha$ increases, as the node costs increase and hence fewer nodes can be selected under given budgets. While the seed set sizes for RMA and TI-CSRM are comparable, TI-CARM only manages to select very few seeds.}

%In Fig.~\ref{fig:totalcost}, we plot the total seed costs (i.e., the amount paid by all advertisers for incentivizing seeds) assigned by different algorithms. It can be seen that, under all different values of $\alpha$ and all the three incentive models, RMA always produces the lowest total seed costs. \laks{This demonstrates that the RMA algorithm achieves a superior expected revenue while incurring significantly lower seed incentive cost, compared to TI-CARM and TI-CSRM.}
%, which can be explained as follows.
\eat{First, as explained in Sec.~\ref{sec:expsetting} and Sec.~\ref{sec:intro}, the budget input to the RMA algorithm for any advertiser is always $1+\varrho$ times smaller than that to the TI-CSRM and TI-CARM algorithms to achieve provable budget-feasibility, while TI-CSRM and TI-CARM do not provide a method to guarantee that their algorithms are actually budget-feasible. Therefore, the RMA algorithm has a more stringent constraint for selecting seed nodes.
%In constrast, the TI-CARM and TI-CSRM algorithms have used $B_i$ as the input to each advertiser $i$, and the actually budget used could be even larger as these algorithms do not guarantee the budget feasiability.
Second, as
The RMA algorithm achieves better revenue performance than TI-CSRM and TI-CARM (as shown in Fig.~\ref{fig:totalrevenue}), so it has to spend smaller seed costs to meet the budget constraint, even if all algorithms have the same budget input.}

In Table~\ref{tab:runningtimelinearincentive}, we show the running time of the implemented algorithms under the linear cost model. It can be seen that RMA runs faster than TI-CARM and TI-CSRM under all the settings of $\alpha$ \toblue{(1.04$\times$ to 38.5$\times$ faster)}, which can be explained by the fact that our RMA algorithm has leveraged the sampling algorithms proposed in Section~\ref{sec:rmwithoutoracle} to reduce the number of RR-sets to be generated, while still achieving the guaranteed performance ratio. \revise{The experimental results on the running time of the implemented algorithms under the other cost models are qualitatively similar and hence are omitted due to space constraints. It is also noted that~\cite{Aslay2017} exhibits faster running time than that shown in Table~\ref{tab:runningtimelinearincentive}, because only half of the budgets shown in Table 2 of~\cite{Aslay2017} were used in their experiments, resulting in fewer selected nodes and hence faster running time.}%~\cite{}

%We also note that the running time of TI-CARM/TI-CSRM shown in Table~\ref{tab:runningtimelinearincentive} is larger than that listed in~\cite{Aslay2017}. We used the code shared by authors of [4]. We contacted them and verified that in [4], half of the budgets shown in Table 2 are used in Figure 4. So [4] exhibits faster running time as fewer nodes are selected.

\begin{figure}[!t]
    \centering
    %\vspace{-6mm}
    \includegraphics[width=0.485\textwidth]{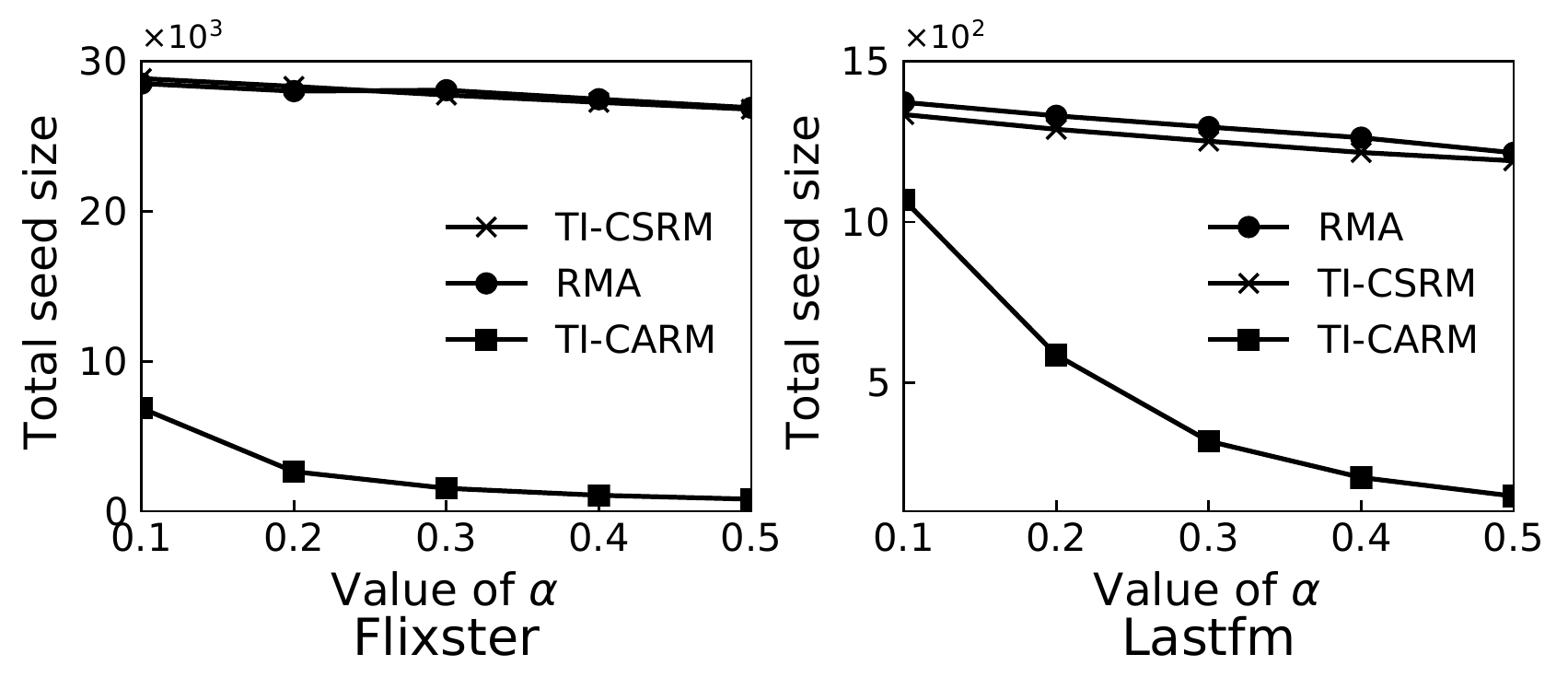}
    \vspace{-0.8cm}
    \caption{Impact of $\alpha$ on seed size}
    \label{fig:seedsize}
\vspace{-2mm}
\end{figure}

\begin{figure}[!t]
    \centering
    \vspace{-3mm}
    \includegraphics[width=0.485\textwidth]{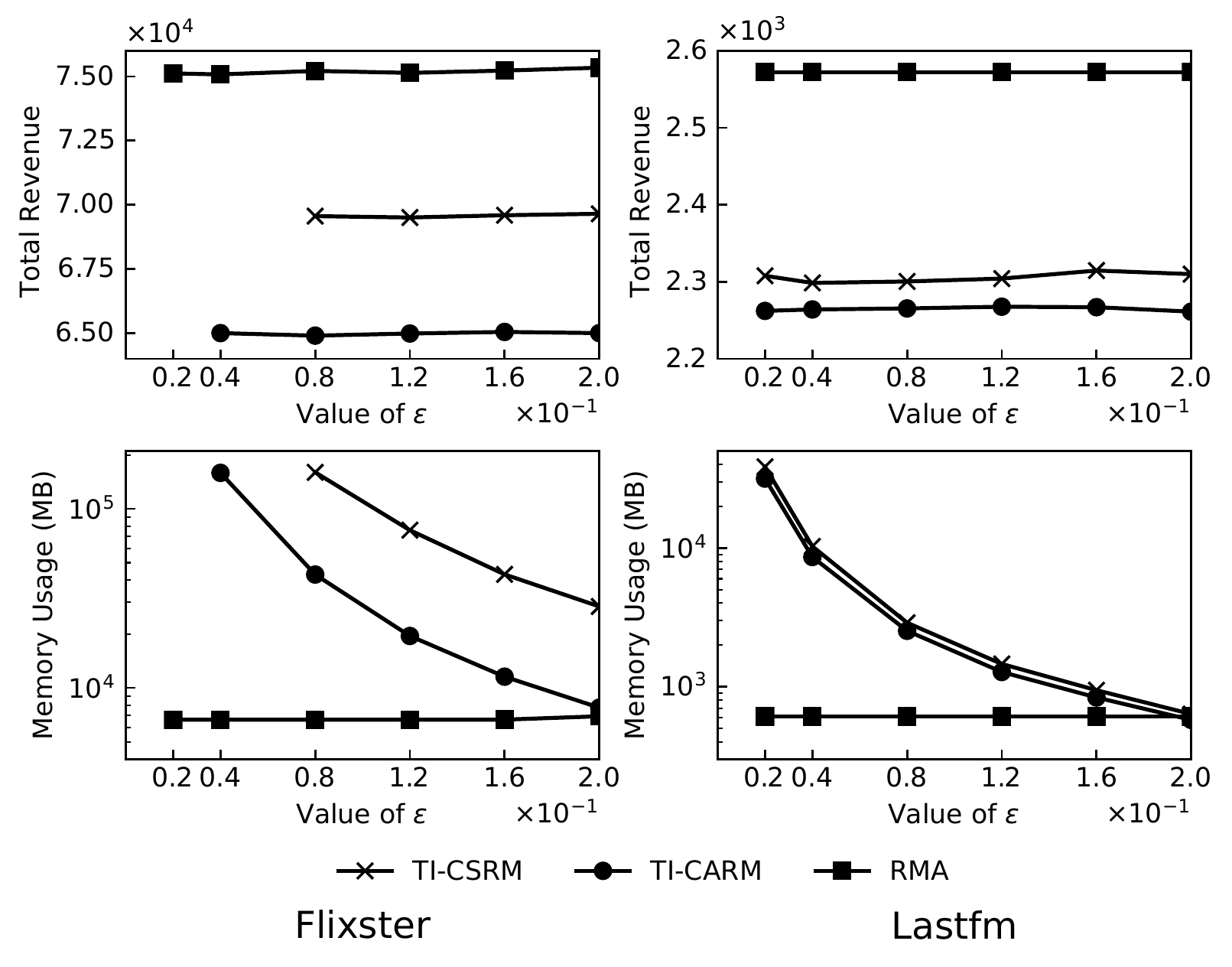}
    \vspace{-0.8cm}
    \caption{The impact of $\epsilon$ on revenue and memory usage} %, on $\texttt{\texttt{Flixster}}$(left) and $\texttt{Lastfm}$(right), for linear incentive models.}
    \label{fig:eps}
% \vspace{-3ex}
\end{figure}
%\brn{The experimental results on the running time of the implemented algorithms under the uniform and sublinear cost models are shown in Table~\ref{tab:runningtimeuniformincentive} and Table~\ref{tab:runningtimesublinearincentive}, respectively, which are qualitatively similar to the results in Table~\ref{tab:runningtimelinearincentive}.}

%The experimental results on the running time of the implemented algorithms under the uniform and sublinear cost models are  similar to those shown in Table~\ref{tab:runningtimelinearincentive}, ~\laks{so omit them here for lack of space and refer the reader to our technical report~\cite{RMA_report}}.
\eat{, which can be found in our online technical report~\cite{techreport} due to the space constraint.}

\eat{\note[Laks]{If these results are available, they can be included in the TR.}

\vlt{Yes, they are already included. -Kai}}

%\begin{table}[!t]
%\centering
%\vspace{-4mm}
%\caption{{Running time (seconds) under linear incentive model}}
%\vspace{-3mm}
% \setlength{\tabcolsep}{2.30mm}{
%\begin{tabular}{|c|c|c|c|c|c|}
%\hline
%\texttt{Flixster} &$\alpha=0.1$&$0.2$&$0.3$&$0.4$&$0.5$ \\ \hline
%RMA&45&43&49&49&49\\ \hline
%TI-CARM&3803&1609&1074&880&710\\ \hline
%TI-CSRM&16255&18798&25572&24109&23473\\ \hline
%\hline
%\texttt{Lastfm}&$\alpha=0.1$&$0.2$&$0.3$&$0.4$&$0.5$\\
%\hline
%RMA&26&25&23&24&23\\ \hline
%TI-CARM&108&91&75&65&56\\ \hline
%TI-CSRM&130&147&145&152&153\\ \hline
%\end{tabular}
%}
%\label{tab:runningtimelinearincentive}
%% \vspace{-1mm}
%\end{table}

\begin{table}[!t]
\centering
\vspace{-4mm}
\caption{Running time (seconds) under linear cost model}
\vspace{-3mm}
 \setlength{\tabcolsep}{2.30mm}{
\begin{tabular}{|c|c|c|c|c|c|}
\hline
\texttt{Flixster} &$\alpha=0.1$&$0.2$&$0.3$&$0.4$&$0.5$ \\ \hline
RMA&736&738&664&695&681\\ \hline
TI-CARM&3803&1609&1074&880&710\\ \hline
TI-CSRM&16255&18798&25572&24109&23473\\ \hline
\hline
\texttt{Lastfm}&$\alpha=0.1$&$0.2$&$0.3$&$0.4$&$0.5$\\
\hline
RMA&26&25&23&24&23\\ \hline
TI-CARM&108&91&75&65&56\\ \hline
TI-CSRM&130&147&145&152&153\\ \hline
\end{tabular}
}
\label{tab:runningtimelinearincentive}
\vspace{-2mm}
\end{table}

\begin{figure*}[!t]
\centering
\includegraphics[width=1\textwidth]{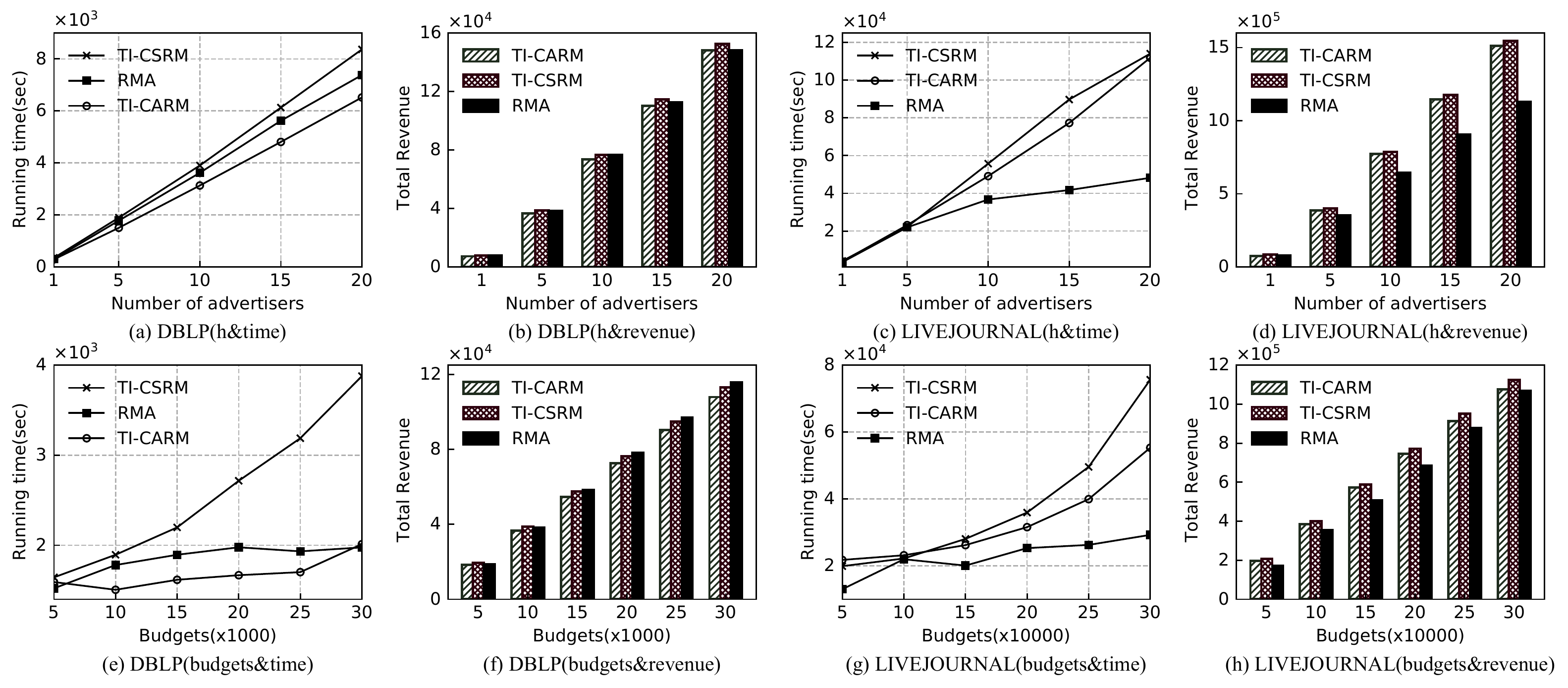}
\vspace{-7mm}
\caption{Running time and total revenue of RM, TI-CARM and TI-CSRM on \texttt{DBLP} and \texttt{LiveJournal}}
\vspace{-5mm}
\label{fig:scalability}
\end{figure*}

\vspace{-2mm}
\revise{
\subsubsection{Impact of Parameters} \label{sec:impactofpara}
Fig.~\ref{fig:eps} compares the revenue and memory consumption of implemented algorithms by varying $\epsilon$ from $0.02$ to $0.2$ in Flixster and LastFM, using the linear cost model, with $\alpha=0.1$. The results reveal that the revenue of RMA does not vary much and it consistently outperforms both TI-CSRM and TI-CARM over the range of values of $\epsilon$ considered. The reason is as follows. Although RMA has a theoretical approximation ratio of $\lambda-\epsilon$ (Theorem~\ref{thm:upperboundofrrsets}) which is affected by $\epsilon$, this ratio is just worst case and the actual performance ratio of RMA on specific datasets could be much better. Indeed, we observed in our experiments that RMA can practically achieve an approximation ratio $\beta$,  where $\beta$ is even larger than $\lambda$ in most cases, so its revenue performance is quite ``robust'' to the variation of $\epsilon$ due to the stopping rule in Line~\ref{ln:judgeappratio} of Algorithm~\ref{alg:rmwithoutoracle}. However, Fig.~\ref{fig:eps} also shows that the memory consumption of TI-CARM and TI-CSRM significantly increases due to the large number of generated RR-sets with  decreasing  $\epsilon$, which eventually causes memory overflow problems on the Flixster dataset when $\epsilon \leq 0.02$ (for TI-CARM) or $\epsilon\leq 0.04$ (for TI-CSRM).}

\eat{Additionally, we have conducted more experiments to study the impact of $\varrho$ and $\tau$, and the results are deferred to \cite{RMA_report}~due to space constraints. In a nutshell, these experiments demonstrate that our setting on parameters $\epsilon,\varrho$, and $\tau$, described in Sec.~\ref{sec:expsetting}, enables TI-CSRM and TI-CARM to output a solution under reasonable memory consumption and running time for  a fair comparison with RMA.}

%$\beta \geq \lambda -\epsilon$
%However, as shown in Table~\ref{tab:epsilon-time}, the memory usage and running time of TI-CARM and TI-CSRM significantly increases as $\epsilon$ gets small, which even causes memory overflow problems on the Flixster dataset when $\epsilon=0.02$ or $\epsilon\leq 0.04$.

%\vspace{-2mm}
\begin{figure}[t]
    \centering
    \includegraphics[width=0.48\textwidth]{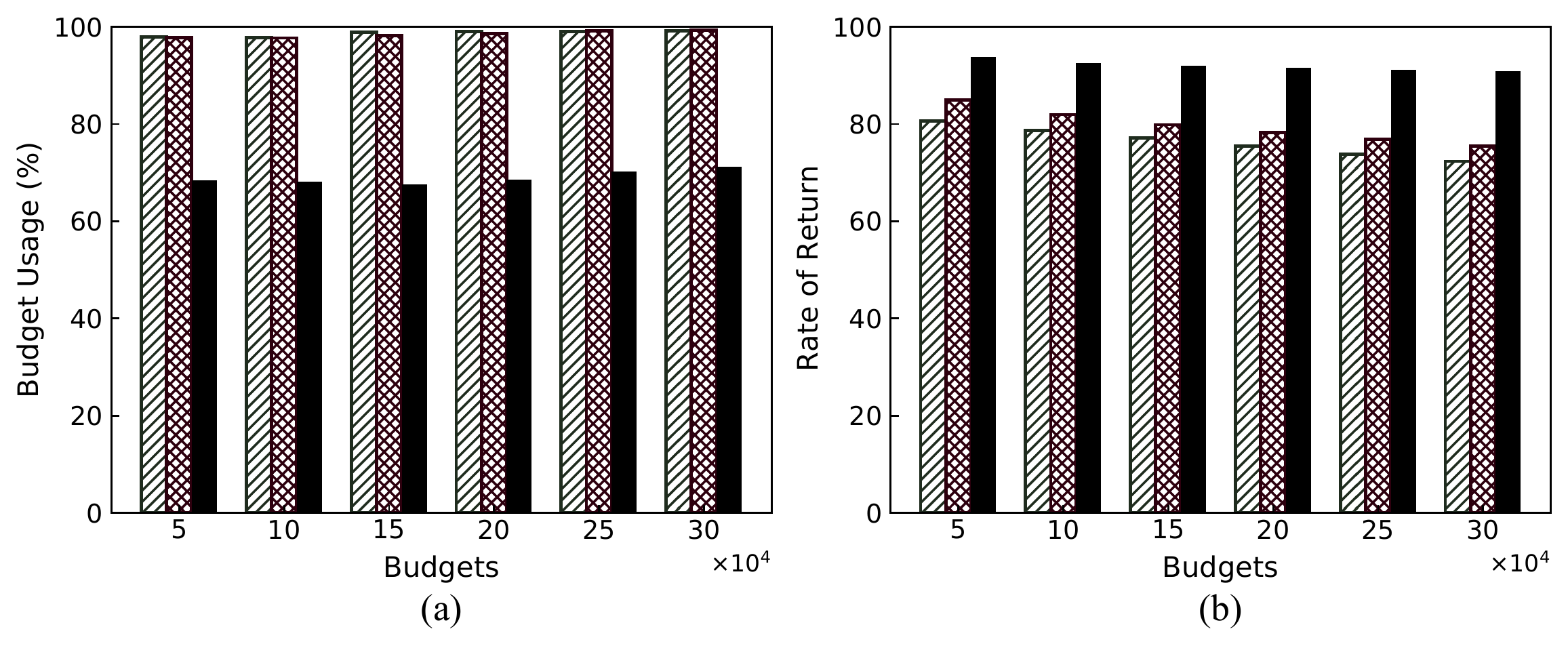}
    \vspace{-0.8cm}
    \caption{Budget usage and rate of return of RMA, TI-CARM, TI-CSRM on LiveJournal}
    \label{fig:rate}
\vspace{-2mm}
\end{figure}

\vspace{-4mm}
\subsubsection{Scalability Test}
In this section, we follow the same settings as those in~\cite{Aslay2017} to test the scalability of the implemented algorithms on DBLP and LiveJournal datasets. As there are no action logs for these two datasets, we cannot use the method proposed in~\cite{barbieri2012topic} to learn the influence probabilities. So we follow~\cite{Aslay2017} to use the Weighted-Cascade model, i.e., setting $p_{u,v}^i=1/|N^{in}(v)|$ for all $i\in [h]$ and $(u,v)\in E$, where $N^{in}(v)$ denotes the set of in-neighbors of node $v$. We also follow~\cite{Aslay2017} to use the Linear incentive model with $\alpha=0.2$ and identical budgets input to all advertisers. \toblue{Note that advertisers usually have heterogeneous budgets and the influence propagation depends on specific topics/items in practice (like the settings adopted in Fig.~\ref{fig:totalrevenue}). So the settings of uniform budgets and Weighted-Cascade model in this section are less practical and are only used for a fair comparison with~\cite{Aslay2017}.}

\toblue{In Figs.~\ref{fig:scalability}(a)--(d), we compare the running time and revenue of the implemented algorithms by scaling the number of advertisers $h$ from 1 to 20, where the budget of each advertiser is set to 10K (100K) for DBLP (resp. for LiveJournal). In Figs.~\ref{fig:scalability}(e)--(h), we compare the  algorithms by scaling the budget of advertisers, with the number of advertisers  fixed to 5. Notice that fixing number of advertisers and scaling up budgets is similar to increasing number of advertisers with a fixed budget, w.r.t. the number of seeds selected.}

\toblue{It can be seen from Fig.~\ref{fig:scalability} that RMA runs faster than TI-CARM and TI-CSRM in almost all cases.
\eat{On  DBLP, TI-CARM generally performs the worst on revenue, while RMA achieves approximately the same revenue as  TI-CSRM.}  On DBLP, all three algorithms attain almost the same revenue.
However, on LiveJournal, although RMA still runs faster than TI-CSRM/TI-CARM, it  achieves  smaller revenue than TI-CSRM/TI-CARM. We explain the reason below.  Note that the budget used by RMA is $(1 + \varrho)^{-1}$ times that used by TI-CSRM and TI-CARM, where we set $\varrho=0.1$. Although this does not affect RMA's superiority  in most cases (as shown in other figures), it could occasionally make TI-CSRM/TI-CARM perform better on revenue in some networks (depending on network structures and  propagation models). This happens especially when the total budget is large, causing a large budget overshoot for a given $\varrho$. %However, we can easily increase the revenue of RMA by decreasing $\varrho$ to avoid large budget overshoots without affecting fairness of comparision, as shown in \cite{RMA_report}.
We analyzed the \textit{rate of
actual budget usage}  $(\pi(\vec{S})+\sum_{i\in [h]}c_i(S_i))/(\sum_{i\in [h]}B_i)$ and the \textit{rate of return}  ${\pi(\vec{S})}/{[\pi(\vec{S})+\sum_{i\in [h]}c_i(S_i)]}$ of all algorithms for Fig.~\ref{fig:scalability}(h): results shown in Fig.~\ref{fig:rate} (results for Fig.~\ref{fig:scalability}(d) are similar). It can be seen from Fig.~\ref{fig:rate} that, RMA uses smaller budgets than TI-CSRM/TI-CARM, while its rate of return is clearly higher. This implies that RMA is much more ``profitable'' than TI-CSRM/TI-CARM, which could be important  from a practical point of view.}

\eat{
\vspace{-2mm}
\subsubsection{Scalability Test}

%\vspace{1ex}
%\noindent \textbf{Comparisons on DBLP and LiveJournal:}

In this section, we follow the same settings as those in~\cite{Aslay2017} to test the scalability of the implemented algorithms on DBLP and LiveJournal datasets. As there are no action logs for these two datasets, we cannot use the method proposed in~\cite{barbieri2012topic} to learn the influence probabilities. So we follow~\cite{Aslay2017} to use the Weighted-Cascade model, i.e., setting $p_{u,v}^i=1/|N^{in}(v)|$ for all $i\in [h]$ and $(u,v)\in E$, where $N^{in}(v)$ denotes the set of in-neighbors of node $v$. We also follow~\cite{Aslay2017} to use the Linear incentive model with $\alpha=0.2$ and identical budgets input to all advertisers. \toblue{Note that advertisers usually have heterogeneous budgets and influence propagation depends on specific topics/items in practice (like the settings adopted in Fig.~\ref{fig:totalrevenue}). So the settings of uniform budgets and Weighted-Cascade model in this section are less practical and are only used for a fair comparison with~\cite{Aslay2017}. Without harming the fairness of comparison, we also allow the value of $\varrho$ to range from 0.01 to 0.02 in this section, such that $\varrho$ decreases with the increment of total budgets to avoid a large budget overshoot (a detailed discussion on $\varrho$ can be found in \cite{RMA_report}).}

%As explained in~\cite{Aslay2017}, these settings simulate a fully competitive scenario in social advertising where all advertisers compete for the same set of influential users.

%
%

In Figs.~\ref{fig:scalability}(a)--(d), we compare the running time and revenue of the implemented algorithms by scaling the number of advertisers (i.e., $h$) from 1 to 20, where the budget of each advertiser is set to 10K and 100K for DBLP and LiveJournal, respectively. The experimental results show that the running time of all algorithms grows with the number of advertisers, \laks{since as the effective number of seeds to be selected grows with the number of advertisers, which in turn necessitates  more random RR-sets to be generated. \eat{However, the RMA algorithm is \rdc{one to two orders of magnitude} faster than the TI-CSRM and TI-CARM algorithms under all settings of $h$, and it also consistently outperforms TI-CSRM and TI-CARM on the revenue.}
\toblue{It can be observed that the RMA algorithm consistently outperforms TI-CSRM and TI-CARM on the revenue, under approximately the same running time. Specifically, compared to TI-CSRM and TI-CARM, RMA can achieve up to 3.23\% gain on the revenue, which demonstrates the superiority of our RMA algorithm.}} %with \rdc{17.61--32.97$\times$} smaller running time,

In Figs.~\ref{fig:scalability}(e)--(h), we compare the implemented algorithms by scaling the budget input to advertisers, where the the number of advertisers is fixed to 5. \laks{Notice that fixing number of advertisers and scaling up budgets is similar to increasing number of advertisers while fixing the budget, w.r.t. the number of seeds selected.} We can see that the revenue of all algorithms increases with the budget, as all the algorithms can select more seed nodes when the budget increases. However, the running time of our RMA algorithm slightly decreases when the budget increases. This can be explained by the reason that, \hkr{both the parameters $\theta_0$ and $\theta_{max}$ in Algorithm~\ref{alg:rmwithoutoracle} decrease when the budget increases, which results in the phenomenon that the RMA algorithm generates fewer random RR-sets before it stops.} Overall, the results in Figs.~\ref{fig:scalability}(e)--(h) show similar trends with those in Figs.~\ref{fig:scalability}(a)--(d), \toblue{and the RMA algorithm outperforms TI-CSRM and TI-CARM both on the running time and on the revenue performance (up to 4.5\% gain on revenue).}}

\vspace{-4mm}
\subsubsection{Studying the Scenario with a Holistic Demand} \label{sec:holisticdemand}
In this section, we consider a  practical scenario where social advertising demands are controlled holistically. An advertiser budget $B_i$ includes seeding costs as well as user engagements. Thus, $B_i/(cpe(i)*n)$ is the maximum percentage of user engagements the advertiser can expect, so we can regard this as a proxy for the demand from   advertiser $i$.
We use $M=\sum_{i\in [h]}M_i$ to denote the total demand of a social advertising market, where $M_i=B_i/(n\cdot\cpe(i))$. For simplicity, we assume $\cpe(i)=1$ for all $i\in [h]$. In Fig.~\ref{fig:demand}(a)-(b), we use the Flixster dataset to study the impact of $M$ on revenue and total seed cost under the linear cost model, where we set $h=10$, $\alpha=0.1$, and the individual demands $M_i: i\in [h]$ are all randomly generated such that they sum to $M$. The results in Fig.~\ref{fig:demand}(a)-(b) show that, the revenues of all algorithms increase with $M$, as more elements can be selected under a larger demand, while RMA always achieves a better revenue with smaller seed costs than the other algorithms, for all values of $M$ tested. In Fig.~\ref{fig:demand}(c)-(d), we further study the relationship between parameter $\alpha$ and total revenue and seeding cost, for a fixed total demand $M=2.5$, with  all other settings being the same as those in Fig.~\ref{fig:demand}(a)-(b). The results %in Fig.~\ref{fig:demand}(c)-(d)
show that (i) RMA outperforms the other algorithms again  and (ii) the revenue of all algorithm decreases when $\alpha$ increases, since the node costs increase with $\alpha$, causing fewer seed nodes to be selected. In summary, the results in Fig.~\ref{fig:demand} demonstrate that from the perspective of total advertising demand, RMA still exhibits a performance superior to the baselines.
\eat{
the superiority of RMA maintains when the social advertising demands are controlled holistically, and the change of $\alpha$ does not reverse the superiority of RMA.}

\begin{figure}[!t]
    \centering
    \includegraphics[width=0.485\textwidth]{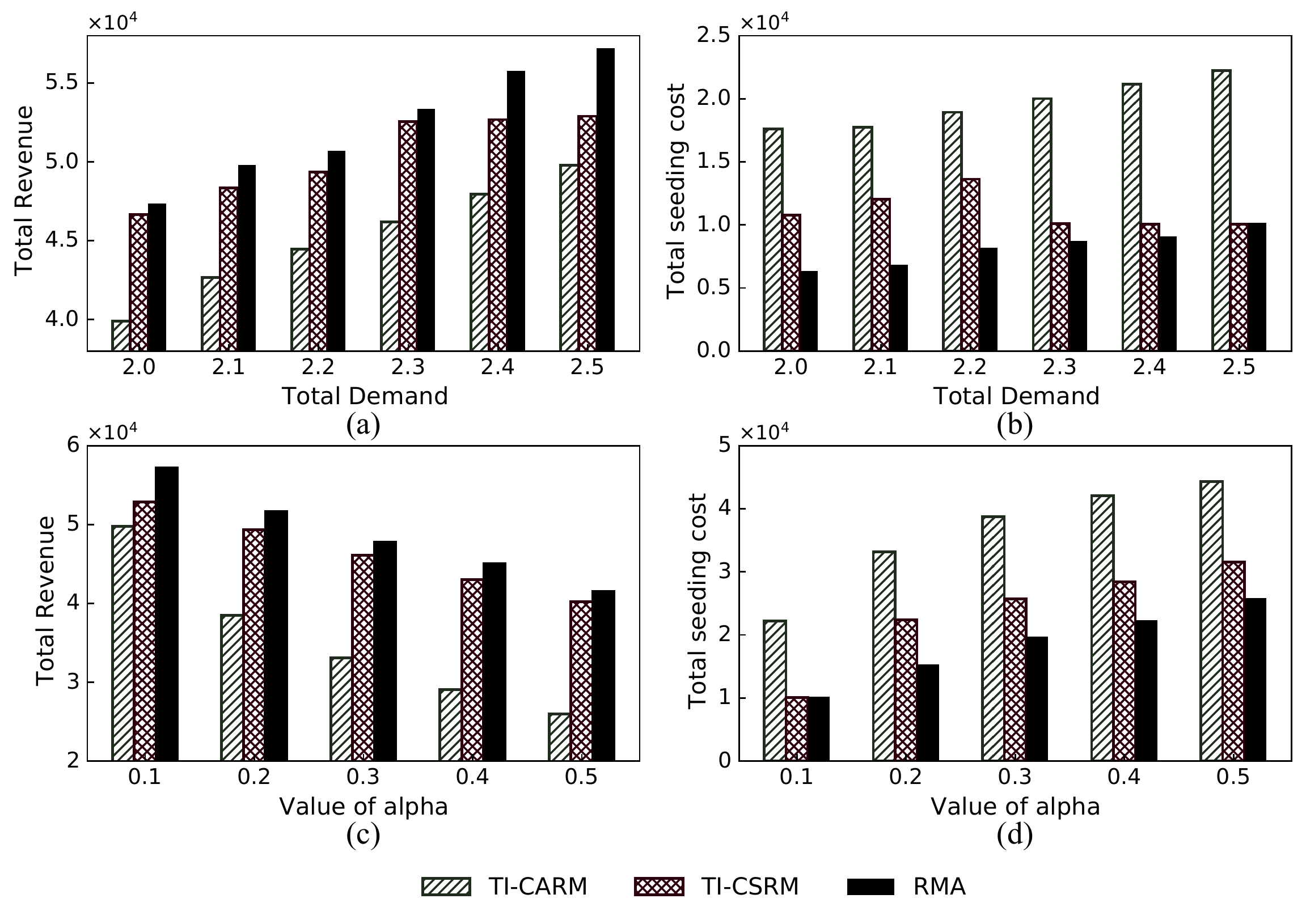}
    \vspace{-0.8cm}
    \caption{Comparing revenue and seed cost under the case where the advertising demand is controlled holistically.}
    \vspace{-2mm}
    \label{fig:demand}
\end{figure}

\vspace{-3mm}
\section{Related Work} \label{sec:relatedwork}
% \vspace{1ex}

\eat{\note[Laks]{References need to be beefed up. Since references don't count toward space, we have no excuse not to cite any major works including targeted IM, SSA/DSSA, and various other related works that Jing can tell you about. We need to cite them. }
\jing{RW is revised by citing more relevant works and discussing our advantages or differences compared with existing studies.} }

\spara{Influence Maximization}
%\citet{Domingos_maxInfluence_2001} are the first to study viral marketing in OSNs. % from an algorithmic perspective.
\citet{Kempe2003} study influence maximization (IM) where the aim is to select $k$ seed nodes in a social network such that the expected spread is maximized. They propose a simple greedy algorithm with $(1-1/e-\epsilon)$ approximation, but their algorithms are based on Monte-Carlo sampling hence have high time complexity.
Since then, there has been considerable research on improving algorithms for IM \cite{Borgs2014,Tang2018,TangSX2015,TangXS2014,ChenWW2010,Arora_benchmark_2017,Chen_degreeDiscount_2009,Chen_LDAG_2010,Cheng_IMRank_2014,Cheng_StaticGreedy_2013,Cohen_SKIM_2014,Galhotra_EaSyIM_2016,Galhotra_ASIM_2015,Goyal_CELF_2011,Goyal_infMax_2011,Goyal_SIMPATH_2011,Huang_SSA_2017,Jung_IRIE_2012,Lee_hop_2014,Leskovec_CELF_2007,Lu_IMstreaming_2015,Nguyen_DSSA_2016,Ohsaka_prunedMC_2014,Zhou_UBB_2014,Zhou_UBLF_2013,Wang_IM_2017,Guo_IM_2020,Ohsaka_IMsolution_2020,bian2020efficient}. In particular, Borgs et al.~\cite{Borgs2014} proposed  Reverse-Reachable Sets that can efficiently estimate influence spread with accuracy guarantee, based on which several studies~\cite{Tang2018,TangSX2015,TangXS2014,Nguyen_DSSA_2016,Guo_IM_2020} propose more efficient algorithms for IM  while still achieving $(1-1/e-\epsilon)$-approximation. Moreover, several variants of IM have been studied, such as topic-aware~\cite{barbieri2012topic,Chen_TAI_2015}, competition~\cite{Bharathi_CIM_2007,Lu_comIC_2015}, adaptive solutions~\cite{Han_AIM_2018,Huang_AIM_2020,Tang_ASM_2019}. However, these studies concentrate on seed selection for submodular optimization with a single cardinality or knapsack constraint.

% \vspace{1ex}
\spara{Social Advertising} Compared to influence maximization, the studies on social advertising are relatively few. Chalermsook et al. \cite{Chalermsook2015} study the revenue maximization problem for a social network platform with multiple advertisers, where each advertiser has a cardinality constraint on their seed set.
%advertiser tries to select a set of seed nodes under a cardinality constraint for social advertising.
The presence of cardinality constraint on considerably simplifies the problem, while its absence in our setting poses a significant challenge. \citet{Aslay2015}~study the regret minimization problem in social advertising, where the regret is defined as the the difference between the advertisers' budgets and the expected revenue achieved by social advertising. Their setting do not consider seed user costs.
%They look at advertiser perspective while we focus on the host perspective.
Alon et al.~\cite{Alon2012} and Abbassi et al.~\cite{Abbassi2015} investigate the channel allocation and user ordering problems in social advertising, respectively. Both works do not consider viral propagation. Moreover, none of these studies~\cite{Chalermsook2015,Aslay2015,Alon2012,Abbassi2015} consider seed user costs. Some recent work~\cite{Tang_profitMaxUS_2018,Huang_ATPM_2020} focus on profit maximization combining the benefit of influence spread with the cost of seed selection or information propagation, albeit in a single advertiser setting.
%However, they do not take into account multiple advertisers.
Due to the differences in problem definitions, techniques developed for these problems are inapplicable to our  RM  problem.
% , but their problems are substantially different from ours.
The work closest to ours is by ~\citet{Aslay2017}. As discussed in Section~\ref{sec:prelim-existing},  their solutions have several limitations. In this paper, we develop efficient approximation algorithms that are theoretically and empirically superior to their solutions on the solution quality and computational efficiency.
%To our knowledge, revenue maximization in incentivized social advertising by~\citet{Aslay2017} is the closest work to ours. However, their solutions have several limitations as discussed in Section~\ref{sec:prelim-existing}. In this paper, we have developed efficient approximation algorithms that are theoretically and empirically superior to their solutions on both the solution quality and computation efficiency.

% \vspace{1ex}
\eat{\spara{Submodular Optimization} The theory of submodular optimization has been extensively studied. Nemhauser et al.~\cite{Nemhauser1978} investigate the submodular function maximization problem under a single cardinality constraint and provide a greedy algorithm with $(1-1/e)$-approximation. Khuller et al.~\cite{Khuller1999} consider a single knapsack constraint in submodular maximization and propose a partial enumeration greedy algorithm with $(1-1/e)$-approximation. Other studies~\cite{conforti1984submodular,Grucia2011,Ashwinkumar2014,Chekuri2014,Filmus2012,Kulik2009} propose submodular optimization algorithms under more complex constraints such as matroid and multi-linear constraints. However, as indicated in~\cite{Aslay2017}, our problem is intrinsically a submodular maximization problem under a matroid constraint and multiple submodular knapsack constraints,
which are more complex than the constraints in the previous proposals: although we may use an independence system to model our constraints as in \cite{Aslay2017}, the resulting approximation could be instance dependent and arbitrarily small as that in \cite{Aslay2017}. A close work by Iyer et al.~\cite{Iyer2013} studies the submodular maximization problem under a single submodular knapsack constraint. However, as we have mentioned in Section~\ref{sec:algsa}, if we apply the algorithms in~\cite{Iyer2013} to a special case of our problem with a single advertiser (i.e., $h=1$), we can only get an instance dependent approximation guarantee that could be arbitrarily small. In contrast, our algorithms achieve a much better constant approximation ratio of $1/3$ under the case of $h=1$ by exploiting the special problem structure of social advertising.   }

\spara{Submodular Optimization} The theory of submodular optimization has been extensively studied. Nemhauser et al.~\cite{Nemhauser1978} and Khuller et al.~\cite{Khuller1999} investigate the submodular function maximization problem under a single cardinality or knapsack constraint. Other studies~\cite{conforti1984submodular,Grucia2011,Ashwinkumar2014,Chekuri2014,Filmus2012,Kulik2009} propose submodular optimization algorithms under more complex constraints such as matroid and multi-linear constraints. However, as indicated in~\cite{Aslay2017}, our problem is intrinsically a submodular maximization problem under a matroid constraint and multiple submodular knapsack constraints,
which are more complex than the constraints in the previous proposals: although we may use an independence system to model our constraints as in \cite{Aslay2017}, the resulting approximation could be instance dependent and arbitrarily small as that in \cite{Aslay2017,Iyer2013}. In contrast, our algorithms achieve a much better approximation ratio which is independent of the network instance, by exploiting the special problem structure of social advertising.

%but their problem constraints are much simpler than ours. Moreover, for the special case with a single advertiser such that the constraints degenerates to a single submodular knapsack, the solutions obtained by Iyer et al.'s algorithms~\cite{Iyer2013} are unbounded, whereas our algorithms can achieve a constant factor of $1/3$ in the worst case.

\eat{\note[Laks]{Should we mention that our result for $h=1$ improves on their result in their setting?}
\jing{Added. Note that our objective and constraint share a common submodular function $\pi(\cdot)$. However, this result may be inapplicable if the submodular functions in objective and constraint are distinct (which is their setting).}

\note[Laks]{Sorry to belabor this point, but RW section is quite weak. Need to beef up each of the 3 topics above by citing additional relevant works and positioning our work w.r.t. all of them better.} }
%To the best of our knowledge, this paper and ~\cite{Aslay2017} are the only studies that propose provable approximation ratios to the revenue maximization problem in social advertising.

%
%
%
%
%
%
%
%
%
%
%
%
%
%

\vspace{-1ex}
\section{Conclusion} \label{sec:concl}

We have studied the revenue maximization problem in social advertising where multiple advertisers pay the social network platform to disseminate their ads. \redc{Previous work on this problem presents algorithms with weak approximation ratios, and they incur large computational overheads in practice.}  %without a performance guarantee on budget-feasibility.}
We provide algorithms with significantly better approximation ratios, which can be efficiently implemented. We also conduct extensive experiments using four public datasets to compare our algorithms with the existing ones, and the experimental results demonstrate the superiority of our algorithms both on the running time and on the revenue gained by the social network platform, which our algorithm achieves at considerably less seed incentive cost compared to previous algorithms.
For future work, we aim to take into account the natural competitive and complementary relationships between different propagating entities from multiple advertisers, e.g., competition between iPhone 11 and Samsung Galaxy S20, and  complementarity between iPhone 11 and Apple Watch.

%\vspace{-2pt}
%\section*{Acknowledgement}
%Lakshmanan’s research was supported in part by a discovery grant and a discovery accelerator supplement grant from NSERC (Canada).  %Kai Han's work was supported in part by NSFC (No. 61772491).

%This work was supported by the National Key R\&D Program of China under Grant No. 2018AAA0101204, the National Natural Science Foundation of China (NSFC) under Grant No. 61772491,
%This work was supported by the Anhui Initiative in Quantum Information Technologies under Grant No. AHY150300.

%We have studied the $k$-median and $k$-center problems in uncertain graphs. We have thoroughly analyzed the hardness of these problems and proposed several novel greedy algorithms for them with provable performance bounds. Compared to the existing work, our algorithms achieve better approximation ratios and are also more \hkb{efficient}. Finally, we conduct extensive experiments using public datasets to compare our algorithms with the existing algorithms, and the experimental results demonstrate the superiorities of our algorithms both on the running time and on the objective values of the $k$-median and $k$-center problems.

%efficient algorithms with improved approximation ratios compared with the prior art. 

\clearpage
\bibliographystyle{ACM-Reference-Format}
\bibliography{Mylib}

\end{sloppy}

\setlength{\textfloatsep}{\textfloatsepsave}
%\clearpage
\begin{appendices}
%\section*{Appendix A: Some Missing Proofs}
% \section*{Appendix A: Proof of Theorem~\ref{thm:boundofthresholdgreedy}}

% \setcounter{section}{0}
%\renewcommand\thesection{A.\arabic{section}}
% \renewcommand\thesection{A}

\begin{table}[t]
	\centering
	\renewcommand{\arraystretch}{1.2}% for the vertical padding
	\small
    \caption{Frequently used notations}
    \label{tbl:prelim-notations}
    \vspace{-1mm}
	\begin{tabular}{|p{1.5cm}|p{6.1cm}|}\hline%{|p{1.5cm}|p{0.75cm}|p{0.75cm}|p{1.5cm}|p{2cm}|}
        \textbf{Notation} & \textbf{Description} \\ \hline
        $G=(V,E)$ & A social network with node set $V$ and edge set $E$.\\ \hline
        $n,m$ & The numbers of nodes and edges in $G$, respectively.\\ \hline
        $h$ & The number of advertisers.\\ \hline
        $B_i$ & The budget of advertiser $i$.\\ \hline
        $\sigma_i(\cdot)$ & The influence spread function for advertiser $i$.  \\ \hline
        $\mathit{cpe}(i)$ & The cost-per-engagement amount for advertiser $i$.\\ \hline
        $\pi_i(\cdot)$ & The revenue function for advertiser $i$, $\pi_i(A)=\mathit{cpe}(i)\cdot \sigma_i(A)$ for any $A\subseteq V$. \\ \hline
        $c_i(\cdot)$ & A cost function, $c_i(A)=\sum_{u\in A} c_i(u)$ denotes the total cost for selecting $A$ as seed nodes for advertiser $i$.\\ \hline
        $\vec{S}$ & The collection of sets $(S_1,\dotsc, S_h)$, also denoting the set $\{(u,i)\colon u\in S_i\wedge i\in [h]\}$. \\ \hline
        $\pi(\cdot)$ & A set funtion, $\pi(M)=\sum_{i\in [h]}\pi_i(M_i)$ where $M_i=\{u\colon \exists(u,i)\in M\}$ for any $M\subseteq V\times [h]$.\\ \hline
        $\vec{O}$ & An optimal solution, $\vec{O}=(O_1,\dotsc, O_h)$. \\ \hline
        $\mathrm{OPT}$ & The revenue of an optimal solution, $\mathrm{OPT}=\pi(\vec{O})$. \\ \hline
        $f(X \mid Y)$ & The marginal gain of $X$ with respect to $Y$ for any set function $f(\cdot)$, i.e., $f(X\mid Y)=f(X\cup Y)-f(Y)$. \\ \hline
        %$[k]$ & The set $\{1,\cdots,k\}$ for any natural number $k$ \\ \hline
    \end{tabular}
\end{table}

%\subsection{Missing Running Time}
%\begin{table}[h]
%\centering
%\caption{{Running time (seconds) under uniform incentive model}}
% \setlength{\tabcolsep}{2.84mm}{
%\begin{tabular}{|c|c|c|c|c|c|}
%\hline
%\texttt{\texttt{Flixster}} &$\alpha$=0.1&0.2&0.3&0.4&0.5 \\ \hline
%RMA&54&55&54&55&55\\ \hline
%TI-CARM&5232&3706&2832&2503&2288\\ \hline
%TI-CSRM&5249&4164&3362&2617&2330\\ \hline
%\hline
%\texttt{Lastfm}&$\alpha$=1&2&3&4&5\\
%\hline
%RMA&29&30&30&30&30\\ \hline
%TI-CARM&64&58&56&50&50\\ \hline
%TI-CSRM&68&61&58&54&52\\ \hline
%\end{tabular}
%}
%\label{tab:runningtimeuniformincentive}
%\end{table}
%
%\begin{table}[h]
%\centering
%\caption{{Running time (seconds) under sublinear incentive model}}
% \setlength{\tabcolsep}{2.84mm}{
%\begin{tabular}{|c|c|c|c|c|c|}
%\hline
%\texttt{\texttt{Flixster}} &$\alpha$=1&2&3&4&5 \\ \hline
%RMA&55&56&54&52&52\\ \hline
%TI-CARM&4773&2597&1854&1580&1322\\ \hline
%TI-CSRM&11600&10603&9849&9927&9728\\ \hline
%\hline
%\texttt{Lastfm}&$\alpha$=6&7&8&9&10\\
%\hline
%RMA&25&25&27&26&26\\ \hline
%TI-CARM&60&58&57&54&52\\ \hline
%TI-CSRM&108&112&109&111&113\\ \hline
%\end{tabular}
%}
%\label{tab:runningtimesublinearincentive}
%\end{table}

\section{Proof of Theorem~\ref{thm:boundofthresholdgreedy}} \label{sec:proofoftheorem32}

\hkl{

%\section{Missing Proofs}

In this section, we provide the proof for  Theorem~\ref{thm:boundofthresholdgreedy}. \eat{Due to the space constraint, we leave the other proofs, additional experimental results and more discussions on the results of~\cite{Aslay2017} in~\cite{techreport}.}

%\subsection{Proof of Theorem~\ref{thm:boundofthresholdgreedy}}

As the proof is a bit long, we decompose Theorem~\ref{thm:boundofthresholdgreedy} into three equivalent theorems: Theorem~\ref{thm:casebgeq2} for the case of $b\geq 2$, Theorem~\ref{thm:casebeq1} for the case of $b=1$ and Theorem~\ref{thm:casebeq0} for the case of $b=0$. We also introduce some Definitions (Definitions~\ref{def:mapping}--\ref{def:splitoptsolution}) and Lemmas (Lemma~\ref{lma:diemptyset} to Lemma~\ref{lma:ostarlesssd}) to prove Theorems~\ref{thm:casebgeq2}--\ref{thm:casebeq0}. Throughout the proof, the symbols $\vec{S}$, $\vec{D}$, $I$, $b$ denote the corresponding variables in $\mathsf{ThresholdGreedy}(\gamma)$ when it terminates.

\begin{definition}
We define a mapping $\psi$ from $\vec{S}\cup \vec{D}$ to the optimal solution $\vec{O}$ as follows. Whenever a node $u$ is added into $S_i$ or $D_i$, we set
$\psi(u,i)=(u,j)$ if there exists $(u,j)\in \vec{O}$ for certain $j\in [h]$, otherwise, we set $\psi(u,i)=\mathrm{NULL}$.
\label{def:mapping}
\end{definition}

\begin{definition}
Define $\vec{O}'=\vec{O}-\vec{S}$ and
\begin{equation*}
    Z=\Big\{(u,j)\colon (u,j)\in \vec{O}'\wedge \exists (u,i)\in \vec{S}, \psi(u,i)=(u,j)\Big\}.
\end{equation*}
% \[Z=\Big\{(u,j)\colon (u,j)\in \vec{O}'\wedge \exists (u,i)\in \vec{S}\colon \psi(u,i)=(u,j)\Big\}.\]
Define a partition $\{\vec{O}^-,\vec{O}^*,\vec{O}^+\}$ of $\vec{O}'$ as:
\tj{\begin{align*}
&\vec{O}^-=\Big\{(v,j)\colon (v,j)\in \vec{O}'\wedge \zeta_j( v \mid S_j)< {\gamma}/{B_j} \Big\},\\
&\vec{O}^*=\Big\{(v,j)\colon (v,j)\in  Z \wedge \zeta_j( v \mid S_j)\geq {\gamma}/{B_j} \Big\},\\
&\vec{O}^+=\Big\{(v,j)\colon (v,j)\in \vec{O}'-Z\wedge \zeta_j( v \mid S_j)\geq {\gamma}/{B_j} \Big\}.
\end{align*}
}
% \begin{align*}
% &\vec{O}^-=\Big\{(v,j)\colon (v,j)\in \vec{O}'\wedge \frac{\pi_j( v \mid S_j)}{c_j(v)+\pi_j( v \mid S_j)}< \frac{\gamma}{B_j} \Big\},\\
% &\vec{O}^*=\Big\{(v,j)\colon (v,j)\in  Z \wedge \frac{\pi_j( v \mid S_j)}{c_j(v)+\pi_j( v \mid S_j)}\geq \frac{\gamma}{B_j} \Big\},\\
% &\vec{O}^+=\Big\{(v,j)\colon (v,j)\in \vec{O}'-Z\wedge \frac{\pi_j( v \mid S_j)}{c_j(v)+\pi_j( v \mid S_j)}\geq \frac{\gamma}{B_j} \Big\}.
% \end{align*}
%It is clear that $\vec{O}^-\cup \vec{O}^*\cup \vec{O}^+=\vec{O}'$. For any given $i\in [h]$, define $O_i'$ as the set of nodes in $\vec{O}'$ that belongs to the $i$th partition, i.e.,
For any given $i\in [h]$, we define $O_i'=\{u\colon (u,i)\in \vec{O}'\}$, and define $O^*_i,O^+_i,O^-_i$ similarly.
%\begin{eqnarray}
%O_i'=\{u \mid (u,i)\in \vec{O}'\}
%\end{eqnarray}
%and we define $O^*_i,O^+_i,O^-_i$ similarly.
\label{def:splitoptsolution}
\end{definition}

Intuitively, ${O}_i^-$ is the set of nodes in ${O}_i$ whose \tj{marginal rates} with respect to $S_i$ are smaller than $\frac{\gamma}{B_i}$; ${O}_i^*$ is the set of nodes in ${O}_i$ whose \tj{marginal rates} are no less than $\frac{\gamma}{B_i}$, but all nodes in ${O}_i^*$ have already been selected into $\bigcup_{j\in [h]}{S_j}$ by our algorithm; ${O}_i^+$ is the set of elements in ${O}_i$ whose \tj{marginal rate} are no less than $\frac{\gamma}{B_i}$, and none of the nodes in ${O}_i^+$ has been selected into $\bigcup_{j\in [h]}{S_j}$. Therefore, Definition~\ref{def:mapping} and Definition~\ref{def:splitoptsolution} actually provide a way to partition the set of elements in the optimal solution. In the sequel, Lemma~\ref{lma:diemptyset} to Lemma~\ref{lma:ostarlesssd} find some quantitative relationships between the revenue of the elements in these partitions and the revenue of $\vec{S}$:

\begin{lemma}
$\sum\nolimits_{i\colon D_i=\emptyset} \pi_i(O_i^+\mid S_i)\leq \sum\nolimits_{j\in [h]} \pi_j(D_j \mid S_j)$.
\label{lma:diemptyset}
\end{lemma}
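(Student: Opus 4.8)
The starting observation is that if $D_i=\emptyset$ for some advertiser $i$, then $i\notin I$, so $I\ne[h]$ when $\mathsf{ThresholdGreedy}(\gamma)$ terminates; hence the main loop of Algorithm~\ref{alg:thresholdgreedy} must have exited because $M=\emptyset$, i.e.\ \emph{every} candidate pair has at some point been selected on Line~\ref{ln:greedyruleinthreshold}. I will use this to trace what happened to each element of $O_i^+$. Recall $(v,i)\in\vec{O}^+\subseteq\vec{O}'-Z$ means $v\notin\bigcup_{\ell\in[h]}S_\ell$ (no copy of node $v$ was ever placed in any seed set) and $\zeta_i(v\mid S_i)\ge\gamma/B_i$ at termination; since $\pi_i(v\mid\cdot)$ is non-increasing (submodularity) and $x\mapsto x/(c_i(v)+x)$ is non-decreasing, $\zeta_i(v\mid S_i')\ge\gamma/B_i$ also holds for every intermediate value $S_i'\subseteq S_i$. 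Moreover $c_i(v)+\pi_i(v)\le c_i(O_i)+\pi_i(O_i)\le B_i$ by feasibility of $\vec{O}$ and $v\in O_i$, so $(v,i)\in M$ initially, hence is eventually processed (as $D_i=\emptyset$ throughout, the Line~\ref{ln:startadding} guard reduces to the threshold test for $i$).

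\textbf{Step 1: every $O_i^+$-node is the stopple node of a distinct advertiser.} Fix $(v,i)\in O_i^+$ with $D_i=\emptyset$ and let $s$ be the iteration at which $(v,i)$ is processed. The threshold test passes at iteration $s$ and $D_i$ is empty, so the only way $v$ is not appended to $S_i$ or $D_i$ is that $v\in\bigcup_{\ell}S_\ell\cup D_\ell$ at that point; since $v\notin\bigcup_{\ell}S_\ell$ and $D_i$ stays empty, this forces $v\in D_j$ for some $j\ne i$ at iteration $s$, so $D_j=\{v\}$ at termination. Finally, distinct elements of $\bigcup_{i:\,D_i=\emptyset}O_i^+$ have distinct node components (the $O_i$ are pairwise disjoint by the partition-matroid constraint), and distinct nodes lie in distinct singletons $D_j$ (a node is removed from consideration once it enters some $D_\ell$); hence $(v,i)\mapsto j(v,i)$, defined by $D_{j(v,i)}=\{v\}$, is an injection into $\{j\in[h]:D_j\ne\emptyset\}$.

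\textbf{Step 2: charge $\pi_i(v\mid S_i)$ to $\pi_{j(v,i)}(D_{j(v,i)}\mid S_{j(v,i)})$, then sum.} Let $j=j(v,i)$ and let $t$ be the iteration at which $v$ is appended to $D_j$. Since $v\in D_j$ only from the end of iteration $t$ on, and $v\in D_j$ at iteration $s$ by Step~1, we get $s>t$; in particular $(v,i)\in M$ at iteration $t$. The greedy rule on Line~\ref{ln:greedyruleinthreshold} selects $(v,j)$ as a maximizer of marginal gain over $M$ at iteration $t$, so $\pi_j(v\mid S_j^{(t)})\ge\pi_i(v\mid S_i^{(t)})$. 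Once $D_j\ne\emptyset$ no further node is appended to $S_j$ (the Line~\ref{ln:startadding} guard triggers \textbf{continue} for every later candidate of $j$), so $S_j^{(t)}=S_j$ and the left side equals $\pi_j(D_j\mid S_j)$; and $S_i^{(t)}\subseteq S_i$ gives $\pi_i(v\mid S_i^{(t)})\ge\pi_i(v\mid S_i)$. Thus $\pi_j(D_j\mid S_j)\ge\pi_i(v\mid S_i)$. Summing, using submodularity ($\pi_i(O_i^+\mid S_i)\le\sum_{v\in O_i^+}\pi_i(v\mid S_i)$), the injection of Step~1, and $\pi_j(D_j\mid S_j)\ge0$ (monotonicity, so the remaining advertisers contribute non-negative terms),
\[
\sum_{i:\,D_i=\emptyset}\pi_i(O_i^+\mid S_i)\ \le\ \sum_{i:\,D_i=\emptyset}\sum_{v\in O_i^+}\pi_{j(v,i)}\!\big(D_{j(v,i)}\mid S_{j(v,i)}\big)\ \le\ \sum_{j\in[h]}\pi_j(D_j\mid S_j),
\]
which is the claim.

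\textbf{Main obstacle.} The delicate point is Step~2: one must establish the strict ordering $s>t$, so that $(v,i)$ is genuinely still a live candidate when the greedy rule picks the stopple node $(v,j)$, and must check that $S_j$ is already frozen at iteration $t$ so the comparison yields exactly $\pi_j(D_j\mid S_j)$. Everything else is routine monotonicity/submodularity bookkeeping together with the (crucial but easy) fact that the loop can only terminate with $M=\emptyset$ once some advertiser has an empty $D_i$.
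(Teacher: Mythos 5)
Your proof is correct and follows essentially the same route as the paper's: first show that any node of $O_i^+$ with $D_i=\emptyset$ must end up as some advertiser's stopple node in $D_j$, then use the greedy (max-marginal-gain) selection rule to charge $\pi_i(v\mid S_i)\le\pi_j(D_j\mid S_j)$ and sum. The only difference is that you make explicit several details the paper leaves implicit (the loop terminating with $M=\emptyset$, the ordering $s>t$, the freezing of $S_j$ once $D_j\neq\emptyset$, and the injectivity of the charging map), all of which are correct.
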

\begin{proof}
%We first prove that: given any $i\in [h]$ satisfying $D_i=\emptyset$, we must have $W_i=\emptyset$ where $W_i=O_i^+\backslash (\bigcup_{j\in [h]}D_j)$. Indeed, if $W_i\neq \emptyset$, then there exists $u=\arg\max_{v\in W_i}\pi_i(u\mid S_i)$ satisfying  $u\notin \bigcup_{j\in [h]}S_j\cup D_j$ according to the definition of $O_i^+$, and hence $u$ should be added into $S_i$ or $D_i$ as $u$'s \tj{marginal rate} with respect to $S_i$ is larger than $\frac{\gamma}{B_i}$, but this contradicts $u\notin S_i\cup D_i$.

We first prove that: given any $i\in [h]$ satisfying $D_i=\emptyset$, if there exists certain node $u\in O_i^+$ , then we must have $u\in \bigcup_{j\in [h]}D_j$. This is due to the reason that, if $u\notin \bigcup_{j\in [h]}D_j$, then we must have $u\notin \bigcup_{j\in [h]}S_j\cup D_j$ according to the definition of $O_i^+$, and hence $u$ can be added into $S_i$ or $D_i$ as $u$'s \tj{marginal rate} with respect to $S_i$ is larger than $\frac{\gamma}{B_i}$, but this contradicts the fact that $O_i^+\cap S_i=\emptyset$ and $D_i=\emptyset$.

Now suppose that the node $u$ described above is in $D_j$ for certain $j\in [h]$. According to the greedy selection rule of our algorithm, we must have $\pi_i(u\mid S_i)\leq \pi_j(u\mid S_j)=\pi_j(D_j\mid S_j)$. Therefore, by considering all the nodes in $\bigcup_{i\colon D_i=\emptyset}O_i^+$ in the same way as described above, we can prove the lemma.
\end{proof}

\begin{lemma}
For any $i\in [h]$, we have $\pi_i(O^-_i\mid S_i) \leq \gamma$.
\label{lma:oiminuslessgamma}
\end{lemma}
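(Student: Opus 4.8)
The plan is to prove the bound by a short case analysis on the size of $\pi_i(O_i^-)$ relative to $\gamma$, using only submodularity of $\pi_i$, additivity of $c_i$, and the budget feasibility of the optimal solution $\vec{O}$.

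First I would observe that by submodularity of $\pi_i$ we have $\pi_i(O_i^-\mid S_i)\le \pi_i(O_i^-\mid\emptyset)=\pi_i(O_i^-)$, so if $\pi_i(O_i^-)\le\gamma$ the claim is immediate. It therefore suffices to handle the case $\pi_i(O_i^-)>\gamma$. Since $O_i^-\subseteq O_i'\subseteq O_i$, additivity of $c_i$ and monotonicity of $\pi_i$ give $c_i(O_i^-)+\pi_i(O_i^-)\le c_i(O_i)+\pi_i(O_i)\le B_i$, where the last inequality is the budget constraint satisfied by $\vec{O}$. Combined with $\pi_i(O_i^-)>\gamma$, this yields $c_i(O_i^-)<B_i-\gamma$, and in particular $B_i>\gamma$.

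Next I would exploit the defining property of $O_i^-$. For every $v\in O_i^-$ we have $\zeta_i(v\mid S_i)=\pi_i(v\mid S_i)/(c_i(v)+\pi_i(v\mid S_i))<\gamma/B_i$; clearing denominators (legitimate since $c_i(v)>0$) and rearranging gives $(B_i-\gamma)\,\pi_i(v\mid S_i)<\gamma\, c_i(v)$, hence $\pi_i(v\mid S_i)<\frac{\gamma}{B_i-\gamma}\,c_i(v)$ because $B_i-\gamma>0$. Summing over $v\in O_i^-$ and using subadditivity of marginal gains (again submodularity), $\pi_i(O_i^-\mid S_i)\le\sum_{v\in O_i^-}\pi_i(v\mid S_i)<\frac{\gamma}{B_i-\gamma}\sum_{v\in O_i^-}c_i(v)=\frac{\gamma\, c_i(O_i^-)}{B_i-\gamma}<\gamma$, where the final step uses $c_i(O_i^-)<B_i-\gamma$ established above. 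This proves the lemma.

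The only point requiring care is that one cannot simply divide by $B_i-\gamma$ without first knowing it is positive: $\gamma$ ranges up to roughly $\gamma_{max}$, which can in principle exceed a given $B_i$. The case split on $\pi_i(O_i^-)$ versus $\gamma$ is precisely what resolves this, since in the nontrivial case the budget feasibility of $O_i$ forces $B_i>\gamma$. Everything else is a routine rearrangement together with one application each of the monotone and submodular properties of $\pi_i$.
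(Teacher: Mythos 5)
Your proof is correct, but it takes a genuinely different route from the paper's. The paper reuses the averaging/telescoping argument from the proof of Theorem~\ref{thm:greedyratioforsa}: it orders $O_i^-$ greedily, notes that the largest marginal rate $r(v_1)=\zeta_i(v_1\mid S_i)$ is at least the weighted average $\sum_j \pi_i(v_j\mid S_i\cup\{v_1,\dots,v_{j-1}\})\big/\sum_j\bigl(c_i(v_j)+\pi_i(v_j\mid\cdot)\bigr)\ge \pi_i(O_i^-\mid S_i)/B_i$, and combines this with $r(v_1)<\gamma/B_i$; crucially, keeping the full ``price'' $c_i(v)+\pi_i(v\mid\cdot)$ in the denominator lets it divide by $B_i$ directly, so no positivity issue ever arises and no case split is needed. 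You instead rearrange the rate condition element by element into $\pi_i(v\mid S_i)<\tfrac{\gamma}{B_i-\gamma}c_i(v)$, which isolates the cost alone in the denominator and therefore forces you to first establish $B_i>\gamma$; your case split on $\pi_i(O_i^-)$ versus $\gamma$ handles this cleanly, since in the nontrivial case budget feasibility of $O_i$ indeed forces $B_i\ge c_i(O_i^-)+\pi_i(O_i^-)>\gamma$. Both arguments use the same three ingredients (the defining rate bound on $O_i^-$, budget feasibility of $\vec{O}$, and submodularity of $\pi_i$), so the trade-off is essentially stylistic: yours is more elementary (no ordering, no weighted-average inequality) at the cost of an extra case, while the paper's is a one-shot argument that also directly parallels the proof of Theorem~\ref{thm:greedyratioforsa}. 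One cosmetic point: your final chain yields a strict inequality only when $O_i^-\neq\emptyset$ and $\gamma>0$, but both degenerate cases fall into your first case, so nothing breaks.
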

\begin{proof}
We sort the nodes in ${O}^-_i$ into $\{v_1,\dotsc,v_q\}$ such that $r(v_1)\geq r(v_2)\geq \dotsb \geq r(v_q)$, where $r(v_j)$ is defined as:
\begin{equation*}
\tj{r(v_j) = \zeta_i(v_j \mid S_i\cup \{v_1,\dotsc, v_{j-1}\})},~\text{ for every }j\in [q]
\end{equation*}
Therefore, we must have \tj{$r(v_1)=\zeta_i(v_1 \mid S_i)\leq {\gamma}/{B_i}$} as $v_1\in O^-_i$. Moreover, using similar reasoning with that in Theorem~\ref{thm:greedyratioforsa}, we get
\begin{equation*}
{\gamma}/{B_i}\geq r(v_1) \geq \big({\pi_i(S_i\cup O_i^-)- \pi_i(S_i)}\big)/{B_i},
\end{equation*}
so the lemma follows.
\end{proof}

\begin{lemma}
We have $\pi(\vec{O}^*\mid \vec{S})\leq \pi(\vec{S})$.
\label{lma:ostarlesssd}
\end{lemma}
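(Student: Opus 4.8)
The plan is to reduce the statement to an element‑by‑element comparison between $\vec{O}^*$ and the greedily selected list $\vec{S}$, and to charge the revenue of each missed optimal pair to a distinct ``slot'' of the greedy solution. First, by submodularity of every $\pi_j$ and the identity $\pi(\vec{O}^*\mid\vec{S})=\sum_{j\in[h]}\pi_j(O^*_j\mid S_j)$, one gets $\pi(\vec{O}^*\mid\vec{S})\le\sum_{(v,j)\in\vec{O}^*}\pi_j(v\mid S_j)$, so it suffices to bound this sum by $\pi(\vec{S})$. Second, by definition of $Z$ and $\psi$, every $(v,j)\in\vec{O}^*\subseteq Z$ has its node $v$ lying in a unique set $S_{i}$ with $i\ne j$ (since $(v,j)\notin\vec{S}$), and because both $\vec{O}$ and $\vec{S}$ respect the partition‑matroid constraint, the map $(v,j)\mapsto(v,i)$ is an injection of $\vec{O}^*$ into $\vec{S}$. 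Third, writing the greedy revenue as a telescoping sum, $\pi(\vec{S})=\sum_{i\in[h]}\pi_i(S_i)=\sum_{i\in[h]}\sum_{u\in S_i}\pi_i(u\mid S_i^{u^-})$ where $S_i^{u^-}$ is the state of $S_i$ just before $u$ is inserted, it is enough to show, for each $(v,j)\in\vec{O}^*$ whose node is inserted into $S_i$, that $\pi_j(v\mid S_j)\le\pi_i(v\mid S_i^{u^-})$, and then sum over the injection.

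For the typical situation this charging is immediate: if $(v,j)$ is still a live candidate in $M$ at the iteration in which $(v,i)$ is drawn (Line~\ref{ln:greedyruleinthreshold}) and $v$ is put into $S_i$, then the maximum‑marginal‑gain rule gives $\pi_i(v\mid S_i^{u^-})\ge\pi_j(v\mid S_j^{\text{then}})\ge\pi_j(v\mid S_j)$, the last inequality by submodularity since $S_j$ only grows. The remaining work is to deal with the case where $(v,j)$ has already been removed from $M$ before $v$ is inserted. Inspecting the pruning rules (Lines~\ref{ln:startadding}--\ref{ln:endadding}): $(v,j)$ cannot have been discarded because its node was already placed (nodes, once placed, stay placed, contradicting that $v$ is inserted later), and it cannot have failed the threshold test $\zeta_j(v\mid S_j\cup D_j)\ge\gamma/B_j$, because $\zeta_j(v\mid\cdot)$ is non‑increasing in the conditioning set (as $\pi_j(v\mid\cdot)$ is non‑increasing and $t\mapsto t/(t+c_j(v))$ is increasing) while $(v,j)\in\vec{O}^*$ already guarantees $\zeta_j(v\mid S_j)\ge\gamma/B_j$ for the final $S_j$. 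Hence this case arises only when $D_j\ne\emptyset$ at the moment $(v,j)$ is processed, i.e. advertiser $j$'s budget is already depleted, in which case $S_j$ is already at its final state.

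Handling this last case is the hard part, since there the direct inequality $\pi_j(v\mid S_j)\le\pi_i(v\mid S_i^{u^-})$ can genuinely fail. The approach I would take is to bound $\pi_j(v\mid S_j)$ by the marginal gain $\pi_j(D_j\mid S_j)$ of advertiser $j$'s stopple node (which was processed no later than $(v,j)$, while $S_j$ was already final, hence had at least as large a marginal gain), and to absorb the total excess into $\pi(\vec{S})$ by exploiting that every element inserted into $S_j$ was chosen — with marginal gain at least $\pi_j(v\mid S_j)$ — strictly before any such discarded pair, together with the budget relations $\pi_j(S_j\cup D_j)+c_j(S_j\cup D_j)>B_j\ge \pi_j(O_j)+c_j(O_j)$. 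Making this accounting precise, in particular ensuring that the charges used in the typical case and in this exceptional case never collide and that their combined total is still at most the telescoped $\pi(\vec{S})$, is where I expect the real technical effort to lie; everything before it is routine bookkeeping with submodularity.
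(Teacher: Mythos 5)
Your proposal follows the same skeleton as the paper's proof: the telescoping identity $\pi(\vec{S})=\sum_{(u,i)\in\vec{S}}\pi\big((u,i)\mid\mathrm{Pre}(u,i)\big)$, the injection of $\vec{O}^*$ into $\vec{S}$ induced by $\psi$ and the disjointness constraints, a per-element comparison of marginal gains justified by the argmax rule of Line~\ref{ln:greedyruleinthreshold}, and submodularity to pass from the prefix to the final $\vec{S}$. Up to that point your bookkeeping is correct, and so is your observation that the comparison $\pi_j(v\mid S_j)\le\pi_i(v\mid S_i^{u^-})$ is immediate only when $(v,j)$ is still present in $M$ at the iteration where $(v,i)$ is extracted; your elimination of the sub-cases where $(v,j)$ was dropped because $v$ had already been placed, or because the threshold test failed while $D_j=\emptyset$ (using that $\zeta_j(v\mid\cdot)$ is non-increasing together with $(v,j)\in\vec{O}^*$), is also sound.

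The problem is that you stop exactly where the argument becomes nontrivial. In the remaining case --- $(v,j)$ popped from $M$ and discarded at Line~\ref{ln:startadding} because $D_j\neq\emptyset$, with $v$ inserted into some $S_i$ only afterwards --- you concede that the needed inequality ``can genuinely fail'' (it can: once $D_j\ne\emptyset$ the set $S_j$ is frozen, so $\pi_j(v\mid S_j)$ equals its value at the moment $(v,j)$ was the argmax, which dominates the later, shrunken gain $\pi_i(v\mid S_i^{u^-})$), and such $(v,j)$ can indeed still belong to $\vec{O}^*$ because freezing $S_j$ also keeps $\zeta_j(v\mid S_j)$ from dropping below $\gamma/B_j$. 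Your proposed repair --- bound $\pi_j(v\mid S_j)$ by $\pi_j(D_j\mid S_j)$ and ``absorb the excess'' --- is not carried out and does not obviously close: several elements of $O_j^*$ would each be charged against the single quantity $\pi_j(D_j\mid S_j)$, and since $D_j\not\subseteq\vec{S}$ this quantity is not a summand of the telescoped $\pi(\vec{S})$, so there is no slot of the greedy solution left to absorb it into. As written, the proposal is therefore not a proof. For comparison, the paper's own proof performs no case analysis at all: it asserts $\pi\big((u,i)\mid\mathrm{Pre}(u,i)\big)\ge\pi\big(\psi(u,i)\mid\mathrm{Pre}(u,i)\big)$ for every $(u,i)\in\vec{S}$ with $\psi(u,i)\ne\mathrm{NULL}$, citing only the greedy rule and submodularity --- a justification that, as your own analysis shows, is airtight precisely when $\psi(u,i)$ is still a live candidate in $M$ at the moment $(u,i)$ is selected. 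So the case you could not close is also the one the paper's one-line justification does not address; a complete argument must either rule that case out of $\vec{O}^*$ or supply a genuinely new accounting for it, neither of which your write-up does.
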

\begin{proof}
For any $(u,i)\in \vec{S}$, we use $\mathrm{Pre}(u,i)$ to denote the set of all elements added into $\vec{S}$ before $(u,i)$ by $\mathsf{ThresholdGreedy}(\gamma)$. Clearly, we have $\sum_{(u,i)\in \vec{S}}\pi\big((u,i)\mid \mathrm{Pre}(u,i)\big)=\pi(\vec{S})$ and hence
\begin{align}
\pi(\vec{S})
&\geq \sum_{(u,i)\in \vec{S}\wedge \psi(u,i)\neq \mathrm{NULL}} \pi\big((u,i)\mid \mathrm{Pre}(u,i)\big) \nonumber\\
&\geq \sum_{(u,i)\in \vec{S}\wedge \psi(u,i)\neq \mathrm{NULL}} \pi\big(\psi(u,i)\mid \mathrm{Pre}(u,i)\big) \label{eqn:duetosub12}\\
&\geq \sum_{(u,i)\in \vec{S}\wedge \psi(u,i)\neq \mathrm{NULL}} \pi\big(\psi(u,i)\mid \vec{S}\big) \label{eqn:duetogreedy1}\\
&\geq \sum\nolimits_{(u,j)\in \vec{O}^*} \pi\big((u,j) \mid \vec{S}\big) \geq \pi(\vec{S}\cup \vec{O}^*)- \pi(\vec{S}), \nonumber%\label{eqn:duetosub2}
\end{align}
where \eqref{eqn:duetosub12} and \eqref{eqn:duetogreedy1} are due to the greedy rule of Algorithm~\ref{alg:thresholdgreedy} in Line~\ref{ln:greedyruleinthreshold} and the submodularity of $\pi(\cdot)$. So the lemma follows.
\end{proof}

Based on the quantitative relationships found in Lemma~\ref{lma:diemptyset} to Lemma~\ref{lma:ostarlesssd}, we then prove Theorems~\ref{thm:casebgeq2}--\ref{thm:casebeq0} in the following, which completes the proof of Theorem~\ref{thm:boundofthresholdgreedy}.

\begin{theorem}
We have $\pi(\vec{S}^*)\geq b\gamma/2$ when $b\geq 2$.\label{thm:casebgeq2}
\end{theorem}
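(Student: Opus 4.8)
The plan is to prove this per advertiser: I will show that every advertiser $i$ in the depleted set $I$ contributes at least $\gamma/2$ to the returned revenue $\pi(\vec{S}^*)$, and then simply sum over the $b=|I|\ge 2$ such advertisers. Note that, unlike the $b\le 1$ cases, this bound does not involve $\mathrm{OPT}$, so Lemmas~\ref{lma:diemptyset}--\ref{lma:ostarlesssd} (which compare against $\vec{O}$) are not needed here.

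Fix $i\in I$ and list the nodes $u_1,\dots,u_{k_i}$ added to $S_i$ in order, together with the stopple node $u_{k_i+1}$ stored in $D_i$; write $S_i^{(j)}=\{u_1,\dots,u_j\}$. The crucial observation is that \emph{each} of these nodes, including $u_{k_i+1}$, cleared the test in Line~\ref{ln:startadding} at a time when $D_i$ was still empty, so $\zeta_i(u_j\mid S_i^{(j-1)})\ge \gamma/B_i$, equivalently $\pi_i(u_j\mid S_i^{(j-1)})\ge \frac{\gamma}{B_i}\bigl(c_i(u_j)+\pi_i(u_j\mid S_i^{(j-1)})\bigr)$. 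Telescoping the marginal gains over $j=1,\dots,k_i+1$ turns the left-hand sum into $\pi_i(S_i\cup D_i)$ and the right-hand sum into $\frac{\gamma}{B_i}\bigl(c_i(S_i\cup D_i)+\pi_i(S_i\cup D_i)\bigr)$. Since $u_{k_i+1}$ is a stopple node, $c_i(S_i\cup D_i)+\pi_i(S_i\cup D_i)=c_i(S_i\cup\{u_{k_i+1}\})+\pi_i(S_i\cup\{u_{k_i+1}\})>B_i$, hence $\pi_i(S_i\cup D_i)\ge \gamma$. By monotonicity and submodularity of $\pi_i$ (with $\pi_i(\emptyset)=0$), $\pi_i(S_i)+\pi_i(D_i)\ge \pi_i(S_i\cup D_i)\ge \gamma$, so $\max\{\pi_i(S_i),\pi_i(D_i)\}\ge \gamma/2$. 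Because $b\ge 2$ means $|I|\ge 2$, the single-advertiser branch of Line~\ref{ln:callnaivegreedy} is never entered, so $A_i=\emptyset$; thus $S_i'$ chosen in Line~\ref{ln:largestone} has $\pi_i(S_i')=\max\{\pi_i(S_i),\pi_i(D_i)\}\ge\gamma/2$, and since $\mathsf{Fill}$ only adds nodes, $\pi_i(\vec{S}^*_i)\ge\gamma/2$. Summing over $i\in I$ gives $\pi(\vec{S}^*)=\sum_{j\in[h]}\pi_j(\vec{S}^*_j)\ge\sum_{i\in I}\pi_i(\vec{S}^*_i)\ge b\gamma/2$.

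I expect the only delicate point to be the bookkeeping around the stopple node: one must notice that the stopple node is itself subject to the marginal-rate threshold (it is appended to $D_i$ only when $\zeta_i(\cdot)\ge\gamma/B_i$ and $D_i$ is empty at that instant) and then combine this with the precise stopple-node inequality $c_i(S_i\cup\{u\})+\pi_i(S_i\cup\{u\})>B_i$ to land exactly on $\pi_i(S_i\cup D_i)\ge\gamma$ rather than a weaker estimate. Everything else — the telescoping of marginal gains, the submodular split of $S_i\cup D_i$ into $S_i$ and $D_i$, and the fact that the $\arg\max$ of Line~\ref{ln:largestone} together with $\mathsf{Fill}$ can only increase revenue — is routine.
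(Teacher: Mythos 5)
Your proof is correct and follows essentially the same route as the paper's: telescoping the marginal-rate threshold $\zeta_i(u_j\mid S_i^{(j-1)})\ge\gamma/B_i$ over the nodes of $S_i\cup D_i$ (including the stopple node), invoking $c_i(S_i\cup D_i)+\pi_i(S_i\cup D_i)\ge B_i$ to conclude $\pi_i(S_i\cup D_i)\ge\gamma$, and then splitting via subadditivity to get $\max\{\pi_i(S_i),\pi_i(D_i)\}\ge\gamma/2$ before summing over $I$. Your added bookkeeping about $A_i=\emptyset$ and $\mathsf{Fill}$ only increasing revenue is implicit in the paper's final inequality but does not change the argument.
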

\begin{proof}
Note that $|I|=b$. Consider any $i\in I$ and suppose that the nodes sequentially added into $S_i$ are $u_1,\dotsc, u_{k-1}$ and $D_i=\{u_{k}\}$. %Let $L_j$ denote the set $\{u_1,\cdots, u_j\}$.
Then we have $c_i(\{u_1,\dotsc, u_k\})+\pi_i(\{u_1,\dotsc, u_k\})\geq B_i$ and
\begin{equation*}
%& c_i(\{u_1,\dotsc, u_k\})+\pi_i(\{u_1,\dotsc, u_k\})\geq B_i,\\
\tj{\zeta_i(u_{j}\mid \{u_1,\dotsc, u_{j-1}\})\geq {\gamma}/{B_i} \text{ for every }j\in [k].}
\end{equation*}
Therefore, we have
\begin{align*}
%&& \pi_i({S}_i\cup {D}_i)=\pi_i(\{u_1,\cdots, u_k\}) \nonumber\\
\pi_i({S}_i\cup {D}_i)
&= \sum\nolimits_{j=1}^k \pi_i(u_{j}\mid \{u_1,\dotsc, u_{j-1}\})\\
&\geq\sum\nolimits_{j=1}^k\frac{\gamma}{B_i}\big(c_i(u_j)+\pi_i(u_{j}\mid \{u_1,\dotsc, u_{j-1}\})\big)\\
&=\frac{\gamma}{B_i}\big(c_i(\{u_1,\dotsc, u_k\})+\pi_i(\{u_1,\dotsc, u_k\})\big)\geq \gamma,
\end{align*}
and hence
\begin{equation*}
\pi(\vec{S}^*)\geq \sum_{i\in I}\max\{\pi_i(S_i),\pi_i(D_i)\}\geq \sum_{i\in I}\frac{\pi_i(S_i\cup D_i)}{2}\geq \frac{b\gamma}{2}.
\end{equation*}
%\begin{eqnarray}
%\pi(\vec{S}^*)&\geq& \sum\nolimits_{i\in I}\max\{\pi_i(S_i),\pi_i(D_i)\}\nonumber\\
%&\geq& \sum\nolimits_{i\in I}\pi_i(S_i\cup D_i)/2\geq b\gamma/2 \nonumber
%\end{eqnarray}
So the theorem follows.
\end{proof}

\begin{theorem}
We have $\pi(\vec{S}^*)\geq \max\{\frac{1}{6}(\mathrm{OPT}-h\gamma),\frac{\gamma}{2}\}$ when $b=1$.\label{thm:casebeq1}
\end{theorem}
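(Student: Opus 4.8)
The plan is to establish the two lower bounds $\pi(\vec{S}^*)\ge\gamma/2$ and $\pi(\vec{S}^*)\ge\frac16(\mathrm{OPT}-h\gamma)$ separately. Since $b=|I|=1$, let $i_0$ be the unique advertiser with $D_{i_0}\ne\emptyset$ (so $D_j=\emptyset$ for all $j\ne i_0$); then $\mathsf{ThresholdGreedy}$ enters Line~\ref{ln:callnaivegreedy} and computes $A_{i_0}=\mathsf{Greedy}(V-\bigcup_j S_j,\,i_0)$. Throughout I would use that $\pi(\vec{S}^*)\ge\pi(\vec{S}')$ (the routine $\mathsf{Fill}$ only adds seeds and $\pi$ is monotone) and that, by Line~\ref{ln:largestone}, $\pi(\vec{S}')\ge\pi(\vec{S})$, $\pi(\vec{S}')\ge\pi_{i_0}(D_{i_0})$, and $\pi(\vec{S}')\ge\pi_{i_0}(A_{i_0})$.

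For the bound $\pi(\vec{S}^*)\ge\gamma/2$ I would simply reuse the computation from the proof of Theorem~\ref{thm:casebgeq2}, specialized to $i_0$: every node placed into $S_{i_0}\cup D_{i_0}$ has marginal rate at least $\gamma/B_{i_0}$, and once the stopple node is added the total payment reaches $B_{i_0}$, so $\pi_{i_0}(S_{i_0}\cup D_{i_0})\ge\gamma$; subadditivity then gives $\max\{\pi_{i_0}(S_{i_0}),\pi_{i_0}(D_{i_0})\}\ge\frac12\pi_{i_0}(S_{i_0}\cup D_{i_0})\ge\gamma/2$, which is at most $\pi(\vec{S}^*)$.

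For the bound $\pi(\vec{S}^*)\ge\frac16(\mathrm{OPT}-h\gamma)$, I would start from $\mathrm{OPT}=\pi(\vec{O})\le\pi(\vec{S})+\pi(\vec{O}'\mid\vec{S})$, where $\vec{O}'=\vec{O}-\vec{S}$ (this follows from monotonicity, using $\pi_j(O_j\mid S_j)=\pi_j(O_j'\mid S_j)$), then split $\vec{O}'$ along the partition $\{\vec{O}^-,\vec{O}^*,\vec{O}^+\}$ of Definition~\ref{def:splitoptsolution} and use subadditivity of marginals to obtain $\mathrm{OPT}\le\pi(\vec{S})+\pi(\vec{O}^-\mid\vec{S})+\pi(\vec{O}^*\mid\vec{S})+\pi(\vec{O}^+\mid\vec{S})$. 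Here Lemma~\ref{lma:oiminuslessgamma} gives $\pi(\vec{O}^-\mid\vec{S})=\sum_j\pi_j(O_j^-\mid S_j)\le h\gamma$, and Lemma~\ref{lma:ostarlesssd} gives $\pi(\vec{O}^*\mid\vec{S})\le\pi(\vec{S})$. For the last term I would write $\pi(\vec{O}^+\mid\vec{S})=\sum_{j:D_j=\emptyset}\pi_j(O_j^+\mid S_j)+\pi_{i_0}(O_{i_0}^+\mid S_{i_0})$, and Lemma~\ref{lma:diemptyset} bounds the first sum by $\sum_j\pi_j(D_j\mid S_j)=\pi_{i_0}(D_{i_0}\mid S_{i_0})\le\pi_{i_0}(D_{i_0})$ since all $D_j$ with $j\ne i_0$ are empty.

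The main obstacle is bounding the residual term $\pi_{i_0}(O_{i_0}^+\mid S_{i_0})$, and this is precisely what $A_{i_0}$ is for. By the definition of $\vec{O}^+$ (which excludes $Z$), no node of $O_{i_0}^+$ lies in $\bigcup_j S_j$, and from $O_{i_0}^+\subseteq O_{i_0}$ and monotonicity of $c_{i_0}$ and $\pi_{i_0}$ we get $c_{i_0}(O_{i_0}^+)+\pi_{i_0}(O_{i_0}^+)\le c_{i_0}(O_{i_0})+\pi_{i_0}(O_{i_0})\le B_{i_0}$; hence $O_{i_0}^+$ is a feasible solution of the single-advertiser RM instance with budget $B_{i_0}$ on candidate set $V-\bigcup_j S_j$ that $\mathsf{Greedy}$ solves, so Theorem~\ref{thm:greedyratioforsa} yields $\pi_{i_0}(A_{i_0})\ge\frac13\pi_{i_0}(O_{i_0}^+)\ge\frac13\pi_{i_0}(O_{i_0}^+\mid S_{i_0})$. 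Assembling everything, $\mathrm{OPT}\le 2\pi(\vec{S})+h\gamma+\pi_{i_0}(D_{i_0})+3\pi_{i_0}(A_{i_0})\le 6\,\pi(\vec{S}^*)+h\gamma$, using $\pi(\vec{S}),\pi_{i_0}(D_{i_0}),\pi_{i_0}(A_{i_0})\le\pi(\vec{S}^*)$; rearranging gives $\pi(\vec{S}^*)\ge\frac16(\mathrm{OPT}-h\gamma)$, and together with the first bound this proves the theorem. The only delicate checks, beyond invoking the stated lemmas, are verifying that $O_{i_0}^+$ is indeed admissible as a candidate set for the call to $\mathsf{Greedy}$ (no node removed by its budget-pruning step, i.e.\ each singleton of $O_{i_0}^+$ is feasible) and that exactly one advertiser contributes a nonempty $D_j$, which makes the Lemma~\ref{lma:diemptyset} bound collapse to a single term.
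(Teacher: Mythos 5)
Your proof is correct and follows essentially the same route as the paper's: the $\gamma/2$ bound via the Theorem~\ref{thm:casebgeq2} argument, and the $\frac{1}{6}(\mathrm{OPT}-h\gamma)$ bound via the partition $\{\vec{O}^-,\vec{O}^*,\vec{O}^+\}$ with Lemmas~\ref{lma:oiminuslessgamma}, \ref{lma:ostarlesssd}, \ref{lma:diemptyset} and the $3$-approximation of the auxiliary $\mathsf{Greedy}$ call, assembled into the identical inequality $\mathrm{OPT}\le 2\pi(\vec{S})+h\gamma+4\pi(\vec{S}^*)$. Your explicit verification that $O_{i_0}^+$ is disjoint from $\bigcup_j S_j$ and budget-feasible for the single-advertiser instance is a welcome detail that the paper compresses into the phrase ``due to Theorem~\ref{thm:greedyratioforsa} and Line~\ref{ln:callnaivegreedy}.''
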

\begin{proof}
We can prove $\pi(\vec{S}^*)\geq \gamma/2$ by similar reasoning with that in Theorem~\ref{thm:casebgeq2}. Next, we prove $\pi(\vec{S}^*)\geq \frac{1}{6}(\mathrm{OPT}-h\gamma)$.  According to Lemma~\ref{lma:ostarlesssd}, Lemma~\ref{lma:diemptyset} and Lemma~\ref{lma:oiminuslessgamma}, we can get $\pi(\vec{O}^*\mid \vec{S})\leq \pi(\vec{S}),~\pi(\vec{O}^-\mid \vec{S})\leq h\gamma$
%\begin{eqnarray}
%\pi(\vec{O}^*\mid \vec{S})\leq \pi(\vec{S});~~\pi(\vec{O}^-\mid \vec{S})\leq h\gamma;
%\end{eqnarray}
and
\begin{equation*}
\sum\nolimits_{j\neq i} \pi_j(O_j^+\mid S_j)\leq \sum\nolimits_{\ell\in [h]} \pi_\ell(D_\ell \mid S_\ell)=\pi_i(D_i\mid S_i),
\end{equation*}
where $i$ is the advertiser in $I$. Besides, we have $\pi_i(O_i^+)\leq 3\pi_i(S_i^*)$ due to Theorem~\ref{thm:greedyratioforsa} and and Line~\ref{ln:callnaivegreedy} of $\mathsf{ThresholdGreedy}(\gamma)$. Using the above results, we get
\begin{align*}
\pi(\vec{O})-\pi(\vec{S})
&\leq \pi(\vec{O}\mid \vec{S})\\
&\leq \pi(\vec{O}^* \mid \vec{S}) + \pi(\vec{O}^- \mid \vec{S})+ \pi(\vec{O}^+\mid \vec{S}) \\
&\leq \pi(\vec{S})+ h\gamma + \sum\nolimits_{j\neq i} \pi_j(O_j^+\mid S_j) + \pi_i(O_i^+\mid S_i)\\
&\leq \pi(\vec{S})+ h\gamma + \pi_i(D_i\mid S_i) + 3\pi_i(S_i^*)\\
&\leq \pi(\vec{S})+ h\gamma + 4\pi(\vec{S}^*).
\end{align*}
So the theorem follows by combining the above inequality with $\pi(\vec{S})\leq \pi(\vec{S}^*)$.
\end{proof}

\begin{theorem}
%When the $\mathsf{Thresholded\_Greedy}$ algorithm returns a solution $(\vec{S}, \vec{S}')$ satisfying
We have ${\pi}(\vec{S}^*)\geq \frac{1}{2}(\mathrm{OPT}-h\gamma)$ when $b=0$.\label{thm:casebeq0}
\end{theorem}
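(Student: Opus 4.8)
The plan is to exploit that the case $b=0$ is by far the easiest of the three subcases of Theorem~\ref{thm:boundofthresholdgreedy}: it essentially just assembles Lemmas~\ref{lma:diemptyset}--\ref{lma:ostarlesssd}. First I would record the structural consequences of $b=0$. Since $b=|I|$ and advertiser $i$ is added to $I$ precisely when a stopple node is placed in $D_i$ (Lines~\ref{ln:startadding}--\ref{ln:endadding} of $\mathsf{ThresholdGreedy}$), having $b=0$ means $D_j=\emptyset$ for every $j\in[h]$; moreover the ``$|I|=1$'' branch is not entered, so $A_j=\emptyset$ for every $j$ as well. Hence $S_j'=S_j$ in Line~\ref{ln:largestone}, and since $\mathsf{Fill}$ only adds nodes, monotonicity of $\pi$ gives $\pi(\vec{S}^*)\ge\pi(\vec{S})$. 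So it suffices to prove $\pi(\vec{S})\ge\frac{1}{2}(\mathrm{OPT}-h\gamma)$.

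Next I would bound each block of the partition $\{\vec{O}^-,\vec{O}^*,\vec{O}^+\}$ of $\vec{O}'=\vec{O}-\vec{S}$ (Definition~\ref{def:splitoptsolution}) by its marginal value over $\vec{S}$. Since every $D_j=\emptyset$, Lemma~\ref{lma:diemptyset} collapses to
$\pi(\vec{O}^+\mid\vec{S})=\sum_{i\in[h]}\pi_i(O_i^+\mid S_i)\le\sum_{j\in[h]}\pi_j(D_j\mid S_j)=0$,
and monotonicity of $\pi$ forces $\pi(\vec{O}^+\mid\vec{S})=0$. Summing Lemma~\ref{lma:oiminuslessgamma} over $i\in[h]$ gives $\pi(\vec{O}^-\mid\vec{S})=\sum_{i\in[h]}\pi_i(O_i^-\mid S_i)\le h\gamma$. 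Lemma~\ref{lma:ostarlesssd} directly gives $\pi(\vec{O}^*\mid\vec{S})\le\pi(\vec{S})$.

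Finally I would combine these using monotonicity and submodularity of $\pi$. Because $\vec{O}\subseteq\vec{O}'\cup\vec{S}$, monotonicity gives $\mathrm{OPT}-\pi(\vec{S})\le\pi(\vec{O}\mid\vec{S})=\pi(\vec{O}'\mid\vec{S})$, and expanding $\vec{O}'$ along the partition and telescoping marginal gains with submodularity,
$\pi(\vec{O}'\mid\vec{S})\le\pi(\vec{O}^-\mid\vec{S})+\pi(\vec{O}^*\mid\vec{S})+\pi(\vec{O}^+\mid\vec{S})\le h\gamma+\pi(\vec{S})+0$.
Rearranging yields $\mathrm{OPT}-h\gamma\le2\pi(\vec{S})\le2\pi(\vec{S}^*)$, which is the claim. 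There is no real obstacle here, since the heavy lifting already sits in Lemmas~\ref{lma:diemptyset}--\ref{lma:ostarlesssd}; the only points needing care are that $b=0$ genuinely forces every $D_j$ empty (so the right-hand side of Lemma~\ref{lma:diemptyset} degenerates to $0$), and the bookkeeping that expanding $\vec{O}'$ over its partition costs only one submodularity inequality.
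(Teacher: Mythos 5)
Your proposal is correct and follows essentially the same route as the paper: it uses the same partition $\{\vec{O}^-,\vec{O}^*,\vec{O}^+\}$ of $\vec{O}-\vec{S}$, invokes Lemmas~\ref{lma:diemptyset}--\ref{lma:ostarlesssd} in exactly the same way (with the $b=0$ case forcing every $D_j=\emptyset$ so that the $\vec{O}^+$ term vanishes), and combines them via one submodularity inequality before rearranging. The only difference is that you spell out a few bookkeeping steps (why $\vec{S}^*\supseteq\vec{S}$ after $\mathsf{Fill}$, why $A_j=\emptyset$) that the paper leaves implicit.
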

\begin{proof}
When $b=0$, we must have $\vec{D}=\emptyset$. So we can use Lemma~\ref{lma:diemptyset} to Lemma~\ref{lma:ostarlesssd} to get $\pi(\vec{O}^*\mid \vec{S})\leq \pi(\vec{S})$ and
\begin{align*}
&\pi(\vec{O}^+\mid \vec{S})=\sum\nolimits_{i\colon D_i=\emptyset} \pi_i(O_i^+\mid S_i)=0,\\
&\pi(\vec{O}^- \mid \vec{S})=\sum\nolimits_{i=1}^h \pi_i(O^-_i\mid S_i)\leq h\gamma.
\end{align*}
Combining the above equations, we get
\begin{align}
\pi(\vec{O})-\pi(\vec{S}^*) &\leq\pi(\vec{O})-\pi(\vec{S})\leq \pi(\vec{O}\mid \vec{S}) \nonumber\\
&\leq \pi(\vec{O}^- \mid \vec{S}) + \pi(\vec{O}^*\mid \vec{S}) + \pi(\vec{O}^+\mid \vec{S}) \nonumber\\
&\leq h\gamma +\pi(\vec{S})\leq h\gamma +\pi(\vec{S}^*).\label{eqn:toberearanging}
\end{align}
So the theorem then follows by re-arranging \eqref{eqn:toberearanging}.
\end{proof}

}

%\subsection{Proof of Lemma~\ref{lma:gammamaxisgood}}
%\begin{proof}
%Note that we have
%\begin{eqnarray}
%\frac{\mathrm{OPT}}{h}\leq \frac{\sum_{i\in [h]}\pi_i(O_i)}{\sum_{i\in h}[c_i(O_i)+\pi_i(O_i)]/B_i}
%\end{eqnarray}
%Besides, we can use the submodularity of $\pi_i(\cdot)$ to get
%\begin{eqnarray}
%&&\frac{\pi_i(O_i)}{[c_i(O_i)+\pi_i(O_i)]/B_i}\leq \frac{B_i\cdot\sum_{v\in O_i} \pi_i(v)}{\sum_{v\in O_i} (c_i(v)+ \pi_i(v))}\\
%&\leq& \max \left\{\frac{B_j\cdot \pi_j({v})}{c_j(v)+\pi_j({v})}\mid v\in V; j\in [h]\right\}= {\gamma_{max}}
%\end{eqnarray}
%So we have $\mathrm{OPT}/h\leq \gamma_{max}$.
%\end{proof}

%%%%%%%%%%%%%%%%%%%%%%%%%%%%%%%%%%%%%%%%%%%%%%%%%%%%%%%%%%%%%%%%%%%%%%%%%%%%%%%%%%%%%%%%%%%%%%%%%%%%

\section{Other Missing Proofs} \label{sec:othermissingproofs}

\subsection{Proof of Theorem~\ref{thm:greedyratioforsa}}
\begin{proof}
Suppose that the nodes sequentially added into $S_i$ are $u_1,\dotsc, u_{k-1}$ and $D_i=\{u_{k}\}$. We sort the nodes in ${O}_i\setminus S_i$ into $\{v_1,\dotsc,v_q\}$ such that $r(v_1)\geq r(v_2)\geq \dotsb \geq r(v_q)$, where $r(v_j)$ for every $j\in [q]$ is defined as:
% \begin{equation*}
% \tj{r(v_j) = \zeta_i(v_j \mid S_i\cup \{v_1,\dotsc, v_{j-1}\}).}
% \end{equation*}
\begin{equation*}
r(v_j) = \frac{\pi_i(v_j \mid S_i\cup \{v_1,\dotsc, v_{j-1}\})}{c_i(v_j)+\pi_i(v_j \mid S_i\cup \{v_1,\dotsc, v_{j-1}\})}.
\end{equation*}
% \begin{equation*}
% r(v_1)=\frac{\pi_i(v_1 \mid S_i)}{c_i(v_1)+\pi_i(v_1 \mid S_i)}\leq \frac{\pi_i(u_k \mid S_i)}{c_i(u_k)+\pi_i(u_k \mid S_i)}.
% \end{equation*}
Using the submodularity of $\pi_i(\cdot)$ \laks{and elementary algebra}, we get
\begin{align}
r(v_1)
&\geq\frac{ \sum_{j=1}^q r(v_j)\cdot \big(c_i(v_j)+\pi_i(v_j \mid S_i\cup \{v_1,\dotsc,v_{j-1}\})\big)}{\sum_{j=1}^q \big(c_i(v_j)+\pi_i(v_j \mid S_i\cup \{v_1,\dotsc,v_{j-1}\})\big)}\nonumber\\
% &\geq\frac{ \sum_{j=1}^q \pi_i(v_j \mid S_i\cup \{v_1,\dotsc, v_{j-1}\})}{\sum_{j=1}^q \big(c_i(v_j)+\pi_i(v_j \mid S_i\cup \{v_1,\dotsc,v_{j-1}\})\big)}\nonumber\\
&\geq \frac{ \sum_{j=1}^q \pi_i(v_j \mid S_i\cup  \{v_1,\dotsc, v_{j-1}\})}{\sum_{j=1}^q \big(c_i(v_j)+\pi_i(v_j \mid \{v_1,\dotsc, v_{j-1}\})\big)} \nonumber\\
&\geq \frac{\pi_i(S_i\cup O_i)- \pi_i(S_i)}{c_i(O_i)+\pi_i(O_i)}\geq \frac{\pi_i(O_i)- \pi_i(S_i)}{B_i}. \label{eqn:tocombineeqn1}
\end{align}
\revise{
Notice that by the greedy selection rule (Line 4 of Algorithm~\ref{alg:naivegreedy}), we have $u_k=\arg\max_{v\in U \setminus S_i} \zeta_i(v\mid S_i)$. And by definition, we have $v_1=\arg\max_{v\in O_i \setminus S_i} \zeta_i(v \mid S_i)$. Given that $(O_i \setminus S_i) \subseteq (U \setminus S_i)$, we have $r(v_1) = \zeta_i(v_1 \mid S_i) \le  \zeta_i(u_k \mid S_i)$.
%Note that $u_k=\arg\max_{u\in U\setminus S_i}\frac{\pi_i(u \mid S_i)}{c_i(u)+\pi_i(u \mid S_i)}$ according to the greedy selection rule in Line~\ref{ln:greedyselectionrule}, where $U$ is the LHS in Line 1.
Thus, we have}
%
%Similarly, due to the greedy selection rule in Line~\ref{ln:greedyselectionrule}, we also have
% \begin{equation}
% \tj{r(v_1)=\zeta_i(v_1 \mid S_i)\leq \zeta_i(u_k \mid S_i)\leq \zeta_i(S_i\cup D_i\mid \emptyset),}\label{eqn:tocombineeqn2}
% \end{equation}
\begin{equation}
r(v_1)\leq \frac{\pi_i(u_k \mid S_i)}{c_i(u_k)+\pi_i(u_k \mid S_i)}\leq \frac{\pi_i(S_i\cup D_i)}{c_i(S_i\cup D_i)+\pi_i(S_i\cup D_i)}.\label{eqn:tocombineeqn2}
\end{equation}
% \begin{equation}
% \tj{\zeta_i(S_i\cup D_i)\geq \zeta_i(u_k \mid S_i)\geq r(v_1).}\label{eqn:tocombineeqn2}
% \end{equation}
% \begin{equation}
% \frac{\pi_i(S_i\cup D_i)}{c_i(S_i\cup D_i)+\pi_i(S_i\cup D_i)}\geq \frac{\pi_i(u_k \mid S_i)}{c_i(u_k)+\pi_i(u_k \mid S_i)}\geq r(v_1).\label{eqn:tocombineeqn2}
% \end{equation}
Note that $c_i(S_i\cup D_i)+\pi_i(S_i\cup D_i)\geq B_i$. \eat{so that $B_i\cdot r(v_1)\leq \pi(S_i\cup D_i)$.} Thus, we can combine Eqn.~\eqref{eqn:tocombineeqn1} and Eqn.~\eqref{eqn:tocombineeqn2} to get
\begin{equation*}
\pi_i(O_i)\leq \pi_i(S_i)+ \pi(S_i\cup D_i)\leq 2\pi_i(S_i)+ \pi_i(D_i)\leq 3\pi_i(S_i^*),
\end{equation*}
which completes the proof.
\end{proof}

\subsection{Proof of Theorem~\ref{thm:proofofsearchtau2}}

\begin{proof}
%Note that in the execution, we try to guarantee that $b_1\geq 2$ and $b_2<2$ always holds.

When the $\mathsf{Search}(\tau,2)$ algorithm stops, one of the three cases holds: %must happen:

Case 1: $b_1<2$. This case implies that $\mathsf{Search}(\tau,2)$ returns $b_2<2$ and $\vec{T}_2^*=\mathsf{ThresholdGreedy}(0)$. Using Theorem~\ref{thm:boundofthresholdgreedy}, we have $\pi(\vec{S}^*)\geq \pi(\vec{T}^*_2)\geq {\mathrm{OPT}}/6$.
%\begin{eqnarray}
%\pi(\vec{S}^*)\geq \pi(\vec{T}^*_2)\geq {\mathrm{OPT}}/6%\geq \frac{\mathrm{OPT}}{(h+4+\alpha)(1+\tau)}
%\end{eqnarray}

Case 2: $b_1\geq 2$ and $\vec{T}_2^*= \emptyset$. This implies that $\mathsf{Search}(\tau,2)$ returns $\vec{T}_1^*=\mathsf{ThresholdGreedy}(\gamma_1)$ and $(1+\tau)\gamma_1\geq \gamma_2\geq\gamma_{max}$. \ca{Note that we have $\frac{\mathrm{OPT}}{h}\leq \frac{\sum_{i\in [h]}\pi_i(O_i)}{\sum_{i\in h}\big(c_i(O_i)+\pi_i(O_i)\big)/B_i}.$ Besides, by the submodularity of $\pi_i(\cdot)$ we get
\begin{align*}
\frac{\pi_i(O_i)}{\big(c_i(O_i)+\pi_i(O_i)\big)/B_i}
&\leq \frac{B_i\cdot\sum_{v\in O_i} \pi_i(v)}{\sum_{v\in O_i} \big(c_i(v)+ \pi_i(v)\big)}\\
&\leq \max \left\{B_j\cdot \zeta_j({v}\mid \emptyset)\colon v\in V, j\in [h]\right\}\\
&= {\gamma_{max}}.
\end{align*}
Thus, we have $\mathrm{OPT}/h\leq \gamma_{max}$.} Using these results and Theorem~\ref{thm:boundofthresholdgreedy}, we can get
%We also have $\mathrm{OPT}\leq h\cdot \gamma_{max}$ since
\eat{Moreover, we can prove $\mathrm{OPT}\leq h\cdot \gamma_{max}$~(the proof can be found in~\cite{techreport}).}
%\note[cigdem]{there is a reference to the tech report here violating anonymity. can we squeeze the proof in or provide a simple explanation based on submodularity of pi.? I'm trying to save from text in any case where possible.}
%Hence, we have $\mathrm{OPT}/h\leq \gamma_{max}$.
%Combining these with Theorem~\ref{thm:boundofthresholdgreedy} gives us
%This case implies that $\mathsf{Search}(\tau,2)$ returns $b_1\geq 2$, $\vec{T}_1^*=\mathsf{ThresholdGreedy}(\gamma_1)$ and $(1+\tau)\gamma_1\geq \gamma_2\geq\gamma_{max}$. Using Theorem~\ref{thm:boundofthresholdgreedy} and Lemma~\ref{lma:gammamaxisgood}, we can get
\begin{equation}\label{eqn:gammamaxisgood}
\pi(\vec{S}^*)\geq\pi(\vec{T}^*_1)\geq \frac{b_1{\gamma_1}}{2}\geq \frac{\gamma_{max}}{(1+\tau)}\geq  \frac{\mathrm{OPT}}{h(1+\tau)}.
\end{equation}

Case 3: $b_1\geq 2$ and $\vec{T}_2^*\neq \emptyset$. This case implies that $\mathsf{Search}(\tau,2)$ returns $\vec{T}_1^*=\mathsf{ThresholdGreedy}(\gamma_1)$, $\vec{T}_2^*=\mathsf{ThresholdGreedy}(\gamma_2)$ and $b_2<2$. Therefore, if $\gamma_1\geq \frac{\mathrm{OPT}}{(h+6)(1+\tau)}$, then we can use Theorem~\ref{thm:boundofthresholdgreedy} to get
\begin{equation}
\pi(\vec{S}^*)\geq \pi(\vec{T}^*_1)\geq \frac{b_1{\gamma_1}}{2}\geq \frac{\mathrm{OPT}}{(h+6)(1+\tau)};
\end{equation}
otherwise we must have $\gamma_2\leq {\mathrm{OPT}}/{(h+6)}$ according to the stopping condition in Line~\ref{ln:stoppcondsearch} and hence
\begin{equation}
\pi(\vec{T}^*_2)\geq \frac{1}{6}(\mathrm{OPT}-h\gamma_2)\geq \frac{\mathrm{OPT}}{h+6}.
\end{equation}

The theorem follows by combining the  above cases.
\end{proof}

\subsection{Proof of Theorem~\ref{thm:upperboundofrrsets}}

We first introduce the following concentration bounds:

\begin{lemma}
Given any solution $\vec{S}$ to the revenue maximization problem and any set $\mathcal{R}$ of RR-sets, we have
\begin{align*}
&\Pr \left[ \widetilde{\pi}(\vec{S},\mathcal{R})-{\pi}(\vec{S})\geq \frac{n\Gamma t}{|\mathcal{R}|} \right]\leq \exp\left(\frac{-t^2}{2{\pi}(\vec{S})\cdot \frac{|\mathcal{R}|}{n\Gamma}+\frac{2}{3}t}\right),\\
&\Pr \left[ \widetilde{\pi}(\vec{S},\mathcal{R})-{\pi}(\vec{S})\leq \frac{-n\Gamma t}{|\mathcal{R}|} \right]\leq \exp\left(\frac{-t^2}{2{\pi}(\vec{S})\cdot \frac{|\mathcal{R}|}{n\Gamma}}\right).
\end{align*}
\label{lma:concentrationbound}
\end{lemma}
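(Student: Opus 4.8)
\textbf{Proof proposal for Lemma~\ref{lma:concentrationbound}.}
The plan is to reduce both inequalities to standard Chernoff/Bernstein bounds for sums of independent Bernoulli variables, using Lemma~\ref{lem:samplingthm} to pin down the mean. First I would write $\rho = |\mathcal{R}|$ and list the RR-sets in $\mathcal{R}$ as $R_1,\dotsc,R_\rho$; since each is generated independently by the uniform sampling method, the variables $X_j := \Lambda(\vec{S},R_j)$ are i.i.d., each $\{0,1\}$-valued with common mean $p := \E[\Lambda(\vec{S},R)]$. By Lemma~\ref{lem:samplingthm}, $\pi(\vec{S}) = n\Gamma\cdot p$, so $p = \pi(\vec{S})/(n\Gamma)$. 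Setting $Y = \sum_{j=1}^{\rho} X_j$, we have $\E[Y] = \rho p = \pi(\vec{S})\cdot\frac{\rho}{n\Gamma}$, and by definition of $\widetilde{\pi}$,
\begin{equation*}
\widetilde{\pi}(\vec{S},\mathcal{R}) - \pi(\vec{S}) = \frac{n\Gamma}{\rho}\big(Y - \E[Y]\big).
\end{equation*}
Hence the event $\{\widetilde{\pi}(\vec{S},\mathcal{R}) - \pi(\vec{S}) \geq n\Gamma t/\rho\}$ is exactly $\{Y - \E[Y] \geq t\}$, and likewise $\{\widetilde{\pi}(\vec{S},\mathcal{R}) - \pi(\vec{S}) \leq -n\Gamma t/\rho\} = \{Y - \E[Y] \leq -t\}$.

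For the first (upper-tail) inequality I would invoke the Bernstein-type Chernoff bound for a sum $Y$ of independent random variables taking values in $[0,1]$: writing $\mu := \E[Y]$, and using that $\mathrm{Var}(X_j) = p(1-p) \le p$ so that $\sum_j \mathrm{Var}(X_j) \le \mu$, one gets $\Pr[Y - \mu \geq t] \leq \exp\!\big(-t^2/(2\mu + \tfrac{2}{3}t)\big)$. Substituting $\mu = \pi(\vec{S})\cdot\frac{\rho}{n\Gamma}$ yields the claimed bound. For the second (lower-tail) inequality I would use the classical multiplicative Chernoff bound $\Pr[Y \leq (1-\theta)\mu] \leq \exp(-\theta^2\mu/2)$ for $\theta\in(0,1)$; setting $\theta = t/\mu$ gives $\Pr[Y - \mu \leq -t] \leq \exp(-t^2/(2\mu))$, and again substituting $\mu = \pi(\vec{S})\cdot\frac{\rho}{n\Gamma}$ gives the stated inequality.

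If one prefers a self-contained argument rather than citing the inequalities above, I would instead bound the moment generating function $\E[e^{sY}] = (1 + p(e^s-1))^\rho \le \exp(\mu(e^s-1))$, then apply Markov's inequality to $e^{s(Y-\mu)}$ and optimize over $s>0$ for the upper tail (using $e^s - 1 - s \le \tfrac{s^2}{2(1-s/3)}$ type estimates to produce the $\tfrac23 t$ term) and over $s<0$ for the lower tail. The only real care needed is (i) confirming $\Lambda(\vec{S},R)\in\{0,1\}$ and that the RR-sets in $\mathcal{R}$ are mutually independent, so the i.i.d.\ reduction is legitimate, and (ii) matching the exact constant $\tfrac{2}{3}$ in the denominator of the upper-tail bound, which forces the Bernstein form (variance proxy $\mu$, range bound $1$) rather than the cruder $\exp(-t^2/(2\mu+t))$ form. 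This constant-chasing is the main, and only mildly delicate, obstacle; the rest is routine.
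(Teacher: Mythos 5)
Your proof is correct and takes essentially the same approach as the paper: the paper's entire proof is a citation to Corollaries 1 and 2 of \cite{TangSX2015}, which are precisely the Bernstein-type upper-tail bound and multiplicative Chernoff lower-tail bound that you derive after using Lemma~\ref{lem:samplingthm} to reduce the claim to an i.i.d.\ sum of the $\{0,1\}$-valued variables $\Lambda(\vec{S},R_j)$ with mean $\pi(\vec{S})/(n\Gamma)$. The only cosmetic difference is that the cited corollaries are phrased in martingale form, whereas your i.i.d.\ reduction is equally valid here because the sample sizes at which the lemma is invoked are fixed in advance.
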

\begin{proof}
The lemma is an extension of Corollary 1 and Corollary 2 in \cite{TangSX2015}. So we omit the proof.
\end{proof}

Next, we introduce two parameters $\epsilon_1$ and $\epsilon_2$ defined as:
\begin{align}
&\epsilon_1=\frac{\epsilon\cdot \sqrt{\ln\frac{4}{\delta}} }{ \lambda\cdot \sqrt{\ln\frac{4}{\delta}} + \sqrt{\lambda\left(\ln\frac{4}{\delta}+\sum_{i\in [h]}\mu_i\ln\frac{\mathrm{e}n}{\mu_i}\right) } },\\
&\epsilon_2=\epsilon-\lambda\cdot \epsilon_1.
\end{align}
These parameters are similar in spirit to those in~\cite{TangSX2015}. With these parameters, we further propose the following lemma, which reveals several key conditions for the one-batch algorithm to return a valid bi-criteria approximate solution:
\begin{lemma}
%Let $\mathcal{U}=\{(S_1,\dotsc, S_h)\mid (\forall i\in [h]\colon S_i\subseteq V)\wedge (\forall i\neq j\colon S_i\cap S_j=\emptyset)\}$.
Define
%\begin{align*}
%&\mathcal{Q}_1=\{\vec{S}\mid \vec{S}\in \mathcal{U}\wedge \exists i\in [h]\colon c_i(S_i)+\pi_i(S_i)>(1+\varrho)B_i\},\\
%&\mathcal{Q}_2=\mathcal{U}-\mathcal{Q}_1,\\
%&\mathcal{Q}_3=\{\vec{S}\mid \vec{S}\in \mathcal{Q}_2\wedge \pi(\vec{S})<(\lambda-\epsilon)\mathrm{OPT}\}.
%\end{align*}
\begin{align*}
&\mathcal{Q}_1^i=\{S_i\subseteq V\colon c_i(S_i)\leq (1+\varrho)B_i <c_i(S_i)+\pi_i(S_i) \},\\
&\mathcal{Q}_2^i=\{S_i\subseteq V\colon c_i(S_i)+\pi_i(S_i)\leq (1+\varrho)B_i \},\\
&\mathcal{Q}_3=\{\vec{S}\colon \big(S_i\in \mathcal{Q}_2^i\big)\land \big(\pi(\vec{S})<(\lambda-\epsilon)\mathrm{OPT}\big)\}.
\end{align*}
and define four events $\mathcal{E}_1,\mathcal{E}_2,\mathcal{E}_3,\mathcal{E}_4$ as follows: %which are the ``bad'' events that impede us from getting a valid bi-criteria solution:
%\begin{enumerate}
%\item $\mathcal{E}_1$ denotes the event that the algorithm returns a solution $\vec{S}^*$ with $c_i(S_i)+\pi_i(S_i)\leq (1+\varrho)B_i$ for all $i\in [h]$.
%\item $\mathcal{E}_2$ denotes the event that the optimal solution $\vec{O}$ satisfies $\forall i\in [h]: \widetilde{\pi}_i(O_i,\mathcal{R})\leq (1+\varrho/2)B_i$.
%\item $\mathcal{E}_3$ denotes the event that $\widetilde{\pi}(\vec{O},\mathcal{R})\geq (1-\epsilon_1)\cdot \mathrm{OPT}$
%\item $\mathcal{E}_4$ denotes the event that $\pi(\vec{S}^*)\geq \widetilde{\pi}(\vec{S}^*,\mathcal{R}) - \epsilon_2 \cdot \mathrm{OPT}$
%\end{enumerate}
\begin{enumerate}
\item $\mathcal{E}_1$ denotes the event that every node set $S_i\in \mathcal{Q}_1^i$ satisfies $c_i(S_i)+\widetilde{\pi}_i(S_i,\mathcal{R})> (1+\varrho/2)B_i$ for all $i\in [h]$.
\item $\mathcal{E}_2$ denotes the event that the optimal solution $\vec{O}$ satisfies $\forall i\in [h]\colon \widetilde{\pi}_i(O_i,\mathcal{R})\leq \pi_i(O_i)+\varrho B_i/2$.
\item $\mathcal{E}_3$ denotes the event that $\widetilde{\pi}(\vec{O},\mathcal{R})\geq (1-\epsilon_1)\cdot \mathrm{OPT}$
\item $\mathcal{E}_4$ denotes the event that $\pi(\vec{S})\geq \widetilde{\pi}(\vec{S},\mathcal{R}) - \epsilon_2 \cdot \mathrm{OPT}$ for every $\vec{S}\in \mathcal{Q}_3$
\end{enumerate}
If $\mathcal{E}_1,\mathcal{E}_2,\mathcal{E}_3$ and $\mathcal{E}_4$ all happen, then the one-batch algorithm must return a valid bi-criteria approximate solution $\vec{S}^*$ satisfying $c_i(S_i^*)+\pi_i(S_i^*)\leq (1+\varrho)B_i$ for all $i\in [h]$ and $\pi(\vec{S}^*)\geq (\lambda-\epsilon)\mathrm{OPT}$.
\end{lemma}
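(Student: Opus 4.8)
The plan is to follow the one-batch algorithm into the ``sampling space''\textemdash where the objective is $\widetilde{\pi}_i(\cdot,\mathcal{R})$ and the budget of advertiser $i$ is $(1+\varrho/2)B_i$\textemdash and then transfer its guarantees back to the true objective $\pi$ and the true (relaxed) budgets using the four events $\mathcal{E}_1,\dots,\mathcal{E}_4$.

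First I would establish budget-feasibility of the returned solution $\vec{S}^*$. Every sub-routine invoked by $\mathsf{RM\_with\_Oracle}$ (namely $\mathsf{Greedy}$, $\mathsf{ThresholdGreedy}$, $\mathsf{Fill}$) only ever places a node into $S_i$, $D_i$, or $A_i$ while respecting $c_i(\cdot)+\widetilde{\pi}_i(\cdot,\mathcal{R})\le(1+\varrho/2)B_i$; the stopple node stored in $D_i$ is harmless because the initial pruning step discards every $v$ with $c_i(v)+\widetilde{\pi}_i(v,\mathcal{R})>(1+\varrho/2)B_i$. Hence $c_i(S_i^*)+\widetilde{\pi}_i(S_i^*,\mathcal{R})\le(1+\varrho/2)B_i$ for all $i$, and in particular $c_i(S_i^*)\le(1+\varrho)B_i$. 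If we had $c_i(S_i^*)+\pi_i(S_i^*)>(1+\varrho)B_i$, then $S_i^*\in\mathcal{Q}_1^i$, and event $\mathcal{E}_1$ would force $c_i(S_i^*)+\widetilde{\pi}_i(S_i^*,\mathcal{R})>(1+\varrho/2)B_i$, contradicting the previous line. So $c_i(S_i^*)+\pi_i(S_i^*)\le(1+\varrho)B_i$ for every $i$, i.e.\ $\vec{S}^*$ is budget-feasible, equivalently $S_i^*\in\mathcal{Q}_2^i$ for all $i$.

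Next I would lower-bound $\widetilde{\pi}(\vec{S}^*,\mathcal{R})$. Since each $\widetilde{\pi}_i(\cdot,\mathcal{R})$ is a nonnegative combination of coverage functions, it is monotone submodular, so Theorem~\ref{thm:finalapproxratio} applies to the RM instance in the sampling space and gives $\widetilde{\pi}(\vec{S}^*,\mathcal{R})\ge\lambda\cdot\widetilde{\mathrm{OPT}}$, where $\widetilde{\mathrm{OPT}}$ is the optimum of the RM problem with objective $\widetilde{\pi}$ and budgets $(1+\varrho/2)B_i$. Event $\mathcal{E}_2$ shows that the true optimum $\vec{O}$ is feasible in this modified instance, since $c_i(O_i)+\widetilde{\pi}_i(O_i,\mathcal{R})\le c_i(O_i)+\pi_i(O_i)+\varrho B_i/2\le(1+\varrho/2)B_i$; hence $\widetilde{\mathrm{OPT}}\ge\widetilde{\pi}(\vec{O},\mathcal{R})$. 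Combining this with $\mathcal{E}_3$ yields $\widetilde{\pi}(\vec{S}^*,\mathcal{R})\ge\lambda\,\widetilde{\pi}(\vec{O},\mathcal{R})\ge\lambda(1-\epsilon_1)\mathrm{OPT}$.

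Finally I would close the gap between $\widetilde{\pi}(\vec{S}^*,\mathcal{R})$ and $\pi(\vec{S}^*)$ by contradiction. Suppose $\pi(\vec{S}^*)<(\lambda-\epsilon)\mathrm{OPT}$; since $S_i^*\in\mathcal{Q}_2^i$ for every $i$ (established above), we get $\vec{S}^*\in\mathcal{Q}_3$, so event $\mathcal{E}_4$ gives $\pi(\vec{S}^*)\ge\widetilde{\pi}(\vec{S}^*,\mathcal{R})-\epsilon_2\mathrm{OPT}\ge\lambda(1-\epsilon_1)\mathrm{OPT}-\epsilon_2\mathrm{OPT}=(\lambda-\lambda\epsilon_1-\epsilon_2)\mathrm{OPT}=(\lambda-\epsilon)\mathrm{OPT}$, where the last equality uses $\epsilon_2=\epsilon-\lambda\epsilon_1$\textemdash a contradiction. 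Hence $\pi(\vec{S}^*)\ge(\lambda-\epsilon)\mathrm{OPT}$, which together with the budget-feasibility proved earlier completes the argument. The main obstacle is the bookkeeping that ensures $\vec{S}^*$ (resp.\ $\vec{O}$) lies in the correct family $\mathcal{Q}_2^i$ (resp.\ is feasible in the sampling space), so that $\mathcal{E}_4$ (resp.\ $\mathcal{E}_2$ together with Theorem~\ref{thm:finalapproxratio}) can legitimately be applied; once these membership facts are in place, the rest is a short chain of inequalities with the constants $\epsilon_1,\epsilon_2$ chosen precisely to make $\lambda\epsilon_1+\epsilon_2=\epsilon$.
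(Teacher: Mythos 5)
Your proof is correct and follows essentially the same route as the paper's: use $\mathcal{E}_1$ for budget-feasibility of $\vec{S}^*$, use $\mathcal{E}_2$ to certify that $\vec{O}$ is feasible in the sampling space so that $\widetilde{\pi}(\vec{S}^*,\mathcal{R})\geq\lambda\widetilde{\pi}(\vec{O},\mathcal{R})$, and then chain $\mathcal{E}_3$ and $\mathcal{E}_4$ in a contradiction argument with $\lambda\epsilon_1+\epsilon_2=\epsilon$. Your write-up is in fact slightly more careful than the paper's (you spell out why $S_i^*\in\mathcal{Q}_1^i$ would contradict the algorithm's sampling-space feasibility, and your feasibility display for $\vec{O}$ correctly reads $c_i(O_i)+\widetilde{\pi}_i(O_i,\mathcal{R})\leq(1+\varrho/2)B_i$ where the paper has an apparent typo), so no changes are needed.
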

\begin{proof}
Clearly, the one-batch algorithm must return a $(1+\varrho)\vec{B}$ budget-feasible solution when $\mathcal{E}_1$ holds (i.e., $\vec{S}^*_i\in \mathcal{Q}_2^i$ for all $i\in [h]$). When $\mathcal{E}_2$ holds, we have
\begin{equation*}
	\widetilde{\pi}(\vec{S}^*,\mathcal{R})+c_i(O_i)\leq  \pi_i(O_i)+c_i(O_i)+\varrho B_i/2\leq (1+\varrho/2)B_i.
\end{equation*}
Thus, Algorithm~\ref{alg:rmwithoracle} returns $\vec{S}^*$ satisfying
\begin{equation*}
\widetilde{\pi}(\vec{S}^*,\mathcal{R})\geq \lambda\cdot \widetilde{\pi}(\vec{O},\mathcal{R}).
\end{equation*}
Now suppose by contradiction that the one-batch algorithm returns $\vec{S}^*\in \mathcal{Q}_3$. When the events $\mathcal{E}_1,\mathcal{E}_2,\mathcal{E}_3,\mathcal{E}_4$ all hold, we have
\begin{align*}
\pi(\vec{S}^*)
&\geq \widetilde{\pi}(\vec{S}^*,\mathcal{R}) - \epsilon_2 \cdot \mathrm{OPT} \nonumber\\
&\geq \lambda\cdot \widetilde{\pi}(\vec{O},\mathcal{R}) - \epsilon_2 \cdot \mathrm{OPT} \label{eqn:duetoapproximationratio}\\
&\geq \lambda\cdot (1-\epsilon_1)\cdot \mathrm{OPT} - \epsilon_2 \cdot \mathrm{OPT} \nonumber\\
&= (\lambda-\epsilon) \mathrm{OPT}, \nonumber
\end{align*}
which contradicts $\vec{S}^*\in \mathcal{Q}_3$. Therefore, we have $\vec{S}^*\notin \mathcal{Q}_3$, which immediately concludes the lemma that $\pi(\vec{S}^*)\geq (\lambda-\epsilon)\mathrm{OPT}$.
%where \eqref{eqn:duetoapproximationratio} is due to reason that $\mathsf{RM\_with\_Oracle}$ achieves the approximation ratio $\lambda$. So the lemma follows.
\end{proof}

With the above lemma, Theorem~\ref{thm:upperboundofrrsets} follows as long as the probability that at least one event in $\{\mathcal{E}_1,\mathcal{E}_2,\mathcal{E}_3,\mathcal{E}_4\}$ does not happen is no more than $\delta$ when $|\mathcal{R}|\geq \theta_{max}$. Indeed, the following lemmas (Lemmas~\ref{lma:boundE1}--\ref{lma:boundE4}) prove that the failure probability of each event in $\{\mathcal{E}_1,\mathcal{E}_2,\mathcal{E}_3,\mathcal{E}_4\}$ is no more than $\delta/4$ when $|\mathcal{R}|\geq\theta_{max}$. These lemmas together with the union bound complete the proof of Theorem~\ref{thm:upperboundofrrsets}.

\begin{lemma}
% When $|\mathcal{R}|\geq \bar{\theta}_{max}= \frac{8n\Gamma(1+\varrho)}{\varrho^2 B_{min}}(\ln \frac{4h}{\delta}+{\mu}\ln\frac{\mathrm{e}n}{{\mu}})$, we have $\Pr[\neg \mathcal{E}_1]\leq \delta/4$.
When $|\mathcal{R}|\geq \bar{\theta}_{max}$, we have $\Pr[\neg \mathcal{E}_1]\leq \delta/4$.
\label{lma:boundE1}
\end{lemma}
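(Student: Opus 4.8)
The plan is to expand the complementary event $\neg\mathcal{E}_1$ as a union, over $i\in[h]$ and over all node sets $S_i\in\mathcal{Q}_1^i$, of the ``bad events'' $\mathcal{B}(i,S_i):=\{c_i(S_i)+\widetilde{\pi}_i(S_i,\mathcal{R})\le(1+\varrho/2)B_i\}$, bound each $\Pr[\mathcal{B}(i,S_i)]$ by the lower-tail concentration inequality of Lemma~\ref{lma:concentrationbound}, control the number of sets in $\mathcal{Q}_1^i$ by a truncated binomial sum, and conclude by a union bound against the choice of $\bar{\theta}_{max}$.

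For the per-set estimate, I would fix $i$ and $S_i\in\mathcal{Q}_1^i$ and set $P:=c_i(S_i)+\pi_i(S_i)$, so that $P>(1+\varrho)B_i>0$. On $\mathcal{B}(i,S_i)$ we have $\pi_i(S_i)-\widetilde{\pi}_i(S_i,\mathcal{R})\ge P-(1+\varrho/2)B_i=:\Delta>(\varrho/2)B_i>0$. Identifying $S_i$ with the allocation that equals $S_i$ in coordinate $i$ and is empty in the others, we have $\Lambda(S_i,R)=\Lambda(\vec{S},R)$ and $\pi_i(S_i)=\pi(\vec{S})$, so the second inequality of Lemma~\ref{lma:concentrationbound} applies verbatim; taking $t=\Delta|\mathcal{R}|/(n\Gamma)$ gives $\Pr[\mathcal{B}(i,S_i)]\le\exp\!\big(-\Delta^2|\mathcal{R}|/(2\pi_i(S_i)\,n\Gamma)\big)$. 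The crucial step, which makes this uniform over all (exponentially many) sets in $\mathcal{Q}_1^i$ without any a priori bound on $\pi_i(S_i)$, is to use $\pi_i(S_i)\le P$ and observe that $\Delta^2/\pi_i(S_i)\ge(P-(1+\varrho/2)B_i)^2/P$; the map $P\mapsto(P-(1+\varrho/2)B_i)^2/P$ is increasing on $[(1+\varrho/2)B_i,\infty)$ by elementary calculus, so over $P\ge(1+\varrho)B_i$ its minimum is at $P=(1+\varrho)B_i$, with value $\varrho^2B_i/(4(1+\varrho))$. Hence $\Pr[\mathcal{B}(i,S_i)]\le\exp\!\big(-\varrho^2B_i|\mathcal{R}|/(8(1+\varrho)n\Gamma)\big)\le\exp\!\big(-\varrho^2B_{min}|\mathcal{R}|/(8(1+\varrho)n\Gamma)\big)$, uniformly in $S_i$. (Intuitively, a larger $\pi_i(S_i)$ weakens the Bernstein denominator but simultaneously forces a proportionally larger downward deviation $\Delta$, and the two effects cancel.)

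For the counting, each $S_i\in\mathcal{Q}_1^i$ satisfies $c_i(S_i)\le(1+\varrho)B_i$, hence $|S_i|\le\mu_i$ by definition of $\mu_i$, so $|\mathcal{Q}_1^i|\le\sum_{j=0}^{\mu_i}\binom{n}{j}\le(\mathrm{e}n/\mu_i)^{\mu_i}\le(\mathrm{e}n/\mu)^{\mu}=\exp\!\big(\mu\ln(\mathrm{e}n/\mu)\big)$, using the standard bound on truncated binomial sums, the monotonicity of $x\mapsto x\ln(\mathrm{e}n/x)$ on $(0,n]$, and $\mu_i\le\mu$. A union bound over the $h$ advertisers and these sets then gives $\Pr[\neg\mathcal{E}_1]\le h\exp\!\big(\mu\ln(\mathrm{e}n/\mu)\big)\exp\!\big(-\varrho^2B_{min}|\mathcal{R}|/(8(1+\varrho)n\Gamma)\big)$; substituting $|\mathcal{R}|\ge\bar{\theta}_{max}=\tfrac{8n\Gamma(1+\varrho)}{\varrho^2B_{min}}\big(\ln\tfrac{4h}{\delta}+\mu\ln\tfrac{\mathrm{e}n}{\mu}\big)$ makes the last exponent at most $-\big(\ln\tfrac{4h}{\delta}+\mu\ln\tfrac{\mathrm{e}n}{\mu}\big)$, so $\Pr[\neg\mathcal{E}_1]\le h\cdot\tfrac{\delta}{4h}=\delta/4$, as desired.

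The main obstacle is the uniformity issue flagged above: a naive union bound would need a worst-case upper bound on $\pi_i(S_i)$ over $S_i\in\mathcal{Q}_1^i$, and no useful such bound exists, since $\pi_i(S_i)$ can be an arbitrary multiple of $B_i$. The fix is to notice that the deviation $\Delta$ required for $\mathcal{B}(i,S_i)$ itself grows like $\pi_i(S_i)$, so the ratio $\Delta^2/\pi_i(S_i)$ that governs the tail probability is bounded below by a constant times $B_i$ regardless; combined with the $\mu\ln(\mathrm{e}n/\mu)$ term built into $\bar{\theta}_{max}$ (which is exactly what pays for the union bound over the $(\mathrm{e}n/\mu)^{\mu}$ candidate sets), this closes the argument, and the rest is routine bookkeeping with the already-proved concentration lemma and elementary binomial estimates.
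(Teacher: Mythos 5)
Your proposal is correct and follows essentially the same route as the paper's proof: a union bound over advertisers and over all sets in $\mathcal{Q}_1^i$, the lower-tail bound of Lemma~\ref{lma:concentrationbound} applied per set, the truncated-binomial count $|\mathcal{Q}_1^i|\le(\mathrm{e}n/\mu)^{\mu}$, and the definition of $\bar{\theta}_{max}$ to finish. The only (cosmetic) difference is how the exponent is lower-bounded by $\varrho^2 B_i/(8(1+\varrho))$ uniformly in $S_i$ --- you minimize $(P-(1+\varrho/2)B_i)^2/P$ over $P\ge(1+\varrho)B_i$ by calculus, whereas the paper first rewrites $(1+\varrho/2)B_i\le\frac{1+\varrho/2}{1+\varrho}P$ so the deviation becomes a fixed fraction of $P$ and then uses $P^2/\pi_i(S_i)\ge P\ge(1+\varrho)B_i$; both yield the identical per-set tail bound.
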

\begin{proof}
For each $S_i\in \mathcal{Q}_1^i$ with $c_i(S_i)+\pi_i(S_i)>(1+\varrho)B_i$, we have
\begin{align*}
&\Pr [c_i(S_i)+\widetilde{\pi}_i(S_i,\mathcal{R})\leq  (1+\varrho/2)B_i] \\
&\leq\Pr \left[c_i(S_i)+\widetilde{\pi}_i(S_i,\mathcal{R})\leq  \frac{1+\varrho/2}{1+\rho}(c_i(S_i)+{\pi}_i(S_i)) \right] \\
%&\leq& \Pr [c_i(S_i)+\widetilde{\pi}_i(S_i,\mathcal{R})\geq \varepsilon B_i+ c_i(S_i)+{\pi}_i(S_i)] \nonumber\\
&\leq \Pr \left[\widetilde{\pi}_i(S_i,\mathcal{R})-{\pi}_i(S_i)\leq -\frac{\varrho}{2(1+\varrho)}(c_i(S_i)+\pi_i(S_i)) \right] \\
%&\leq \Pr \left[\widetilde{\pi}_i(S_i,\mathcal{R})-{\pi}_i(S_i)\leq -\frac{\varrho B_i}{2} \right] \\
&\leq \exp\left(\frac{-\varrho^2 |\mathcal{R}| (c_i(S_i)+\pi_i(S_i))^2}{8n\Gamma (1+\varrho)^2 \pi_i(S_i)}\right) \\
&\leq \exp\left(\frac{-\varrho^2 |\mathcal{R}|B_i}{8n\Gamma (1+\varrho)}\right)
\leq \frac{\delta}{4h}\cdot\left(\frac{{\mu}}{\mathrm{e}n}\right)^{{\mu}}.
\end{align*}
On the other hand,
\begin{align*}
	|Q_1^i|
	&\leq\sum_{j=0}^{\mu_i} \binom{n}{j}
	\leq \sum_{j=0}^{\mu_i} \frac{n^j}{j!}
	\leq\sum_{j=0}^{\mu} \left( \frac{{\mu}^j}{j!}\frac{n^j}{{\mu}^j}\right)\\
	&\leq\frac{n^\mu}{{\mu}^\mu}\sum_{j=0}^{\mu}\frac{{\mu}^j}{j!}
	\leq \left(\frac{\mathrm{e}n}{\mu}\right)^{\mu}.
\end{align*}
This implies that $\sum_{i\in[h]}|Q_1^i|\leq h\cdot\left(\frac{\mathrm{e}n}{{\mu}}\right)^{{\mu}}$. Thus,
\begin{align*}
	\Pr[\neg \mathcal{E}_1]
	&\leq \sum_{i\in[h]}\sum_{S_i\in \mathcal{Q}_1^i} \Pr [c_i(S_i)+\widetilde{\pi}_i(S_i,\mathcal{R})\leq  (1+\varrho/2)B_i]\\
	&\leq h\cdot \left(\frac{\mathrm{e}n}{{\mu}}\right)^{{\mu}}\cdot \frac{\delta}{4h}\cdot \left(\frac{{\mu}}{\mathrm{e}n}\right)^{{\mu}}=\frac{\delta}{4}.
\end{align*}
This completes the proof.
\end{proof}

\begin{lemma}
% When $|\mathcal{R}|\geq \bar{\theta}_{max}\geq \frac{4n\Gamma(2+\varrho/3)}{\varrho^2 B_{min}}\ln \frac{4h}{\delta}$, we have $\Pr[\neg \mathcal{E}_2]\leq \delta/4$.
When $|\mathcal{R}|\geq \bar{\theta}_{max}$, we have $\Pr[\neg \mathcal{E}_2]\leq \delta/4$.
\label{lma:boundoptinsamples}
\end{lemma}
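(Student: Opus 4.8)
The plan is to control $\Pr[\neg\mathcal{E}_2]$ by first taking a union bound over the $h$ advertisers, which reduces the task to bounding, for each fixed $i\in[h]$, the probability that the \emph{single} set $O_i$ is over-estimated, i.e.\ that $\widetilde{\pi}_i(O_i,\mathcal{R})>\pi_i(O_i)+\varrho B_i/2$. Note that here, unlike in Lemma~\ref{lma:boundE1}, we only need concentration for one fixed set per advertiser rather than uniformly over an exponentially large family, so the $\mu\ln(\mathrm{e}n/\mu)$ factor sitting inside $\bar{\theta}_{max}$ is not actually needed here and will leave us slack. (Formally, Lemma~\ref{lma:concentrationbound} applies to the per-advertiser estimators by taking the solution that is empty for all advertisers except $i$.)

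For a fixed $i$, I would apply the first (upper-tail) inequality of Lemma~\ref{lma:concentrationbound} to the set $O_i$, choosing the deviation parameter $t$ so that $n\Gamma t/|\mathcal{R}| = \varrho B_i/2$, i.e.\ $t = \varrho B_i|\mathcal{R}|/(2n\Gamma)$. The quantity $\pi_i(O_i)\cdot|\mathcal{R}|/(n\Gamma)$ appearing in the exponent's denominator is then bounded using the budget-feasibility of the optimal solution: $\pi_i(O_i)\le \pi_i(O_i)+c_i(O_i)\le B_i$. Substituting and simplifying, the denominator becomes at most $\frac{B_i|\mathcal{R}|}{n\Gamma}\bigl(2+\frac{\varrho}{3}\bigr)$, so the tail probability is at most $\exp\!\bigl(-\varrho^2 B_i|\mathcal{R}|\,/\,(4n\Gamma(2+\varrho/3))\bigr)$.

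It then remains to check that $|\mathcal{R}|\ge\bar{\theta}_{max}$ forces this quantity below $\delta/(4h)$. Taking logarithms, it suffices to have $|\mathcal{R}|\ge \frac{4n\Gamma(2+\varrho/3)}{\varrho^2 B_i}\ln\frac{4h}{\delta}$; since $B_i\ge B_{min}$ and $8(1+\varrho)\ge 4(2+\varrho/3)$ for every $\varrho>0$, this is implied by $|\mathcal{R}|\ge \frac{8n\Gamma(1+\varrho)}{\varrho^2 B_{min}}\ln\frac{4h}{\delta}$, which in turn follows from $|\mathcal{R}|\ge\bar{\theta}_{max}$ because the $\mu\ln(\mathrm{e}n/\mu)$ term in $\bar{\theta}_{max}$ is non-negative. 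A final union bound over $i\in[h]$ then yields $\Pr[\neg\mathcal{E}_2]\le h\cdot\frac{\delta}{4h}=\frac{\delta}{4}$, as claimed.

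The main obstacle is purely bookkeeping: one has to verify that plugging the worst-case estimate $\pi_i(O_i)\le B_i$ into the Bernstein-type denominator still leaves the constants compatible with the stated $\bar{\theta}_{max}$. Since $\bar{\theta}_{max}$ was already inflated to serve Lemma~\ref{lma:boundE1} simultaneously, there is ample headroom, and no idea beyond the concentration inequality of Lemma~\ref{lma:concentrationbound} and the feasibility constraint on $\vec{O}$ is needed.
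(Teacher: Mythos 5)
Your proposal is correct and matches the paper's (very terse) proof: the paper likewise applies the upper-tail bound of Lemma~\ref{lma:concentrationbound} to each fixed $O_i$, obtains exactly the bound $\exp\bigl(-\varrho^2|\mathcal{R}|B_i/(4n\Gamma(2+\varrho/3))\bigr)\le\delta/(4h)$, and concludes by a union bound over $i\in[h]$. Your additional bookkeeping — the choice $t=\varrho B_i|\mathcal{R}|/(2n\Gamma)$, the use of $\pi_i(O_i)\le B_i$ from feasibility, and the check that $8(1+\varrho)\ge 4(2+\varrho/3)$ makes $\bar{\theta}_{max}$ suffice — just makes explicit the steps the paper omits.
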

\begin{proof}
For any $i\in [h]$, we have
\begin{equation*}
\Pr [\widetilde{\pi}_i(O_i,\mathcal{R})>{\pi}_i(O_i)+ \varrho B_i/2]
\leq \exp\left(\frac{-\varrho^2 |\mathcal{R}| B_i}{4n\Gamma (2+\varrho/3)}\right)\leq \frac{\delta}{4h}.
\end{equation*}
So the lemma follows by using the union bound.
\end{proof}

\begin{lemma}
When $|\mathcal{R}|\geq \hat{\theta}_{max}$, we have $\Pr [\neg \mathcal{E}_3]\leq \frac{\delta}{4}$.
\label{lma:E3}
\end{lemma}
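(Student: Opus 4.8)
The plan is to apply the lower-tail concentration bound of Lemma~\ref{lma:concentrationbound} directly to the (fixed, non-random) optimal solution $\vec{O}$, choosing the deviation parameter so that it is calibrated to the definition of $\epsilon_1$. First I would instantiate Lemma~\ref{lma:concentrationbound} with $\vec{S}=\vec{O}$, so that $\pi(\vec{O})=\mathrm{OPT}$, and with $t=\frac{\epsilon_1\cdot\mathrm{OPT}\cdot|\mathcal{R}|}{n\Gamma}$, which makes $\frac{n\Gamma t}{|\mathcal{R}|}=\epsilon_1\cdot\mathrm{OPT}$. Since $\neg\mathcal{E}_3$ is exactly the event $\widetilde{\pi}(\vec{O},\mathcal{R})-\mathrm{OPT}\leq-\epsilon_1\cdot\mathrm{OPT}$, the second inequality of Lemma~\ref{lma:concentrationbound} yields
\[
\Pr[\neg\mathcal{E}_3]\leq\exp\!\left(\frac{-t^2}{2\,\mathrm{OPT}\cdot|\mathcal{R}|/(n\Gamma)}\right)=\exp\!\left(-\frac{\epsilon_1^2\cdot\mathrm{OPT}\cdot|\mathcal{R}|}{2n\Gamma}\right).
\]
No union bound is needed here because $\vec{O}$ is a single fixed object.

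Next I would convert the target bound $\Pr[\neg\mathcal{E}_3]\leq\delta/4$ into the requirement $|\mathcal{R}|\geq\frac{2n\Gamma\ln(4/\delta)}{\epsilon_1^2\cdot\mathrm{OPT}}$ and show that $|\mathcal{R}|\geq\hat{\theta}_{max}$ suffices. Two facts carry this out. First, $\mathrm{OPT}\geq\Gamma$: in any optimal allocation each advertiser $i$ seeds at least one node (we may assume w.l.o.g.\ that every advertiser can afford a single seed, since otherwise it is simply discarded; equivalently the expression $\mu_i\ln\frac{\mathrm{e}n}{\mu_i}$ already presupposes $\mu_i\geq1$), so $\sigma_i(O_i)\geq1$ and hence $\mathrm{OPT}=\sum_{i\in[h]}\cpe(i)\,\sigma_i(O_i)\geq\sum_{i\in[h]}\cpe(i)=\Gamma$; thus $\frac{2n\Gamma\ln(4/\delta)}{\epsilon_1^2\,\mathrm{OPT}}\leq\frac{2n\ln(4/\delta)}{\epsilon_1^2}$. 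Second, from the definition of $\epsilon_1$ we have $\epsilon_1\cdot\bigl(\lambda\sqrt{\ln\tfrac{4}{\delta}}+\sqrt{\lambda(\ln\tfrac{4}{\delta}+\sum_{i\in[h]}\mu_i\ln\tfrac{\mathrm{e}n}{\mu_i})}\bigr)=\epsilon\sqrt{\ln\tfrac{4}{\delta}}$, so
\[
\frac{2n\ln(4/\delta)}{\epsilon_1^2}=\frac{2n}{\epsilon^2}\Bigl(\lambda\sqrt{\ln\tfrac{4}{\delta}}+\sqrt{\lambda\bigl(\ln\tfrac{4}{\delta}+\textstyle\sum_{i\in[h]}\mu_i\ln\tfrac{\mathrm{e}n}{\mu_i}\bigr)}\Bigr)^{2}=\hat{\theta}_{max}.
\]
Chaining these, $|\mathcal{R}|\geq\hat{\theta}_{max}$ forces $\frac{\epsilon_1^2\,\mathrm{OPT}\,|\mathcal{R}|}{2n\Gamma}\geq\ln\frac{4}{\delta}$, and therefore $\Pr[\neg\mathcal{E}_3]\leq\exp(-\ln\frac{4}{\delta})=\delta/4$.

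I do not expect a genuine obstacle: this lemma is a single-shot Chernoff estimate on one fixed solution, so the only real task is the bookkeeping that shows the (admittedly opaque) expression for $\hat{\theta}_{max}$ is precisely what the $\epsilon_1$-calibrated bound requires. The one point worth stating carefully is the inequality $\mathrm{OPT}\geq\Gamma$; if one did not make the standard assumption that every advertiser can afford at least one seed, the argument would instead proceed with an effective $\Gamma$ summing $\cpe(i)$ only over advertisers with $\mu_i\geq1$, which changes nothing in the final bound.
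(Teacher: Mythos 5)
Your proof is correct and follows essentially the same route as the paper's: a single application of the lower-tail bound of Lemma~\ref{lma:concentrationbound} to the fixed solution $\vec{O}$ with deviation $t=\epsilon_1\,\mathrm{OPT}\,|\mathcal{R}|/(n\Gamma)$, followed by the reduction $\mathrm{OPT}\geq\Gamma$ (which the paper states only as a footnoted assumption) and the algebraic identification of $\tfrac{2n\ln(4/\delta)}{\epsilon_1^2}$ with $\hat{\theta}_{max}$. Your write-up is merely more explicit about the bookkeeping and the justification of $\mathrm{OPT}\geq\Gamma$; there is no substantive difference.
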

\begin{proof}
Using the concentration bounds proposed in Lemma~\ref{lma:concentrationbound}, we can get
\begin{align*}
&\Pr [\widetilde{\pi}(\vec{O},\mathcal{R})< (1-\epsilon_1)\cdot \mathrm{OPT}]\\
&\leq \exp\left( \frac{-\epsilon_1^2\mathrm{OPT}^2|\mathcal{R}|^2/(n\Gamma)^2}{2\mathrm{OPT}|\mathcal{R}|/(n\Gamma)} \right)\\
&\leq \exp\left( \frac{-\epsilon_1^2|\mathcal{R}|}{2n} \right)\leq \frac{\delta}{4}.
\end{align*}
\footnote{We assume that $\mathrm{OPT} \geq \Gamma$, i.e., there exists a feasible solution $\vec{S}$ such that $S_i\neq\emptyset$ for all $i$.}So the lemma follows.
\end{proof}

\begin{lemma}
When $|\mathcal{R}|\geq \hat{\theta}_{max}$, we have $\Pr [\neg \mathcal{E}_4]\leq \frac{\delta}{4}$.
\label{lma:boundE4}
\end{lemma}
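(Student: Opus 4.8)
## Proof Plan for Lemma~\ref{lma:boundE4}

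The plan is to bound $\Pr[\neg\mathcal{E}_4]$ by a union bound over all $\vec{S}\in\mathcal{Q}_3$, which requires (i) a tail bound controlling how far $\pi(\vec{S})$ can fall below $\widetilde{\pi}(\vec{S},\mathcal{R})$ for a \emph{fixed} $\vec{S}\in\mathcal{Q}_3$, and (ii) a combinatorial bound on $|\mathcal{Q}_3|$. The event $\neg\mathcal{E}_4$ says there exists $\vec{S}\in\mathcal{Q}_3$ with $\pi(\vec{S})<\widetilde{\pi}(\vec{S},\mathcal{R})-\epsilon_2\cdot\mathrm{OPT}$, i.e.\ $\widetilde{\pi}(\vec{S},\mathcal{R})-\pi(\vec{S})>\epsilon_2\cdot\mathrm{OPT}$. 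Since every $\vec{S}\in\mathcal{Q}_3$ satisfies $\pi(\vec{S})<(\lambda-\epsilon)\mathrm{OPT}<\lambda\cdot\mathrm{OPT}$, applying the first concentration bound of Lemma~\ref{lma:concentrationbound} with $t=\epsilon_2\cdot\mathrm{OPT}\cdot|\mathcal{R}|/(n\Gamma)$ gives, for a fixed $\vec{S}\in\mathcal{Q}_3$,
\[
\Pr\!\left[\widetilde{\pi}(\vec{S},\mathcal{R})-\pi(\vec{S})>\epsilon_2\cdot\mathrm{OPT}\right]
\leq\exp\!\left(\frac{-\epsilon_2^2\,\mathrm{OPT}^2\,|\mathcal{R}|/(n\Gamma)}{2\pi(\vec{S})+\tfrac{2}{3}\epsilon_2\cdot\mathrm{OPT}}\right)
\leq\exp\!\left(\frac{-\epsilon_2^2\,\mathrm{OPT}\,|\mathcal{R}|}{2n\Gamma\left(\lambda+\tfrac{1}{3}\epsilon_2\right)}\right),
\]
where I used $\pi(\vec{S})/\mathrm{OPT}<\lambda$ and (as in Lemma~\ref{lma:E3}) $\mathrm{OPT}\geq\Gamma$. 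Loosening $\lambda+\tfrac13\epsilon_2\leq\lambda+\tfrac13\lambda\epsilon_1$ or simply bounding it by a constant multiple of $\lambda$, this is at most $\exp(-\Theta(\epsilon_2^2|\mathcal{R}|/n))$.

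Next I would bound $|\mathcal{Q}_3|$. Since each $\vec{S}\in\mathcal{Q}_3$ has $S_i\in\mathcal{Q}_2^i$, and $S_i\in\mathcal{Q}_2^i$ forces $c_i(S_i)\leq(1+\varrho)B_i$, the set $S_i$ contains at most $\mu_i$ nodes (by the very definition of $\mu_i$ as the maximum number of nodes selectable within the relaxed budget). Hence $|\mathcal{Q}_2^i|\leq\sum_{j=0}^{\mu_i}\binom{n}{j}\leq(\mathrm{e}n/\mu_i)^{\mu_i}\leq(\mathrm{e}n/\mu_i)^{\mu}$ by exactly the elementary estimate used in Lemma~\ref{lma:boundE1}, so $|\mathcal{Q}_3|\leq\prod_{i\in[h]}|\mathcal{Q}_2^i|\leq\prod_{i\in[h]}(\mathrm{e}n/\mu_i)^{\mu_i}=\exp\!\big(\sum_{i\in[h]}\mu_i\ln\frac{\mathrm{e}n}{\mu_i}\big)$. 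Taking a union bound over $\mathcal{Q}_3$,
\[
\Pr[\neg\mathcal{E}_4]\leq|\mathcal{Q}_3|\cdot\exp\!\left(\frac{-\epsilon_2^2\,\mathrm{OPT}\,|\mathcal{R}|}{2n\Gamma(\lambda+\tfrac13\epsilon_2)}\right)
\leq\exp\!\left(\sum\nolimits_{i\in[h]}\mu_i\ln\frac{\mathrm{e}n}{\mu_i}-\frac{\epsilon_2^2\,|\mathcal{R}|}{2n(\lambda+\tfrac13\epsilon_2)}\right),
\]
again using $\mathrm{OPT}\geq\Gamma$. It remains to check that $|\mathcal{R}|\geq\hat\theta_{max}$ makes the exponent at most $\ln(\delta/4)$, i.e.\ $\frac{\epsilon_2^2|\mathcal{R}|}{2n(\lambda+\tfrac13\epsilon_2)}\geq\ln\frac{4}{\delta}+\sum_{i\in[h]}\mu_i\ln\frac{\mathrm{e}n}{\mu_i}$. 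This is where the precise algebraic definitions of $\epsilon_1,\epsilon_2$ and $\hat\theta_{max}$ are engineered to fit: substituting $\epsilon_2=\epsilon-\lambda\epsilon_1$ and the given $\epsilon_1$, one verifies by the same manipulation as in~\cite{TangSX2015} that $\hat\theta_{max}=\frac{2n}{\epsilon^2}\big(\lambda\sqrt{\ln\frac4\delta}+\sqrt{\lambda(\ln\frac4\delta+\sum_i\mu_i\ln\frac{\mathrm{e}n}{\mu_i})}\big)^2$ is exactly the threshold at which this inequality becomes tight.

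The main obstacle I anticipate is the last algebraic verification: matching the union-bound exponent against the closed form of $\hat\theta_{max}$ requires carefully tracking the split $\epsilon=\lambda\epsilon_1+\epsilon_2$, and in particular confirming that the $\lambda+\tfrac13\epsilon_2$ (or whatever constant survives in the denominator of the Chernoff bound) is absorbed correctly — the cleanest route is to mirror the bookkeeping of~\cite{TangSX2015}, replacing their single influence function by $\pi(\cdot)$ and their cardinality budget by the bound $\mu_i$ on $|S_i|$, and to use the crude but sufficient estimate $2\pi(\vec{S})+\tfrac23\epsilon_2\mathrm{OPT}\leq 2\lambda\,\mathrm{OPT}$ (valid since $\epsilon_2<\epsilon\leq\lambda$ wait — more carefully $\tfrac23\epsilon_2\leq\tfrac23\epsilon<\tfrac23\lambda$, so the sum is $<\,(2\lambda+\tfrac23\lambda)\mathrm{OPT}$, and one may rescale $\hat\theta_{max}$'s constant accordingly or verify the exact constant matches). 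Everything else — the tail bound, the counting of $\mathcal{Q}_3$, the union bound — is routine given Lemmas~\ref{lma:concentrationbound} and the counting argument already deployed in Lemma~\ref{lma:boundE1}.
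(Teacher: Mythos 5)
Your proposal follows essentially the same route as the paper's proof: a union bound over $\mathcal{Q}_3$, the counting bound $|\mathcal{Q}_3|\leq \prod_{i\in [h]}\left(\frac{\mathrm{e}n}{\mu_i}\right)^{\mu_i}$, and the first concentration inequality of Lemma~\ref{lma:concentrationbound} applied with $t=\epsilon_2\cdot\mathrm{OPT}\cdot|\mathcal{R}|/(n\Gamma)$. The one loose end you flag at the close\textemdash whether the denominator $2\lambda+\tfrac{1}{3}\epsilon_2$ can be absorbed without rescaling $\hat{\theta}_{max}$\textemdash is resolved in the paper by using the full strength of the definition of $\mathcal{Q}_3$: since $\pi(\vec{S})<(\lambda-\epsilon)\mathrm{OPT}$ rather than merely $\lambda\cdot\mathrm{OPT}$, the Chernoff denominator is $\big(2(\lambda-\epsilon)+\tfrac{2}{3}\epsilon_2\big)\mathrm{OPT}\leq 2\lambda\cdot\mathrm{OPT}$ because $\epsilon_2\leq\epsilon$, so the exponent becomes exactly $\epsilon_2^2|\mathcal{R}|/(2n\lambda)$; writing $A=\lambda\sqrt{\ln\frac{4}{\delta}}$ and $B=\sqrt{\lambda\big(\ln\frac{4}{\delta}+\sum_{i\in[h]}\mu_i\ln\frac{\mathrm{e}n}{\mu_i}\big)}$ one has $\epsilon_2=\epsilon B/(A+B)$, whence $\epsilon_2^2\hat{\theta}_{max}/(2n\lambda)=B^2/\lambda=\ln\frac{4}{\delta}+\sum_{i\in[h]}\mu_i\ln\frac{\mathrm{e}n}{\mu_i}$ exactly, and the constant matches with no adjustment.
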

\begin{proof}
We have
\begin{equation*}
|\mathcal{Q}_3|\leq \prod_{i\in [h]} \sum_{j=0}^{\mu_i} {n\choose j}\leq \prod_{i\in [h]} \left(\frac{\mathrm{e}n}{\mu_i}\right)^{\mu_i}.
\end{equation*}
So we can use Lemma~\ref{lma:concentrationbound} and the union bound to get
\begin{align*}
\Pr [\neg \mathcal{E}_4]
&\leq\sum\nolimits_{\vec{S}\in \mathcal{Q}_3}\Pr [ \pi(\vec{S})< \widetilde{\pi}(\vec{S}) - \epsilon_2 \cdot \mathrm{OPT} ] \\
&\leq \sum\nolimits_{\vec{S}\in \mathcal{Q}_3} \exp\left( \frac{-\epsilon_2^2\mathrm{OPT}|\mathcal{R}|/(n\Gamma)}{2(\lambda-\epsilon)+\frac{2\epsilon_2}{3}} \right)\\
&\leq \sum\nolimits_{\vec{S}\in \mathcal{Q}_3} \exp\left(\frac{-\epsilon_2^2|\mathcal{R}|}{2n\lambda} \right)\\
&\leq \sum\nolimits_{\vec{S}\in \mathcal{Q}_3} \bigg(\frac{\delta}{4}\cdot \prod_{i\in [h]} \left(\frac{\mu_i}{\mathrm{e}n}\right)^{\mu_i} \bigg)\\
&\leq \delta/4. \label{eqn:boundbadsolutionprob}
\end{align*}
So the lemma follows.
%Eqn.~\eqref{eqn:boundbadsolutionprob} indicates that the probability that the one-batch approach returns a bad solution contradicting Eqn.~\eqref{eqn:keyineqn} is at most $\delta/6$. So the theorem follows by combining \eqref{eqn:boundbadsolutionprob}, Lemmas~\ref{lma:sufficientcondition}-\ref{lma:boundoptsample} and the union bound.
\end{proof}

\subsection{Proof of Theorem~\ref{thm:highprobratio}}

In the following, we first introduce Lemma~\ref{lma:hugeeqn} and Lemma~\ref{lma:correctubofopt}, and then use them to prove Theorem~\ref{thm:highprobratio}.

\begin{lemma}
Given any set $\mathcal{R}$ of RR-sets, a solution $\vec{S}$ to the RM problem and $t_1\geq \widetilde{\pi}(\vec{S},\mathcal{R})\geq t_2$, we have
\begin{align*}
&\!\!\!\Pr \left[\pi(\vec{S})>\left(\sqrt{\frac{t_1|\mathcal{R}|}{n\Gamma}+\frac{a}{2}}+\sqrt{\frac{a}{2}}\right)^{2}\cdot\frac{n\Gamma}{|\mathcal{R}|} \right]\leq \mathrm{e}^{-a},\\
&\!\!\!\Pr \left[\pi(\vec{S})<\Bigg(\bigg(\sqrt{\frac{t_2|\mathcal{R}|}{n\Gamma}+\frac{2a}{9}}-\sqrt{\frac{a}{2}}\bigg)^{2}-\frac{a}{18}\Bigg)\cdot\frac{n\Gamma}{|\mathcal{R}|}\right]\leq \mathrm{e}^{-a}.
%& \Pr \left[\pi(\vec{S})>\left(\sqrt{\frac{t|\mathcal{R}|}{n\Gamma}+\frac{\ln\frac{1}{\delta}}{2}}+\sqrt{\frac{\ln\frac{1}{\delta}}{2}}\right)^{2}\cdot\frac{n\Gamma}{|\mathcal{R}|} \right]\leq \delta,\\
%&\Pr \left[\pi(\vec{S})<\Bigg(\bigg(\sqrt{\frac{t|\mathcal{R}|}{n\Gamma}+\frac{2\ln\frac{1}{\delta}}{9}}-\sqrt{\frac{\ln\frac{1}{\delta}}{2}}\bigg)^{2}-\frac{\ln\frac{1}{\delta}}{18}\Bigg)\cdot\frac{n\Gamma}{|\mathcal{R}|}\right]\leq \delta.
\end{align*}
The above inequalities also hold when we replace $\pi(\vec{S})$ by $\pi_i({S_i})$ and require $t_1\geq \widetilde{\pi}({S_i},\mathcal{R})\geq t_2$.
\label{lma:hugeeqn}
\end{lemma}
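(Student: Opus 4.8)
The plan is to reduce the statement to the Bernstein--Chernoff tail bounds of Lemma~\ref{lma:concentrationbound} by a change of variables followed by a monotonicity step. Recall from Section~\ref{sec:newmethodforrrsets} that $\widetilde{\pi}(\vec{S},\mathcal{R})=n\Gamma\sum_{R\in\mathcal{R}}\Lambda(\vec{S},R)/|\mathcal{R}|$, while by Lemma~\ref{lem:samplingthm} $\pi(\vec{S})=n\Gamma\,\E[\Lambda(\vec{S},R)]$, and the $\{\Lambda(\vec{S},R)\colon R\in\mathcal{R}\}$ are i.i.d.\ $\{0,1\}$-valued. Writing $T=|\mathcal{R}|$, $X=\sum_{R\in\mathcal{R}}\Lambda(\vec{S},R)=\widetilde{\pi}(\vec{S},\mathcal{R})T/(n\Gamma)$ and $\mu=\E[X]=\pi(\vec{S})T/(n\Gamma)$, Lemma~\ref{lma:concentrationbound} becomes $\Pr[X-\mu\ge t]\le\exp(-t^2/(2\mu+\tfrac23 t))$ and $\Pr[X-\mu\le -t]\le\exp(-t^2/(2\mu))$. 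Both right-hand sides in the claim are non-decreasing in $t_1$ (resp.\ $t_2$) on the range where they are positive, and the claimed inequality is trivial whenever that right-hand side is $\le 0$ since $\pi(\vec{S})\ge 0$; hence the hypotheses $t_1\ge\widetilde{\pi}(\vec{S},\mathcal{R})$ and $t_2\le\widetilde{\pi}(\vec{S},\mathcal{R})$ let me replace $t_1T/(n\Gamma)$ and $t_2T/(n\Gamma)$ by $X$ from now on.

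For the first (upper) bound, expanding yields $\big(\sqrt{X+a/2}+\sqrt{a/2}\big)^2=X+a+\sqrt{2aX+a^2}$, and on the event $\mu>X+a+\sqrt{2aX+a^2}$ the right side of $\sqrt{2aX+a^2}<\mu-X-a$ is positive, so squaring (an implication) and writing $e=\mu-X$ gives $e^2-2a\mu>0$, i.e.\ $X<\mu-\sqrt{2a\mu}$; the second bound of Lemma~\ref{lma:concentrationbound} with $t=\sqrt{2a\mu}$ has exponent exactly $-a$. For the second (lower) bound, expanding yields $\big(\sqrt{X+2a/9}-\sqrt{a/2}\big)^2-\tfrac{a}{18}=X+\tfrac{2a}{3}-\sqrt{2aX+\tfrac{4a^2}{9}}$, and on the event $\pi(\vec{S})<(\cdots)$ we get $\sqrt{2aX+4a^2/9}<X+\tfrac{2a}{3}-\mu$ with positive right side, so squaring and setting $d=X-\mu$ gives $d^2-\tfrac{2a}{3}d-2a\mu>0$ together with $d>-\tfrac{2a}{3}$; intersecting with $d\ge-\mu$ (from $X\ge 0$), the branch coming from the smaller root of the quadratic is empty, so $d>t^\ast$ where $t^\ast=\tfrac{a}{3}+\sqrt{2a\mu+a^2/9}$ is the larger root and hence obeys $(t^\ast)^2=\tfrac{2a}{3}t^\ast+2a\mu$, i.e.\ $(t^\ast)^2/(2\mu+\tfrac23 t^\ast)=a$; the first bound of Lemma~\ref{lma:concentrationbound} with $t=t^\ast$ then finishes it.

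Finally, the variant with $\pi_i(S_i)$ needs no new idea: $\widetilde{\pi}_i(S_i,\mathcal{R})=n\Gamma\sum_{R\in\mathcal{R}}\Lambda(S_i,R)/|\mathcal{R}|$ is an unbiased estimator of $\pi_i(S_i)$, so $\{\Lambda(S_i,R)\colon R\in\mathcal{R}\}$ are i.i.d.\ $\{0,1\}$-valued with mean $\pi_i(S_i)/(n\Gamma)$, and Lemma~\ref{lma:concentrationbound} together with all of the above algebra applies verbatim with $(\pi,\vec{S},\widetilde{\pi})$ replaced by $(\pi_i,S_i,\widetilde{\pi}_i)$. I expect the only subtle points to be (i) verifying the sign conditions that make squaring the square-root inequalities a valid implication (which is all that is needed, since we only require the $\subseteq$ direction), and (ii) checking that the branch of the quadratic corresponding to the smaller root is vacuous once intersected with $X\ge 0$ and $d>-2a/3$; the rest is a routine Chernoff calculation mirroring the OPIM-C analysis of~\cite{Tang2018}.
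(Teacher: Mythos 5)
Your proof is correct and follows essentially the same route as the paper, which omits the argument by deferring to Lemmas~4.2 and~4.3 of~\cite{Tang2018}: those lemmas are obtained exactly by inverting the Chernoff-type bounds of Lemma~\ref{lma:concentrationbound} through the quadratic-root computation you carry out, and your monotonicity/positivity step correctly handles the replacement of $t_1,t_2$ by $\widetilde{\pi}(\vec{S},\mathcal{R})$. The only cosmetic remark is that for the lower-tail case the event already forces $d=X-\mu>0$, which kills the smaller-root branch more directly than your combination of $d>-2a/3$ and $d\ge-\mu$ (though that combination does also work).
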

\begin{proof}
The proof of the lemma is similar to Lemma 4.2 and Lemma 4.3 in~\cite{Tang2018}, so we omit the proof.
\end{proof}

\hkr{
\begin{lemma}
The function $\mathsf{SeekUB}$ can return a correct upper bound of $\widetilde{\pi}(\vec{O}, \mathcal{R}_1)$.
\label{lma:correctubofopt}
\end{lemma}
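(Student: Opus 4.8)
The plan is to show the stronger statement that $\mathsf{SeekUB}$ returns an upper bound on $\mathrm{OPT}_s$, the optimal value of the RM problem \emph{in the sampling space} that Line~\ref{ln:callrmwithoracle} of Algorithm~\ref{alg:rmwithoutoracle} actually solves (i.e.\ with every $\pi_i$ replaced by $\widetilde{\pi}_i(\cdot,\mathcal{R}_1)$ and every $B_i$ by $(1+\varrho/2)B_i$). Since $\widetilde{\pi}(\cdot,\mathcal{R}_1)$ is monotone and submodular, Theorems~\ref{thm:greedyratioforsa}, \ref{thm:boundofthresholdgreedy} and~\ref{thm:finalapproxratio} hold verbatim in the sampling space with $\mathrm{OPT}$ replaced by $\mathrm{OPT}_s$ and $\gamma_{max}$ by its sampling-space analogue $\widetilde{\gamma}_{max}$. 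Once $z\ge\mathrm{OPT}_s$ is established, the lemma follows: under Condition~(i) of Section~\ref{sec:onebatchsampling} the optimal solution $\vec{O}$ is budget-feasible for the relaxed budgets in the sampling space, so $\widetilde{\pi}(\vec{O},\mathcal{R}_1)\le\mathrm{OPT}_s\le z$, and Condition~(i) is exactly the regime in which this lemma gets invoked inside the proof of Theorem~\ref{thm:highprobratio}.

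First I would note that the term $\widetilde{\pi}(\vec{S}^*,\mathcal{R}_1)/\lambda$ that appears in the final $\min$ of $\mathsf{SeekUB}$ (and is the sole return value when $h=1$, where $\lambda=1/3$) is always valid, since $\widetilde{\pi}(\vec{S}^*,\mathcal{R}_1)\ge\lambda\,\mathrm{OPT}_s$ by Theorem~\ref{thm:finalapproxratio} applied in the sampling space. Because the minimum of two valid upper bounds is again a valid upper bound, it then suffices to check that the quantity $z$ computed in each remaining branch is $\ge\mathrm{OPT}_s$. The branches are keyed on $b_1$, $b_2$ and on whether $\vec{T}^*_2=\emptyset$, and the crux is to identify, for each branch condition, exactly which state of the variables $(\gamma_1,\gamma_2,b_1,b_2,\vec{T}^*_1,\vec{T}^*_2)$ can arise at termination of $\mathsf{Search}$, recalling that $\vec{T}^*_1$ and $\vec{T}^*_2$ are the endpoints $\mathsf{ThresholdGreedy}(\gamma_1)$ and $\mathsf{ThresholdGreedy}(\gamma_2)$ maintained by $\mathsf{Search}$ and that $\vec{T}^*_1$ is updated only when some call returns $b\ge b_{min}$.

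Then I would go branch by branch. If $b_1<b_{min}$, then $\vec{T}^*_1$ was never updated, so $\gamma_1=0$ throughout; in particular the very first call $\mathsf{ThresholdGreedy}(0)$ returned $b<b_{min}$, which sets $\gamma_2\gets0$ and makes the stopping test $(1+\tau)\gamma_1\ge\gamma_2$ fire at once, so $\gamma_2=0$ and $\vec{T}^*_2=\mathsf{ThresholdGreedy}(0)$ with $b_2<b_{min}\le2$; Theorem~\ref{thm:boundofthresholdgreedy} with $\gamma=0$ gives $\widetilde{\pi}(\vec{T}^*_2,\mathcal{R}_1)\ge\mathrm{OPT}_s/6$, hence $z=6\widetilde{\pi}(\vec{T}^*_2,\mathcal{R}_1)\ge\mathrm{OPT}_s$. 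If $b_1\ge b_{min}$ and $\vec{T}^*_2\ne\emptyset$, then $b_2<b_{min}\le2$ forces $b_2\in\{0,1\}$, matching Lines~\ref{ln:setupperbound1}--\ref{ln:setupperbound2}; Theorem~\ref{thm:boundofthresholdgreedy} gives $\widetilde{\pi}(\vec{T}^*_2,\mathcal{R}_1)\ge\tfrac{1}{2}(\mathrm{OPT}_s-h\gamma_2)$ when $b_2=0$ and $\ge\tfrac{1}{6}(\mathrm{OPT}_s-h\gamma_2)$ when $b_2=1$, so $z\ge\mathrm{OPT}_s$ in both. If $b_1\ge b_{min}$ and $\vec{T}^*_2=\emptyset$, then, as in Case~2 of the proof of Theorem~\ref{thm:proofofsearchtau2}, $\gamma_2$ is never updated from $(1+\tau)\widetilde{\gamma}_{max}$, so termination forces $\gamma_1\ge\widetilde{\gamma}_{max}$, while the mediant/submodularity argument of that proof gives $\mathrm{OPT}_s\le h\widetilde{\gamma}_{max}$; combining with Theorem~\ref{thm:boundofthresholdgreedy} (using $b_1\ge b_{min}$) and the explicit values of $\lambda$ (so that $1/(h\lambda)\ge1$ and $1/(2h\lambda)\ge1$) yields $\widetilde{\pi}(\vec{T}^*_1,\mathcal{R}_1)\ge\lambda\,\mathrm{OPT}_s$, i.e.\ $z=\widetilde{\pi}(\vec{T}^*_1,\mathcal{R}_1)/\lambda\ge\mathrm{OPT}_s$. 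Taking the $\min$ with $\widetilde{\pi}(\vec{S}^*,\mathcal{R}_1)/\lambda$ preserves the bound in every case.

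The main obstacle is bookkeeping rather than a single deep idea: pinning down that $b_1<b_{min}$ forces $\gamma_2=0$, that $\vec{T}^*_2=\emptyset$ forces $\gamma_1\ge\widetilde{\gamma}_{max}$, and that the hard-coded constants $2$, $6$ and additive $h\gamma_2$ terms in $\mathsf{SeekUB}$ exactly absorb the $\tfrac{1}{2}$ and $\tfrac{1}{6}$ coefficients of Theorem~\ref{thm:boundofthresholdgreedy}. A secondary subtlety worth stating explicitly is that the lemma is only meaningful conditioned on $\vec{O}$ being feasible in the sampling space (Condition~(i)); without that, $\widetilde{\pi}(\vec{O},\mathcal{R}_1)$ need not be bounded by $\mathrm{OPT}_s$, but this is precisely the conditioning already carried through the surrounding proof of Theorem~\ref{thm:highprobratio}.
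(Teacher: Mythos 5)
Your proof is correct and follows essentially the same route as the paper's: the same case split on $(b_1,b_2,\vec{T}^*_2)$, the same invocation of Theorem~\ref{thm:boundofthresholdgreedy} in the sampling space to justify the constants $2$, $6$, and the $h\gamma_2$ terms, and the same $\gamma_{max}$/mediant argument for the $\vec{T}^*_2=\emptyset$ branch. The only (cosmetic) difference is that you bound $z$ against the sampled-problem optimum $\mathrm{OPT}_s$ and then invoke Condition~(i) to relate it to $\widetilde{\pi}(\vec{O},\mathcal{R}_1)$, whereas the paper substitutes $\widetilde{\pi}(\vec{O},\mathcal{R}_1)$ directly into the theorem statements and leaves that feasibility conditioning implicit; making it explicit is a small but welcome clarification.
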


\begin{proof}
Let ${\mathrm{OPT}}'=\widetilde{\pi}(\vec{O},\mathcal{R}_1)$. As $\lambda$ is the approximation ratio of $\mathsf{RM\_with\_Oracle}$, we know that $\widetilde{\pi}(\vec{S}^*,\mathcal{R}_1)/\lambda$ is a trivial upper bound of ${\mathrm{OPT}}'$. Note that $\mathsf{SeekUB}$ returns this trivial upper bound immediately when $h=1$. So we only need to prove the lemma under the case of $h>1$. Consider the following cases when $h> 1$:

(1) $b_1<b_{min}$: In this case, we must have $\vec{T}_2^*=\mathsf{ThresholdGreedy}(\gamma_2)$, $\gamma_2=0$ and $b_2<b_{min}$ according to the $\mathsf{Search}$ algorithm. Therefore, we know that $z=6\widetilde{\pi}(\vec{T}_2^*,\mathcal{R}_1)$ must be an upper bound of ${\mathrm{OPT}}'$ due to Theorem~\ref{thm:boundofthresholdgreedy}.

(2) $b_1\geq b_{min}$ and $\vec{T}_2^*=\emptyset$: In this case, we must have $\vec{T}_1^*=\mathsf{ThresholdGreedy}(\gamma_1)$, $\gamma_2\geq \gamma_{max}$ and $(1+\tau)\gamma_1\geq \gamma_2$ according to the $\mathsf{Search}$ algorithm. Similar to \eqref{eqn:gammamaxisgood}, we have $\gamma_1\geq \mathrm{OPT}'/\big(h(1+\tau)\big)$. Note that $b_{min}=2$ when $h\geq 4$ and $b_{min}=1$ when $2\leq h\leq 3$. Therefore, we can use Theorem~\ref{thm:boundofthresholdgreedy} to get that when $h\geq 4$,
\begin{equation*}
\widetilde{\pi}(\vec{T}_1^*,\mathcal{R}_1)\geq \gamma_1\geq \mathrm{OPT}'/\big(h(1+\tau)\big)\geq \lambda\cdot \mathrm{OPT}',
\end{equation*}
and when $2\leq h\leq 3$,
\begin{equation*}
\widetilde{\pi}(\vec{T}_1^*,\mathcal{R}_1)\geq \gamma_1/2\geq \mathrm{OPT}'/\big(2h(1+\tau)\big)\geq \lambda\cdot \mathrm{OPT}'.
\end{equation*}
Therefore, $z=\widetilde{\pi}(\vec{T}_1^*,\mathcal{R}_1)/\lambda$ is always an upper bound of $\mathrm{OPT}'$ in this case.

(3) $b_1\geq b_{min}$ and $\vec{T}_2^*\neq \emptyset$: In this case, we must have $\vec{T}_2^*=\mathsf{ThresholdGreedy}(\gamma_2)$ and $b_2< b_{min}$ according to the $\mathsf{Search}$ algorithm. Using Theorem~\ref{thm:boundofthresholdgreedy}, it can be easily seen that the values of $z$ set in Line~\ref{ln:setupperbound1} and Line~\ref{ln:setupperbound2} are correct upper bounds of $\mathrm{OPT}'$ under the cases of $b_{2}=0$ and $b_{2}=1$, respectively.

According to the above reasoning, we know that the variable $z$ in the $\mathsf{SeekUB}$ algorithm is always set as a correct upper bound of $\widetilde{\pi}(\vec{O},\mathcal{R}_1)$. Moreover, the upper bound returned by $\mathsf{SeekUB}$ could be tighter than the trivial upper bound $\widetilde{\pi}(\vec{S}^*,\mathcal{R}_1)/\lambda$, as it returns the smaller one between $z$ and the trivial upper bound. So the lemma follows.
\end{proof}}

\hkr{

\begin{proof} [Proof of Theorem~\ref{thm:highprobratio}]
  Suppose that the algorithm terminates after $T$ iterations. Note that we have $|\R_1|\geq \theta_{max}$ in the $(t_{\max}+1)$-th iteration. Let $\mathcal{E}_{sucess}$ denote the event that the algorithm returns a valid solution $\vec{S}^*$ satisfying  $c_i(S_i^*)+\pi_i(S_i^*)\leq (1+\varrho)B_i$ for all $i\in [h]$ and $\pi(\vec{S}^*)\geq (\lambda-\epsilon)\mathrm{OPT}$. Therefore, we can use Theorem~\ref{thm:upperboundofrrsets} to get
	\begin{equation}\label{eqn:last-iteration}
		\Pr \big[\mathcal{E}_{sucess}\land (T=t_{\max}+1)\big]\leq  \delta^\prime.
	\end{equation}
 On the other hand, given any fixed $t\leq t_{\max}$, we have
	\begin{align*}
	&\Pr\Big[(\bigvee_{i\in [h]}\widetilde{\pi}_i(O_i,\mathcal{R})> \pi_i(O_i)+\varrho B_i/2)\land (T=t)\Big]\leq  {\delta^\prime}^{2^{t-1}},\\
	&\Pr\Big[(\bigvee_{i\in [h]}\pi({S}_i^*) > \mathit{UB}(S_i^*))\land(T=t)\Big]\leq \frac{h\delta^\prime}{(h+2)t_{\max}},\\
	&\Pr\Big[(\pi(\vec{O})> \mathit{UB}(\vec{O}))\land (T=t)\Big]\leq  \frac{\delta^\prime}{(h+2)t_{\max}},\\
	&\Pr\Big[(\pi(\vec{S}^*)< \mathit{LB}(\vec{S}^*))\land (T=t)\Big]\leq \frac{\delta^\prime}{(h+2)t_{\max}},
	\end{align*}
where the first inequality is obtained by similar reasoning with the proof of Lemma~\ref{lma:boundoptinsamples} and the union bound, the second and fourth inequalities can be obtained by Lemma~\ref{lma:hugeeqn} and the union bound, and the third inequality can be obtained by Lemma~\ref{lma:hugeeqn} and Lemma~\ref{lma:correctubofopt}.

	Recall that the algorithm terminates only when $\mathit{Feasible}=\mathrm{True}$ and $\LB(\vec{S}^*)/\UB(\vec{O})\geq \lambda-\epsilon$. Therefore, one can verify that $\mathcal{E}_{sucess}$ does not happen for $T\leq t_{\max}$ only if at least one of the four events considered above occurs in certain $t\in \{1,2,\dotsc,t_{\max}\}$. By union bound, the probability of such a scenario is at most
	\begin{equation}\label{eqn:ealier-stop}
	\begin{split}
	&\Pr \big[(\pi(\vec{S}^*)< (\lambda-\epsilon)\OPT)\land (T\leq t_{\max})\big]\\
	&\leq\sum_{t=1}^{t_{\max}}\Big( {\delta^\prime}^{2^{t-1}} + \frac{h\delta^\prime+\delta^\prime+\delta^\prime}{(h+2)t_{\max}}\Big)\leq \frac{\delta^\prime}{1-\delta^\prime}+\delta^\prime\leq 3\delta^\prime,
	\end{split}
	\end{equation}
	where the last inequality is due to $\delta^\prime=\delta/4\leq 1/4$. Therefore, by \eqref{eqn:last-iteration} and \eqref{eqn:ealier-stop}, no matter when the algorithm stops, we always have $\Pr [\mathcal{E}_{sucess}]\geq 1-4\delta^\prime=1-\delta$. So the theorem follows.
\end{proof}}

\revise{

\section{Time Complexity Analysis} \label{sec:timecompana}
\toblue{In this section, we provide the theoretical time complexity of $\mathsf{RM\_without\_Oracle}$ (RMA), as well as, of TI-CARM and TI-CSRM algorithms of \cite{Aslay2017}, which were left as an open problem in \cite{Aslay2017}.}
%We note that analyzing the theoretical time complexities of the TI-CARM and TI-CSRM algorithms is extremely hard due to the involved operations in these algorithms, so the work in \cite{Aslay2017} left it as an open problem.

We start by providing the theoretical time complexity of RMA in the following theorem.

\begin{theorem}
$\mathsf{RM\_without\_Oracle}$ has an expected time complexity of $O\big(\frac{m\bar{\pi}(\ln \frac{1}{\delta}+n\ln h)}{\epsilon^2 B_{min}}\big)$, where $\bar{\pi}=\sum_{i\in[h]}\E[\pi_i(\{v^\ast\})]$ and $v^\ast$ denotes a random node selected from $V$ with probability proportional to its in-degree.
\end{theorem}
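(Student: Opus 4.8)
The plan is to charge the running time to two sources --- the generation of Reverse Reachable sets, and the combinatorial work of $\mathsf{RM\_with\_Oracle}$, $\mathsf{SeekUB}$ and the $\mathrm{LB}/\mathrm{UB}$ feasibility tests run on them --- and to show that (i) the expected cost of producing one RR-set under the uniform-sampling scheme of Section~\ref{sec:newmethodforrrsets} is $O\!\left(1+\frac{m\bar\pi}{n\Gamma}\right)$, (ii) the algorithm never generates more than $O(\theta_{max})$ RR-sets in each of $\R_1,\R_2$, and (iii) the combinatorial work is of the same order up to logarithmic factors. Multiplying (i) by (ii) and simplifying $\theta_{max}$ then yields the stated bound.

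For (i): generating one RR-set consists of drawing an advertiser $i$ with probability $\cpe(i)/\Gamma$, drawing a root $v\in V$ uniformly, and running a reverse BFS from $v$ with edge-survival probabilities $p^i_{(u,v)}$. The BFS time is $\Theta\!\left(1+\sum_{u\in R} d_{\mathrm{in}}(u)\right)$, since every node that enters $R$ has all its incoming edges inspected exactly once. Taking expectations, $\E\!\left[\sum_{u\in V} d_{\mathrm{in}}(u)\,\mathbf{1}[u\in R]\right]=\sum_{u} d_{\mathrm{in}}(u)\Pr[u\in R]$; conditioning on the drawn advertiser $i$ and using that reverse-reachability from a uniform root equals forward-reachability to a uniform target, $\Pr[u\in R\mid i]=\sigma_i(\{u\})/n$. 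Hence the expected BFS cost equals $\frac{1}{n}\sum_u d_{\mathrm{in}}(u)\sum_i\frac{\cpe(i)}{\Gamma}\sigma_i(\{u\})=\frac{1}{n\Gamma}\sum_{i\in[h]}\sum_u d_{\mathrm{in}}(u)\pi_i(\{u\})$, and since $\sum_u d_{\mathrm{in}}(u)\pi_i(\{u\})=m\cdot\E[\pi_i(\{v^\ast\})]$ by the definition of $v^\ast$ (drawn with probability $d_{\mathrm{in}}(u)/m$), this is $\frac{m}{n\Gamma}\sum_i\E[\pi_i(\{v^\ast\})]=\frac{m\bar\pi}{n\Gamma}$. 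This is the natural generalization of the classical identity ``expected RR-set width $=\frac{m}{n}\E[\sigma(\{v^\ast\})]$'' to the advertiser-weighted sampler, and it is the technical heart of the argument.

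For (ii) and (iii): Algorithm~\ref{alg:rmwithoutoracle} starts with $|\R_1|=|\R_2|=\theta_0$ and doubles both each round, and by the stopping rule in Line~\ref{ln:judgeappratio} it terminates at the latest once $|\R_1|\ge\theta_{max}$; summing the geometric series, each of $\R_1,\R_2$ ends with $O(\theta_{max})$ RR-sets. Plugging in $\theta_{max}=\max\{\hat\theta_{max},\bar\theta_{max}\}$ from Theorem~\ref{thm:upperboundofrrsets} (with $\delta$ replaced by $\delta/4$) and simplifying with $(a+b)^2\le 2a^2+2b^2$, $\lambda\le\frac13$, $\mu_i\le n$ hence $\mu_i\ln\frac{en}{\mu_i}=O(n)$, and treating $\tau,\varrho$ as fixed constants, bounds $\theta_{max}$ by $O\!\left(\frac{n\Gamma}{\epsilon^2 B_{min}}\big(\ln\frac1\delta+n\ln h\big)\right)$. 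For the combinatorial overhead, each round with $\theta$ current RR-sets runs $\mathsf{Search}$, which calls $\mathsf{ThresholdGreedy}/\mathsf{Fill}$ an $O(\log\frac{h\gamma_{max}}{\min_i\cpe(i)})$ number of times; with the standard RIS bookkeeping (lazy greedy plus a per-RR-set ``covered'' flag) one greedy pass over the $O(nh)$ candidates does $O\!\left(nh\log(nh)+\sum_{R}|R|\right)$ work, and $\sum_R|R|=O\!\left(\theta(1+\frac{m\bar\pi}{n\Gamma})\right)$ in expectation since $|R|\le 1+\sum_{u\in R}d_{\mathrm{in}}(u)$. Summing over the $O(\log\frac{\theta_{max}}{\theta_0})$ doubling rounds is again geometric and dominated by the last round, so the overhead is within a logarithmic factor of, hence absorbed into, the RR-set generation cost.

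Combining, the expected total is $O(\theta_{max})\cdot O\!\left(1+\frac{m\bar\pi}{n\Gamma}\right)=O\!\left(\frac{n\Gamma}{\epsilon^2 B_{min}}(\ln\frac1\delta+n\ln h)\cdot\frac{m\bar\pi}{n\Gamma}\right)=O\!\left(\frac{m\bar\pi(\ln\frac1\delta+n\ln h)}{\epsilon^2 B_{min}}\right)$, as claimed. The main obstacle is twofold: establishing the advertiser-weighted width identity in step (i) cleanly (the conditioning on the sampled advertiser and the change of measure to the in-degree-weighted node $v^\ast$ must be handled carefully), and checking that the closed-form $\theta_{max}$ of Theorem~\ref{thm:upperboundofrrsets} really collapses to the stated $O(\frac{n\Gamma}{\epsilon^2 B_{min}}(\ln\frac1\delta+n\ln h))$ form --- in particular that the $\hat\theta_{max}$ term, whose leading factor is $\frac{n}{\epsilon^2}\sum_i\mu_i\ln\frac{en}{\mu_i}$ rather than one carrying $\Gamma/B_{min}$, is dominated once the feasibility bounds relating $B_i$, $\cpe(i)$ and $\mu_i$ are invoked.
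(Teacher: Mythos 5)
Your step (i) is correct and matches the paper's (the paper simply cites the known triggering-model identity that the expected cost of one RR-set for advertiser $i$ is $O(\frac{m}{n}\E[\sigma_i(\{v^\ast\})])$ and extends it to the $\cpe$-weighted sampler to get $O(\frac{m\bar\pi}{n\Gamma})$; you derive it from scratch, which is fine), and invoking Wald's equation, as the paper does, is the right way to justify multiplying the two expectations.

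The gap is in step (ii), and it is exactly the one you flag at the end without resolving. You bound the number of RR-sets by the worst case $O(\theta_{max})$ and then assert $\theta_{max}=O\big(\frac{n\Gamma}{\epsilon^2 B_{min}}(\ln\frac1\delta+n\ln h)\big)$. This fails for the $\hat\theta_{max}$ component: expanding the square gives $\hat\theta_{max}=\Theta\big(\frac{n\lambda}{\epsilon^2}(\ln\frac1\delta+\sum_i\mu_i\ln\frac{en}{\mu_i})\big)$, so absorbing it into the claimed form requires $\lambda=O(\Gamma/B_{min})$, i.e.\ $B_{min}=O(h\,\Gamma)$. There is no feasibility bound forcing this; the only relation available goes the other way ($\mu_i\le (1+\varrho)B_i/\cpe(i)$ upper-bounds $\mu_i$), and in the regime the paper itself experiments in (budgets in the thousands, $\cpe(i)\in[1,2]$, $h=10$) the inequality is badly violated, so your simplification of $\theta_{max}$ is simply false in general. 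The paper takes a different route precisely to avoid this: it bounds the \emph{expected} number of RR-sets at termination via an early-stopping (OPIM-C-style) argument, obtaining $O\big(\frac{n\Gamma(\ln\frac1\delta+\sum_i\mu_i\ln\frac{en}{\mu_i})}{\epsilon^{2}\,\mathrm{OPT}}+\frac{n\Gamma(\ln\frac{h}{\delta}+\mu\ln\frac{en}{\mu})}{\varrho^2 B_{min}}\big)$ --- note the first term carries $\Gamma/\mathrm{OPT}$ rather than $\lambda$, because the stopping rule in Line~\ref{ln:judgeappratio} triggers once the samples suffice to certify the ratio, not once $|\mathcal{R}_1|$ reaches $\hat\theta_{max}$ --- and then invokes the additional assumptions $\epsilon=\Theta(\varrho)$ and $\mathrm{OPT}\ge B_{min}$ to collapse both terms to $O\big(\frac{n\Gamma(\ln\frac1\delta+n\ln h)}{\epsilon^2 B_{min}}\big)$. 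Without replacing your worst-case $\theta_{max}$ bound by this expected-stopping-time argument (or by some extra hypothesis tying $B_{min}$ to $\Gamma$), your derivation does not establish the stated complexity.
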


\begin{proof}%[Time Complexity of Algorithm~\ref{alg:rmwithoutoracle}]
    The time complexity of Algorithm~\ref{alg:rmwithoutoracle} mainly determined by (i) the time for generating an RR set, and (ii) the number of RR sets generated. Under the triggering model \cite{Kempe2003}, which generalizes both IC and LT models, the expected time complexity of generating an RR set for advertiser $i$ is $O(\frac{m}{n}\E[\sigma_i(\{v^\ast\})])$, where the expectation is over the randomness of $v^\ast$ being randomly chosen from $V$ with probability proportional to its in-degree. Note that  $\bar{\pi}=\sum_{i\in[h]}\E[\pi_i(\{v^\ast\})]=\E[\sum_{i\in[h]}\pi_i(\{v^\ast\})]$, where $\sum_{i\in[h]}\pi_i(\{v^\ast\})$ represents $\{v^\ast\}$'s total revenue over all $h$ advertisements and the expectation is again over the randomness of $v^\ast$. Then, the expected time complexity for generating an RR set of our uniform sampling method is $O(\frac{m\bar{\pi}}{n\Gamma})$.
	
	On the other hand, it is easy to get that
	\begin{equation*}
		\bar{\theta}_{max}\leq \frac{8n^2\Gamma(2+\varrho/3)}{\varrho^2 B_{min}}\ln \frac{4h}{\delta^\prime}\leq 2n\theta_0,
	\end{equation*}
	and similarly
	\begin{align*}
	\hat{\theta}_{max}
	&\leq \frac{8n}{\epsilon^{2}}\left(\ln \frac{4}{\delta^\prime}+\sum\nolimits_{i\in[h]}{\mu_i}\ln\frac{\mathrm{e}n}{\mu_i}\right)\\
	&\leq \frac{8nh}{\epsilon^{2}}\left(\ln \frac{4}{\delta^\prime}+{\mu}\ln\frac{\mathrm{e}n}{\mu}\right)\\
	%	&\leq \frac{8n\Gamma}{\epsilon^{2}B_{min}}\left(\ln \frac{4}{\delta}+{\mu}\ln\frac{\mathrm{e}n}{\mu}\right)\\
%	&\leq \bar{\theta}_{max} \cdot \frac{\varrho^2hB_{min} }{\epsilon^{2}\Gamma},\\
	&\leq n\theta_0\cdot \frac{\varrho^2hB_{min} }{\epsilon^{2}\Gamma}.
	\end{align*}
	Thus, $\ln t_{\max}\in O( \frac{1}{\delta}+\sum\nolimits_{i\in[h]}{\mu_i}\ln\frac{\mathrm{e}n}{\mu_i})$ and $\ln t_{\max}\in O(\ln \frac{h}{\delta}+{\mu}\ln\frac{\mathrm{e}n}{{\mu}})$. Hence, one can verify that the expected number of RR sets generated is
	\begin{equation*}
	O\Big(\frac{n\Gamma(\ln \frac{1}{\delta}+\sum\nolimits_{i\in[h]}{\mu_i}\ln\frac{\mathrm{e}n}{\mu_i})}{\epsilon^{2}\OPT}+\frac{n\Gamma(\ln \frac{h}{\delta}+{\mu}\ln\frac{\mathrm{e}n}{{\mu}})}{\varrho^2 B_{min}}\Big).
	\end{equation*}
	According to Wald’s equation, the expected time complexity is
	\begin{equation*}
	O\Big(\frac{m\bar{\pi}(\ln \frac{1}{\delta}+\sum\nolimits_{i\in[h]}{\mu_i}\ln\frac{\mathrm{e}n}{\mu_i})}{\epsilon^{2}\OPT}+\frac{m\bar{\pi}(\ln \frac{h}{\delta}+{\mu}\ln\frac{\mathrm{e}n}{{\mu}})}{\varrho^2 B_{min}}\Big).
	\end{equation*}
	Note that $\sum\nolimits_{i\in[h]}{\mu_i}\ln\frac{\mathrm{e}n}{\mu_i}$ can be replaced by $n\ln h$, since each node can only be selected by one advertiser. Then, the expected time complexity is bounded by
	\begin{equation*}
	O\Big((m\bar{\pi}(\ln \frac{1}{\delta}+n\ln h)\cdot\big(\frac{1}{\epsilon^{2}\OPT}+\frac{1}{\varrho^2 B_{min}}\big)\Big).
	\end{equation*}
	Usually, $\epsilon=\Theta(\varrho)$ and $\OPT\geq B_{min}$, for which the expected time complexity is $O\big(\frac{m\bar{\pi}(\ln \frac{1}{\delta}+n\ln h)}{\epsilon^2 B_{min}}\big)$.
\end{proof}

\toblue{Next, we provide the time complexity analysis for the TI-CARM and TI-CSRM algorithms proposed in  \cite{Aslay2017}.}

\toblue{
\begin{theorem}
TI-CARM and TI-CSRM algorithms of \cite{Aslay2017} both have a time complexity of
$O\big(\frac{n (1 + \ell)(m + n) \ln n}{\epsilon^2})$ where $\ell = \frac{\ln \frac{1}{\delta}}{\ln n}$.
\end{theorem}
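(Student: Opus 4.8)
The plan is to bound the two dominant costs of TI-CARM and TI-CSRM — generating the RR-sets and running the (cost-agnostic, resp.\ cost-sensitive) greedy on the sampled revenue function — and to observe that these costs are identical for the two algorithms, which differ only in a selection rule whose running time is dominated. I would start by recalling the TIM-style structure~\cite{TangXS2014} inherited by both: a lower-bound-estimation phase first produces an estimate $\LB$ that is, with high probability, within a constant factor of the relevant optimum; then a number $\theta$ of RR-sets (generated per advertiser graph) is fixed so that the $n\Gamma$-scaled empirical estimator concentrates around $\pi(\cdot)$ uniformly over every seed allocation the greedy run could output.

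Next I would bound $\theta$. The union bound underlying this uniform concentration ranges over all candidate allocations; since in the RM problem each node is assigned to at most one of the $h$ advertisers or to none, there are at most $(h+1)^n$ such allocations, so the logarithmic ``complexity'' term is $\ln(h+1)^n = n\ln(h+1) = O(n\ln n)$ whenever $h\le n$ (budget feasibility only shrinks this set, so the crude count is safe). Combining this with the failure term $\ell\ln n = \ln(1/\delta)$ and the leading factor $n/\epsilon^2$ coming from the $n\Gamma$ scaling in the Chernoff-type bounds gives
\[ \theta \;=\; O\!\Big(\tfrac{n\,(n\ln n + \ell\ln n)}{\epsilon^{2}\,\LB}\Big) \;=\; O\!\Big(\tfrac{n\,(n+\ell)\ln n}{\epsilon^{2}\,\OPT}\Big), \]
using $\LB=\Theta(\OPT)$.

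Then I would bound the per-RR-set cost. By the standard RIS analysis, the expected number of edge examinations (plus bookkeeping) to build and index one random RR-set for advertiser $i$ is $O\big(\tfrac{m+n}{n}\,\bar\sigma_i\big)$ with $\bar\sigma_i$ an average singleton spread, and the $\cpe(i)$-weighted batch sizes telescope so that $\sum_{i}\cpe(i)\bar\sigma_i$ is governed by $\OPT$ and $\Gamma$; since moreover $\bar\sigma = O(\LB)$ the optima cancel, and multiplying through yields expected generation time
\[ \theta\cdot O\!\Big(\tfrac{(m+n)\OPT}{n}\Big)\;=\;O\!\Big(\tfrac{(n+\ell)(m+n)\ln n}{\epsilon^{2}}\Big)\;\subseteq\;O\!\Big(\tfrac{n(1+\ell)(m+n)\ln n}{\epsilon^{2}}\Big), \]
the last containment because $n+\ell\le n(1+\ell)$. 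The greedy selection phase, implemented with lazy evaluation and an element-to-RR-set inverted index exactly as in TIM, touches each (element, RR-set) incidence $O(\log n)$ times in total, so its cost is $O(\theta\cdot\bar w\cdot\log n)$ with $\bar w$ the average RR-set width — of the same order and hence not increasing the bound; and the lower-bound-estimation phase generates only $O(\theta)$ RR-sets in the worst case, so it is absorbed. Because this accounting never depends on which greedy rule is used, the same asymptotic bound applies to TI-CARM and TI-CSRM alike.

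The main obstacle will be making the interface with the TIM machinery precise for the multi-advertiser RM setting: (i) verifying that Aslay et al.'s lower-bound estimate is within a constant factor of the relevant optimum with high probability, so that the $\OPT$ factors truly cancel in $\theta\cdot(\text{per-set cost})$; (ii) checking that the per-advertiser RR-set batches and their generation costs telescope into a single $\OPT$ (and $\Gamma$) rather than leaving a stray factor of $h$, which works precisely because the union-bound term already pays $O(n\ln(h+1))$ and advertiser $i$'s batch cost is proportional to its contribution to the optimum; and (iii) confirming the $\ln(\#\text{solutions})=O(n\ln n)$ bound suffices. Everything else is routine TIM-style bookkeeping, and I would relegate the explicit constants to the appendix.
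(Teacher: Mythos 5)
There is a genuine gap: your argument analyzes a different algorithm from the one the theorem is about. TI-CARM and TI-CSRM do not perform a single-shot sampling whose sample size is set by a union bound over all $(h+1)^n$ joint allocations. As implemented in Aslay et al., they invoke TIM as a per-advertiser subroutine with a \emph{latent seed-set size} $s_i$ that starts at $1$ and is revised upward every time the current solution $S_i$ reaches size $s_i$; at each such revision the KPT-estimation procedure is re-run and the sample size is re-derived for the new $s_i$. The paper's bound is driven precisely by this iterative structure: in the worst case advertiser $i$ triggers $k_i$ KPT-estimations (with $k_i$ the maximum feasible seed count under budget $B_i$), each invocation together with the subsequent selection costing $O\big(\tfrac{(1+\ell)(m+n)\ln n}{\epsilon^2}\big)$, and the factor $n$ in the final bound comes from $\sum_{i\in[h]} k_i < n$. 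Your factor of $n$ instead comes from the term $\ln\big((h+1)^n\big)=O(n\ln n)$ in a uniform-concentration union bound that the actual algorithms never take (TIM's union bound is over sets of size at most $s_i$, contributing $s_i\ln n$, not $n\ln n$). The step ``the lower-bound-estimation phase generates only $O(\theta)$ RR-sets in the worst case, so it is absorbed'' is exactly where the dominant cost of the real algorithms lives, so dismissing it leaves the claimed complexity unestablished.

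That the two derivations land on the same closed-form expression is a coincidence of bookkeeping, not a validation: a proof of a statement of the form ``algorithm $A$ runs in time $T$'' has to track the operations $A$ actually performs. To repair the argument you would need to (i) recall TIM's complexity $O\big(\tfrac{(k+\ell)(m+n)\ln n}{\epsilon^2}\big)$ with the $\ell(m+n)\ln n$ portion attributable to one KPT-estimation, (ii) count the number of KPT-estimation restarts per advertiser as $k_i$, and (iii) sum over advertisers using $\sum_i k_i < n$, which is the route the paper takes.
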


\begin{proof}
First, we remind that the TI-CARM and TI-CSRM algorithms of \cite{Aslay2017} utilize TIM~\cite{TangXS2014} algorithm as a subroutine for each advertiser, coupled with a latent seed set size estimation procedure, as TIM~\cite{TangXS2014} requires seed set size as input. Thus, both algorithms start with a latent seed set size $s_i = 1$ for each advertiser $i$, and iteratively revise the latent seed set size and expected spread  estimates whenever the size of the current solution $S_i$ reaches $s_i$. Specifically,
whenever $|S_i| = s_i$, the algorithms re-estimate $s_i$ as an upper bound on the possible final size of $S_i$ and re-derive the sample size, via the KPT-estimation procedure~\cite{TangXS2014}, that is sufficiently large for estimating the expected spread of any $s_i$ seeds.

Recall that TIM's time complexity is given by $O\big(\frac{(k + \ell)(m + n) \ln n}{\epsilon^2})$ with the  $(\ell (m + n) \ln n)$ term accounting for the running time of the KPT-estimation procedure and the rest accounting for the selection of $k$ seeds from the RR-sets sample.

In the worst-case, TI-CARM and TI-CSRM would perform the KPT-estimation procedure at each $s_i = 1, 2, .., k_i$ where $$k_i = \left\lfloor\frac{B_i}{\max_{u \in V} c_i(u) + \cpe(i) \, \max_{u \in V} \sigma_i(\{u\})} \right\rfloor.$$

Then, the worst-case time complexity, by factoring in $k_i$ KPT-estimation procedures, followed by selection of $|S_i| << k_i$ seeds from a sample size devised for the input $k_i$ for each advertiser $i$, is given by $O\big(\frac{\sum_{i \in [h]} k_i \cdot (1 + \ell)(m + n) \ln n}{\epsilon^2})$. By using the fact that $\sum_{i \in [h]} k_i < n$, we conclude that TI-CARM and TI-CSRM have the time complexity of $O\big(\frac{n (1 + \ell)(m + n) \ln n}{\epsilon^2})$.
\end{proof}
}

\toblue{\spara{Discussion} First, we would like to emphasize that the complexity results we have provided for RMA and TI-CARM / TI-CSRM are asymptotic worst-case results, thus, a direct comparison of such worst-case results would not necessarily allow us to conclude whether an algorithm is ``always" more efficient than another. A healthier comparison to draw such conclusion requires to find a lower bound on the complexity (i.e., comparison over the $\Omega(\cdot)$ function). While such analysis is interesting, it is beyond the scope of this paper.

Being mindful of this, still we can compare the asymptotic worst-case results: we see that the running time of RMA is dominated by the factor $mn$ while the running time of TI-CARM and TI-CSRM are dominated by the factor $n(m + n)$, translating to the superiority of the RMA algorithm in terms of asymptotic worst-case running time.
}

\section{Additional Experimental Results}

\subsection{Impact of Parameters $\tau$ and $\varrho$} \label{sec:ontauvarrho}
In this section, we study how the performance of implemented algorithms can be affected by the parameters $\tau$ and $\varrho$.

We first study the impact of $\tau$ in Fig.~\ref{fig:tau} (for revenue) and Table~\ref{table:tau} (for running time) under the linear cost model with $\alpha=0.1$, where all the other settings are the same with those in Fig.~\ref{fig:totalrevenue}. \toblue{It can be seen that both the revenue and running time of RMA generally show a slight decreasing trend when $\tau$ increases,} which corroborates the intuition we provided in Sec.~\ref{sec:puttogether} that $\tau$ is a parameter controlling the tradeoff between efficiency and accuracy in RMA. However, Fig.~\ref{fig:tau} and Table~\ref{table:tau} also show that the superiority of RMA maintains when $\tau$ changes, which demonstrates that it is reasonable to use $\tau=0.1$ as the default setting in our experiments.
\begin{figure}[t]
    \centering
    \includegraphics[width=0.485\textwidth]{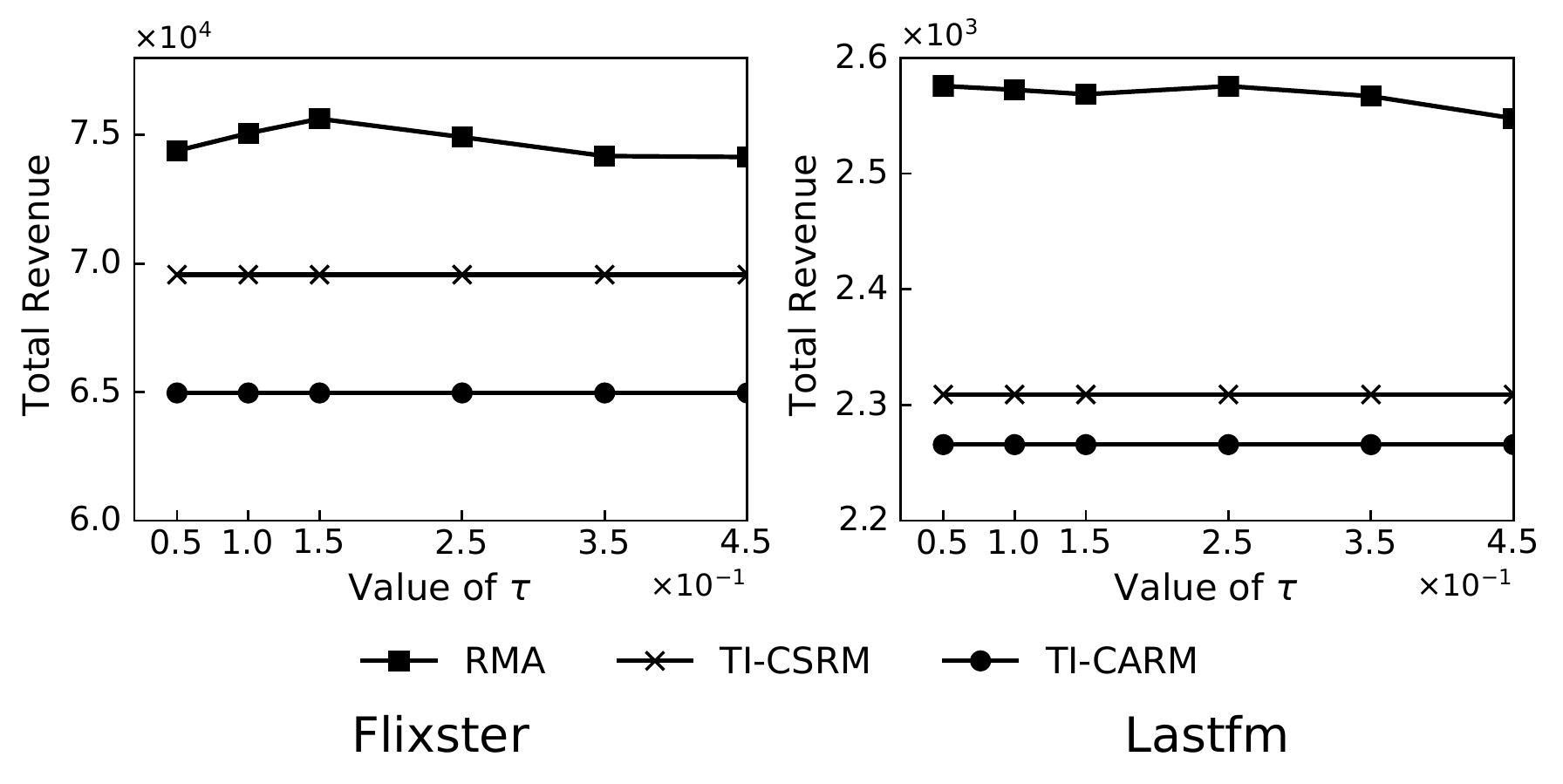}
    \vspace{-0.8cm}
    \caption{Total revenue as a function of $\tau$, on $\texttt{\texttt{Flixster}}$(left) and $\texttt{Lastfm}$(right), for linear incentive model.}
    \label{fig:tau}
\end{figure}

\begin{table}
    \centering
    \caption{Running time (seconds) when $\tau$ changes}
     \setlength{\tabcolsep}{1mm}{
    \begin{tabular}{|c|c|c|c|c|c|c|c|c|}
    \hline
    \multirow{2}*{\texttt{Lastfm}}&  \multicolumn{6}{|c|}{Running time (second)}  \\
    \cline{2-7}
     &$\tau=0.05$ &0.10&0.15& 0.25&0.35&0.45\\
     \hline
    RMA&27&26&25&24&23&23 \\ \hline
    TI-CARM &108&108&108&108&108&108\\ \hline
    TI-CSRM&130&130&130&130&130&130\\ \hline
    \hline
    \multirow{2}*{\texttt{Flixster}}&  \multicolumn{6}{|c|}{Running time (second)}  \\
    \cline{2-7}
     &$\tau=0.05$ &0.10&0.15& 0.25&0.35&0.45\\
     \hline
    RMA&710&589&603&524&533&540 \\ \hline
    TI-CARM &3803&3803&3803&3803&3803&3803\\ \hline
    TI-CSRM&16255&16255&16255&16255&16255&16255\\ \hline
    \end{tabular}}
    \label{table:tau}
\end{table}

Next, we study the impact of $\varrho$ on the revenue performance of RMA in Fig.~\ref{fig:varrho} under the linear cost model with $\alpha=0.1$, where all the other settings are the same with those in Fig.~\ref{fig:totalrevenue}. It can be seen that the revenue of RMA decreases when $\varrho$ increases, which is not surprising as the budgets used by RMA is only $(1+\varrho)^{-1}$ fraction of those for TI-CSRM/TI-CARM (as explained in Sec.~\ref{sec:expsetting}). However, according to the comparison method described in Sec.~\ref{sec:expsetting}, $\varrho$ reflects the ``budget overshoot'' of RMA and can be set to any positive number without harming the fairness of our experiment, because the actual budget used by RMA is guaranteed to be no more than that used by TI-CSRM/TI-CARM. As such, it is natural and reasonable to set $\varrho$ as a small number (e.g. $\varrho=0.1$) in our experiments to avoid a large budget overshoot.

\begin{figure}[t]
    \centering
    \includegraphics[width=0.485\textwidth]{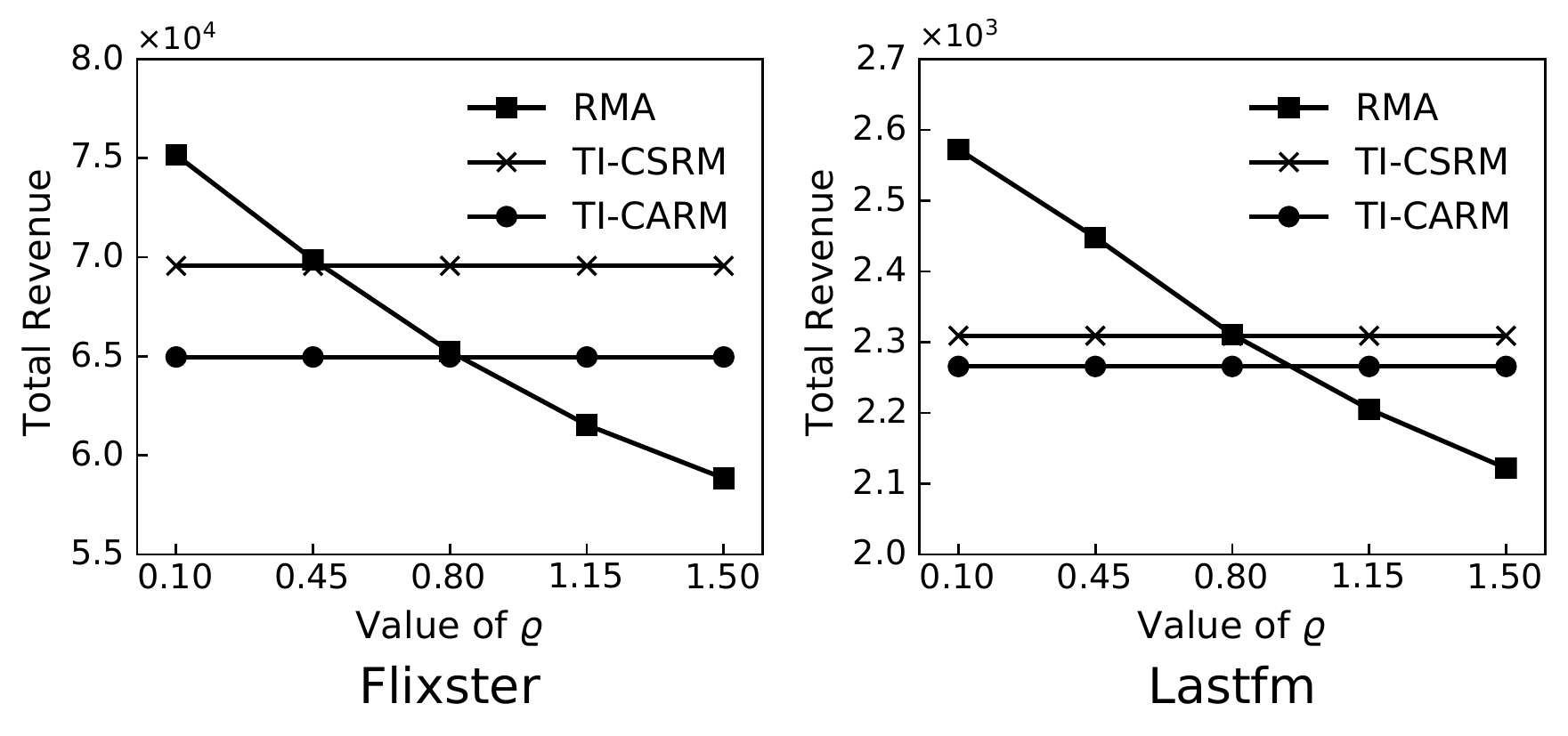}
    \vspace{-0.8cm}
    \caption{Total revenue as a function of $\varrho$, on $\texttt{\texttt{Flixster}}$(left) and $\texttt{Lastfm}$(right), for linear incentive model.}
    \label{fig:varrho}
\end{figure}

\subsection{Experiments on Further Acceleration}\label{sec:subsimexp}

Recently, the work in \cite{Guo_IM_2020} proposed a nice algorithm for accelerating the traditional influence maximization problem, where the OPIM-C framework in~\cite{Tang2018} is also leveraged but a novel SUBSIM algorithm is used to accelerate the generation of a single RR-set. We note that that the SUBSIM can also be plugged into the RMA, TI-CARM, and TI-CSRM algorithms for acceleration, so we perform experiments in Fig.~\ref{fig:subsim} under the linear cost model by calling SUBSIM in RMA/TI-CARM/TI-CSRM for generating a single RR-set, and all the other parameter settings are the same with those in Fig.~\ref{fig:totalrevenue}. Meanwhile, we also list the running time of the compared algorithms in Table~\ref{table:subsim} accordingly. These experimental results show that: (1) the revenues achieved by all compared algorithms are almost identical to those shown in Fig.~\ref{fig:subsim}, as the RR-sets generates by SUBSIM are essentially the same as those generated before, and (2) all the algorithms are speeded up by calling SUBSIM, but RMA is still significantly faster than TI-CARM/TI-CSRM. In summary, these experimental results demonstrate that the superiority of RMA still remains when SUBSIM is used for acceleration.

    \begin{figure}[t]
        \centering
        \includegraphics[width=0.485\textwidth]{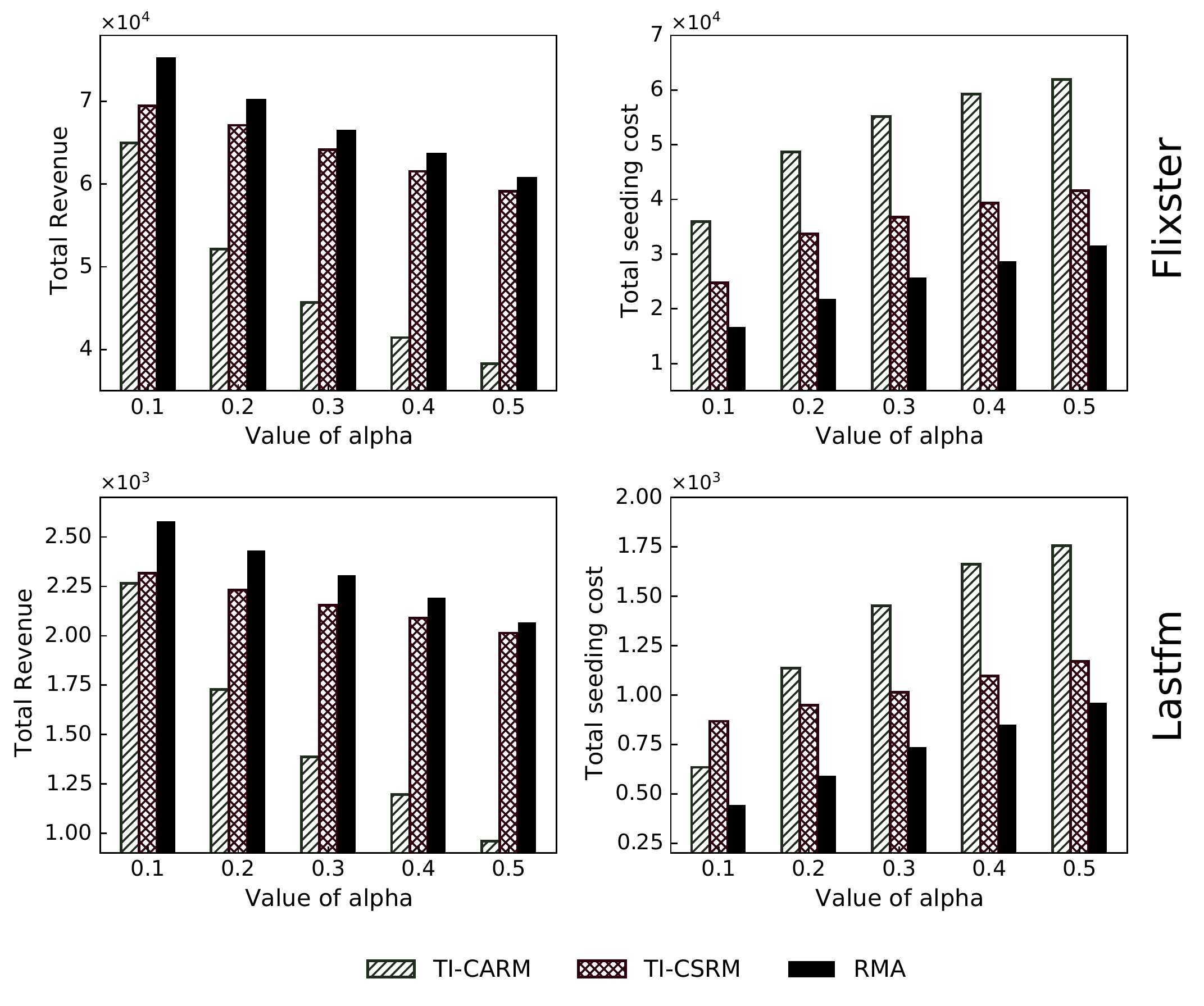}
        \vspace{-0.6cm}
        \caption{Total revenue (left) and total seeding cost (right) as a function of $\alpha$, on $\texttt{Fllxster}$ (top) and $\texttt{Lastfm}$ (bottom), for linear incentive model, by using SUBSIM.}
        \label{fig:subsim}
    \end{figure}

%\begin{table}[t]
%    \centering
%    \caption{\small{Running time (seconds) by using SUBSIM }}
%     \setlength{\tabcolsep}{2.30mm}{
%    \begin{tabular}{|c|c|c|c|c|c|}
%    \hline
%    \texttt{Flixster} &$\alpha$=0.1&0.2&0.3&0.4&0.5 \\ \hline
%    RMA&41&39&43&44&43\\ \hline
%    TI-CARM&2653&1293&954&760&657\\ \hline
%    TI-CSRM&12082&14985&15911&18676&19666\\ \hline
%    \hline
%    \texttt{Lastfm}&$\alpha$=0.1&0.2&0.3&0.4&0.5\\
%    \hline
%    RMA&22&19&20&21&17\\ \hline
%    TI-CARM&84&75&65&59&54\\ \hline
%    TI-CSRM&88&101&108&117&123\\ \hline
%    \end{tabular}
%    }
%    \label{table:subsim}
%\end{table}

\begin{table}[t]
    \centering
    \caption{\small{Running time (seconds) by using SUBSIM }}
     \setlength{\tabcolsep}{2.30mm}{
    \begin{tabular}{|c|c|c|c|c|c|}
    \hline
    \texttt{Flixster} &$\alpha$=0.1&0.2&0.3&0.4&0.5 \\ \hline
    RMA&583&568&568&548&532\\ \hline
    TI-CARM&2653&1293&954&760&657\\ \hline
    TI-CSRM&12082&14985&15911&18676&19666\\ \hline
    \hline
    \texttt{Lastfm}&$\alpha$=0.1&0.2&0.3&0.4&0.5\\
    \hline
    RMA&22&19&20&21&17\\ \hline
    TI-CARM&84&75&65&59&54\\ \hline
    TI-CSRM&88&101&108&117&123\\ \hline
    \end{tabular}
    }
    \label{table:subsim}
\end{table}

\clearpage
% ----
}

\end{appendices}

%\begin{table*}[t]
%  \caption{A double column table.}
%  \label{tab:commands}
%  \begin{tabular}{ccl}
%    \toprule
%    A Wide Command Column & A Random Number & Comments\\
%    \midrule
%    \verb|\tabular| & 100& The content of a table \\
%    \verb|\table|  & 300 & For floating tables within a single column\\
%    \verb|\table*| & 400 & For wider floating tables that span two columns\\
%    \bottomrule
%  \end{tabular}
%\end{table*}

%\begin{table}[hb]% h asks to places the floating element [h]ere.
%  \caption{Frequency of Special Characters}
%  \label{tab:freq}
%  \begin{tabular}{ccl}
%    \toprule
%    Non-English or Math & Frequency & Comments\\
%    \midrule
%    \O & 1 in 1000& For Swedish names\\
%    $\pi$ & 1 in 5 & Common in math\\
%    \$ & 4 in 5 & Used in business\\
%    $\Psi^2_1$ & 1 in 40\,000 & Unexplained usage\\
%  \bottomrule
%\end{tabular}
%\end{table}

\end{document}